\newtheorem{theorem}{Theorem}[section]
\newtheorem{lemma}[theorem]{Lemma}
\newtheorem{proposition}[theorem]{Proposition}
\newenvironment{proof}[1][Proof]{\begin{trivlist}
\item[\hskip \labelsep {\bfseries #1}]}{\end{trivlist}}
\newenvironment{definition}[1][Definition]{\begin{trivlist}
\item[\hskip \labelsep {\bfseries #1}]}{\end{trivlist}}
\newcommand{\qed}{\nobreak \ifvmode \relax \else
      \ifdim\lastskip<1.5em \hskip-\lastskip
      \hskip1.5em plus0em minus0.5em \fi \nobreak
      \vrule height0.75em width0.5em depth0.25em\fi}
\title{Stochastic Recursive Inclusions with Non-Additive Iterate-Dependent Markov Noise}
\author{Vinayaka G. Yaji and Shalabh Bhatnagar}
\begin{document}
\maketitle
\begin{abstract}
 In this paper we study the asymptotic behavior of stochastic approximation schemes with set-valued drift function and non-additive 
 iterate-dependent Markov noise. We show that a linearly interpolated trajectory of such a recursion is an asymptotic pseudotrajectory for the flow of a 
 limiting differential inclusion obtained by averaging the set-valued drift function of the recursion w.r.t. the stationary distributions of 
 the Markov noise. The limit set theorem in \cite{benaim1} is then used to characterize the limit sets of the recursion in terms of the dynamics of 
 the limiting differential inclusion. We then state two variants of the Markov noise assumption under which the analysis of the recursion is 
 similar to the one presented in this paper. Scenarios where our recursion naturally appears are presented as applications. These include 
 controlled stochastic approximation, subgradient descent, approximate drift problem and analysis of discontinuous dynamics all in the presence 
 of non-additive iterate-dependent Markov noise.
\end{abstract}

\section{Introduction}
Consider the standard stochastic approximation scheme given by, 
\begin{equation}\label{recstd}
 X_{n+1}=X_n+a(n)\left(h(X_n)+M_{n+1}\right)
\end{equation}
where $\left\{X_n\right\}_{n\geq0}$ is a sequence of $\mathbb{R}^d$-valued random variables, $h:\mathbb{R}^d\rightarrow\mathbb{R}^d$ is a Lipschitz 
continuous drift function and $\left\{M_{n}\right\}_{n\geq1}$ is a sequence of $\mathbb{R}^d$-valued random variables which denote the additive noise 
terms. In \cite{benaim2}, it was shown that under certain assumptions the asymptotic behavior of recursion \eqref{recstd} can be determined by 
the asymptotic behavior of the \it{o.d.e.}\rm 
\begin{equation*}
 \frac{dx}{dt}=h(x).
\end{equation*}
This method is known as the ODE method and the central idea of this method is to show that the linearly interpolated trajectory of recursion 
\eqref{recstd} \it{``tracks'' }\rm the flow of the o.d.e. This idea was later generalized in \cite{benaim3} to analyze the asymptotic behavior 
of stochastic processes with continuous sample paths for which a tracking argument as in the ODE method could be established. It was here that 
the notion of \it{asymptotic pseudotrajectory} \rm was introduced. For a precise definition we refer the reader to chapter 3 of \cite{benaim3}.
In many applications arising in machine learning and optimization, the drift function $h$ is set-valued or is single-valued and does not 
satisfy the assumption of Lipschitz continuity (or even continuity). In such cases the recursion studied takes the form,
\begin{equation}\label{recsvm}
 X_{n+1}-X_{n}-a(n)M_{n+1}\in a(n)H(X_n),
\end{equation}
where $H:\mathbb{R}^d\rightarrow\left\{\text{subsets of }\mathbb{R}^d\right\}$ is a set-valued map satisfying certain assumptions. Such 
recursions were first studied in \cite{benaim1} by showing that the linearly interpolated trajectory of recursion \eqref{recsvm} \it{``tracks'' }\rm 
the flow of the differential inclusion given by,
\begin{equation*}
 \frac{dx}{dt}\in H(x).
\end{equation*}
The notion of asymptotic pseudotrajectory was also extended to the set-valued case and a limit set theorem established.

In many applications such as methods relying on Monte Carlo simulation where sampling exactly from a distribution (or family of distributions) 
is not possible and instead Markov chain Monte Carlo methods are used, the recursion consists of an additional non-additive iterate-dependent 
Markov noise component. The recursion takes the general form,
\begin{equation}\label{recmark1}
 X_{n+1}-X_n=a(n)\left(h(X_n,S_n)+M_{n+1}\right),
\end{equation}
where $\{S_n\}_{n\geq0}$ denotes the Markov noise component. The other quantities in the above recursion have the same interpretation as those 
in \eqref{recstd}. The analysis of such a recursion was first performed in \cite{benven}. In \cite{benven}, the analysis is 
carried out under the assumptions that the drift function $h$ is continuous, the Markov chain defined by the transition kernel associated with 
the Markov noise terms in the recursion admits a unique stationary distribution as well as a solution to the Poisson equation among others. The 
case with discontinuities in the drift function was considered in \cite{tadic} but the analysis required the function $h(\cdot,s)$ to be Holder continuous. 
Recently in \cite{matti}, such a recursion has been analyzed when the drift function is just measurable. In all the above references, however, the Markov 
noise terms are assumed to have a unique stationary distribution and the assumptions imply an existence of a solution to the Poisson equation. 
Another set of assumptions on the Markov noise terms under which the recursion \eqref{recmark1} is studied can be found in \cite{borkarmark}. 
In \cite{borkarmark}, the transition kernel defining the Markov noise terms is required to be continuous in both the iterate and the state variables. 
The advantage in this case is that the set of stationary distributions need not be unique and one does not need the Markov chain associated with 
the transition kernel for every value of the iterate, to be aperiodic and irreducible. Further the analysis in \cite{borkarmark} enables one to study 
the recursion \eqref{recmark1} when the noise terms are not Markov by themselves, but their lack of Markov property comes through the dependence on 
a control sequence.

In this paper we extend the above to the case where the drift function $h$ in recursion \eqref{recmark1}, is a set-valued map. The recursion now 
takes the form,
\begin{equation*}
 X_{n+1}-X_n-a(n)M_{n+1}\in a(n)H(X_n,S_n),
\end{equation*}
where $H:\mathbb{R}^d\times\mathcal{S}\rightarrow\left\{\text{subsets of }\mathbb{R}^d\right\}$ is a set-valued map and $\mathcal{S}$ denotes the 
state space of the Markov noise terms which we assume to be a compact metric space. The assumption on the Markov noise terms is similar to 
\cite{borkarmark}. We shall show that the linearly interpolated trajectory of the above recursion is an asymptotic pseudotrajectory for the flow 
of a limiting differential inclusion obtained by averaging the set-valued map w.r.t. the stationary distributions of the Markov noise terms. 
The main idea is to approximate the set-valued drift function with continuous set-valued maps which admit a single-valued parametrization, there 
by enabling us to write the recursion as a standard stochastic approximation scheme with single-valued maps for which an asymptotic 
pseudotrajectory argument is easy to establish. Later we invoke the limit set theorem in \cite{benaim1} to characterize the limit set of 
the above recursion.\newline
\it{Organization of the paper:}\rm

In section \ref{rec_ass}, we formally define the recursion and state the assumptions imposed, which is then followed by a brief discussion on each of 
the assumptions. 
In section \ref{backgrd}, we review certain results from set-valued analysis which will be used later to analyze the recursion and define the 
limiting differential inclusion.
In section \ref{ldi}, the limiting differential inclusion is defined and properties of the same are established.
In section \ref{sopmvf}, we define the space of probability measure valued functions and define an appropriate topology on this space. A metrization 
lemma for the above mentioned space is stated and a continuous function on such a space needed later in analyzing the recursion is defined.
In section \ref{recanal}, we first state some preliminaries and later state our main result followed by the limit set theorem.
In section \ref{siacmnc}, we state two variants of the Markov noise assumption under which the analysis of the stochastic approximation scheme 
can be carried out along similar lines.
In section \ref{app}, we state four applications where the recursion analyzed in this paper naturally appears. We finally conclude in section 
\ref{cadffw} by providing a few interesting directions for future work.

  \section{Recursion and assumptions}
  \label{rec_ass}
  Let $(\Omega,\mathscr{F},\mathbb{P})$ be a probability space and $\left\{X_n\right\}_{n\geq0}$ be a sequence of $\mathbb{R}^d$-valued 
  random variables on $\Omega$ satisfying
  \begin{equation}\label{rec}
  X_{n+1}-X_{n}-a(n)M_{n+1}\in a(n)H(X_n,S_n)
  \end{equation}
  where,
  \begin{itemize}
  \item [(A1)]$H:\mathbb{R}^d\times\mathcal{S}\rightarrow\left\{\text{subsets of }\mathbb{R}^d\right\}$ is a set valued map on 
  $\mathbb{R}^d\times\mathcal{S}$ where $(\mathcal{S},d_{\mathcal{S}})$ is a compact metric space and the map $H(\cdot)$ satisfies
      \begin{itemize}
      \item [(i)] for every $(x,s)\in\mathbb{R}^d\times\mathcal{S}$, $H(x,s)$ is a non-empty, compact, convex subset of $\mathbb{R}^d$,
      \item [(ii)] there exists $K>0$ such that, for every 
      $(x,s)\in\mathbb{R}^d\times\mathcal{S},\\ \sup_{z\in H(x,s)}\parallel z\parallel\leq K(1+\parallel x\parallel)$,
      \item [(iii)] for every $\mathbb{R}^d\times\mathcal{S}$ valued sequence, $\left\{\left(x_n,s_n\right)\right\}_{n\geq1}$ converging to $(x,s)$, for every 
      sequence $\left\{z_n\in H(x_n,s_n)\right\}_{n\geq1}$ converging to $z$, we have, $z\in H(x,s)$.  
      \end{itemize}
  \end{itemize}
      
  Throughout this paper $\mathcal{P}(\cdots)$, denotes the set of probability measures on a compact metric space \lq$\cdots$' with the 
  Prohorov topology (see chapter 2 in \cite{borkarap}).
  \begin{itemize}
  \item [(A2)]$\left\{S_n\right\}_{n\geq0}$ is a sequence of $\mathcal{S}$ valued random variables on $\Omega$ such that, 
  for every $n\geq0$, for every $A\in \mathscr{B}(\mathcal{S})$, $\mathbb{P}(S_{n+1}\in A|S_m,X_m,m\leq n)=
  \mathbb{P}(S_{n+1}\in A|S_n,X_n)=\Pi(X_n,S_n)(A)$ $a.s.$ where $\Pi:\mathbb{R}^d\times\mathcal{S}\rightarrow\mathcal{P}(\mathcal{S})$, 
  is continuous.
  \item [(A3)] $\left\{a(n)\right\}_{n\geq0}$ is a sequence of real numbers satisfying
  \begin{itemize}
  \item[(i)]$a(0)\leq1$ and for every $n\geq0$, $a(n)\geq a(n+1)$,
  \item[(ii)] $\sum_{n\geq0}a(n)=\infty$ and $\sum_{n\geq0}a(n)^2<\infty$.
  \end{itemize}
  \item [(A4)] $\left\{M_n\right\}_{n\geq1}$ is a sequence of $\mathbb{R}^d$ valued random variables such that, $a.e. (\omega)$, for every $T>0$ 
  $\\\lim_{n\to\infty}\sup_{n\leq k\leq\tau(n,T)}\parallel\sum_{m=n}^{k}a(m)M_{m+1}\parallel=0$ where, 
  $\tau(n,T):=\min\left\{m>n:\sum_{k=n}^{m-1}a(k)\geq T\right\}$.
  \item[(A5)] $\mathbb{P}(\sup_{n\geq0}\parallel X_n \parallel<\infty)=1$.
  \end{itemize}
  
Assumption $(A1)$ is an extension of the assumption imposed on the set valued map for the case without Markov noise in \cite{benaim1} to the 
case with Markov noise. The only difference is in $(A1)(ii)$, since in our case we expect the constant $K$ to be independent of Markov noise 
terms. This strengthening of the assumption allows us to obtain integrable measurable selections and also to prove certain properties of the 
limiting differential inclusion in future sections.

Assumption $(A2)$ ensures that for every $x\in \mathbb{R}^d$ the Markov chain defined by the transition kernel $\Pi(x,\cdot)(\cdot)$ possesses 
weak Feller property (see chapter 6.1.1 of \cite{meyntweed}). In addition to the above, since the state space of the Markov chain is a compact 
metric space, by Theorem 12.1.2(ii) in \cite{meyntweed}, for every $x\in\mathbb{R}^d$, the Markov chain defined by the transition kernel 
$\Pi(x,\cdot)(\cdot)$ admits at least one stationary distribution ($\mu\in\mathcal{P}(\mathcal{S})$ is stationary for the Markov chain defined 
by the transition kernel $\Pi(x,\cdot)(\cdot)$ if, for every $A\in\mathscr{B}(\mathcal{S})$, $\mu(A)=\int_{\mathcal{S}}\Pi(x,s)(A)\mu(ds)$).
For every $x\in\mathbb{R}^d$, let $D(x)\subseteq\mathcal{P}(\mathcal{S})$ denote the set of stationary distributions of the Markov chain 
defined by the transition kernel $\Pi(x,\cdot)(\cdot)$. It can be easily shown that,  
\begin{itemize}
 \item [(i)] for every $x\in\mathbb{R}^d$, $D(x)$ is a convex and compact subset of $\mathcal{P}(\mathcal{S})$.
 \item [(ii)] the map $x\rightarrow D(x)$ has a closed graph, that is the set 
 \begin{equation*}
  \left\{(x,\mu)\in\mathbb{R}^d\times\mathcal{P}(\mathcal{S}): x\in\mathbb{R}^d,\ \mu\in D(x)\right\},
 \end{equation*}
is a closed subset of $\mathbb{R}^d\times\mathcal{P}(\mathcal{S})$(for a proof of the above two properties we refer the reader to page 
69 in \cite{borkartxt}). 
\end{itemize}
It is worth noting here that for every $x\in\mathbb{R}^d$, we do \it{not }\rm require that the associated Markov chain be aperiodic, 
irreducible or possess a unique stationary distribution.

Assumption $(A3)$ is the standard step size assumption. The assumptions of a non-increasing sequence and square summability are required in 
the analysis of the Markov noise component.

Assumption $(A4)$ is the same additive noise assumption as in Proposition 1.3, section 1.5 of \cite{benaim1}. The assumption ensures that the 
contribution of the additive noise is eventually negligible. For various sufficient conditions under which $(A4)$ is satisfied we refer the 
reader to \cite{benaim1}.

Assumption $(A5)$ ensures that the sample path of iterates of recursion \eqref{rec} remain bounded. This assumption is known as the stability 
assumption and sufficient conditions for ensuring the same for the single-valued case (see \cite{borkarmeyn}, \cite{andvih})  and for 
the set-valued case without Markov noise(see \cite{arunstab}) exist.
\section{Background}
\label{backgrd}
In this section we shall review some results from the theory of set valued maps and their integration. First we shall present a result on 
approximation of upper semi-continuous set valued maps with continuous set valued maps. We will then use this approximation to obtain 
single valued maps that represent these approximate continuous set valued maps thus enabling us to rewrite the stochastic recursive 
inclusion, \eqref{rec}, as a standard stochastic approximation scheme with single valued maps.

Secondly, in order to define the set valued map associated with the limiting differential inclusion (also known as mean field) and prove 
properties of the same we review definitions of measurability of set valued maps, its integral and some properties of the integral.

\subsection{Upper semicontinuous set valued maps and their approximation}
 First we shall recall the notion of upper semicontinuity, lower semicontinuity and continuity of set valued maps. These notions are 
 taken from chapter 1, section 1 of \cite{aubindi}.  
\begin{definition}
 A set valued map $H:\mathbb{R}^d\times\mathcal{S}\rightarrow\left\{\text{subsets of }\mathbb{R}^d\right\}$ is,
 \begin{itemize}
  \item \it{Upper semicontinuous }\rm(u.s.c.) if, for every $(x_0,s_0)\in\mathbb{R}^d\times\mathcal{S}$, for every $\epsilon>0$, there 
  exists $\delta>0$ (depending on $(x_0,s_0)$ and $\epsilon$) such that,
  \begin{equation*}
   \parallel x-x_0\parallel<\delta,\ d_{\mathcal{S}}(s,s_0)<\delta\implies H(x,s)\subseteq H(x_0,s_0)+\epsilon U,
  \end{equation*}
  where $U$ denotes the closed unit ball in $\mathbb{R}^d$.
  \item \it{Lower semicontinuous }\rm(l.s.c)  if, for every $(x_0,s_0)\in\mathbb{R}^d\times\mathcal{S}$, for every $z_0\in H(x_0,s_0)$, 
  for every sequence $\left\{\left(x_n,s_n\right)\right\}_{n\geq1}$ converging to $(x_0,s_0)$, there exists a sequence 
  $\left\{z_n\in H(x_n,s_n)\right\}$ converging to $z_0$.
  \item \it{Continuous }\rm if, it is both u.s.c. and l.s.c.
 \end{itemize}
\end{definition}
For set valued maps taking compact set values we have the above mentioned notion of u.s.c. to be equivalent to the standard notion of u.s.c.
(see pg. 45, \cite{aubindi}). In this paper we shall encounter set valued maps which are compact set valued and hence we have chosen to state 
the above as the definition of upper semicontinuity.

Since the set valued map $H$ satisfying assumption $(A1)$ has closed graph (i.e. assumption $(A1)(iii)$ holds), the following lemma follows 
from Corollary 1 in chapter 1, section 1 of \cite{aubindi}.

\begin{lemma}
 \emph{[u.s.c.]} A set valued map $H$ satisfying $(A1)$ is u.s.c.
\end{lemma}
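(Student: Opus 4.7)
The plan is to argue by contradiction, using the closed-graph hypothesis (A1)(iii) together with the linear growth bound (A1)(ii) to produce a convergent subsequence that yields a limit point violating the purported failure of upper semicontinuity. This is essentially the content of the classical equivalence between ``closed graph plus local boundedness'' and u.s.c.\ for compact-valued maps in finite-dimensional Euclidean range.

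More concretely, fix $(x_0,s_0) \in \mathbb{R}^d \times \mathcal{S}$ and $\epsilon > 0$, and suppose for contradiction that no $\delta > 0$ works. Then I can extract sequences $(x_n,s_n) \to (x_0,s_0)$ and $z_n \in H(x_n,s_n)$ with $z_n \notin H(x_0,s_0) + \epsilon U$ for every $n$. The key observation is that convergence of $\{x_n\}$ forces $\sup_n \|x_n\| < \infty$, and so assumption (A1)(ii) gives
\begin{equation*}
\|z_n\| \leq K(1 + \|x_n\|) \leq K\bigl(1 + \sup_m \|x_m\|\bigr) < \infty
\end{equation*}
uniformly in $n$. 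By Bolzano--Weierstrass in $\mathbb{R}^d$, pass to a subsequence along which $z_n \to z^*$.

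Now invoke the closed-graph property (A1)(iii): since $(x_n,s_n) \to (x_0,s_0)$, $z_n \in H(x_n,s_n)$, and $z_n \to z^*$, we conclude $z^* \in H(x_0,s_0)$. On the other hand, $z_n \notin H(x_0,s_0) + \epsilon U$ means $\mathrm{dist}(z_n, H(x_0,s_0)) \geq \epsilon$ for all $n$; since the distance function to a fixed set is continuous, passing to the limit gives $\mathrm{dist}(z^*, H(x_0,s_0)) \geq \epsilon > 0$, contradicting $z^* \in H(x_0,s_0)$.

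I do not expect any real obstacle here. The only subtle point to highlight is the role of the growth bound (A1)(ii): closed graph alone would not suffice, since in general for set-valued maps into an unbounded space closed graph is strictly weaker than u.s.c.; the uniform linear growth plays exactly the role of the local boundedness hypothesis needed in Aubin--Cellina's Corollary~1, delivering a compact ``reservoir'' from which $\{z_n\}$ must have a limit point. Everything else is a routine subsequence argument.
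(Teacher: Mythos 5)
Your proof is correct and is essentially the paper's own argument: the paper simply cites Corollary~1 of chapter~1, section~1 of Aubin--Cellina (closed graph plus the boundedness supplied by $(A1)(ii)$ implies u.s.c.\ for compact-valued maps), and you have written out precisely the standard contradiction/subsequence argument underlying that corollary, correctly identifying the role of the linear growth bound. Nothing is missing.
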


The next lemma gives a sequence of decreasing continuous set valued maps which approximate the set valued map $H$ satisfying assumption 
$(A1)$. The statement of the lemma below can be found in page 39 of \cite{aubindi}.
\begin{lemma}
 \emph{[continuous embedding]}\label{ctem} Let $H$ be a set valued map satisfying $(A1)$. Then there exists a sequence of set valued 
 maps, $\left\{H^{(l)}\right\}_{l\geq1}$ where for every $\ l\geq1$, 
 \begin{itemize}
  \item [(i)] $H^{(l)}:\mathbb{R}^d\times\mathcal{S}\rightarrow\left\{\text{subsets of }\mathbb{R}^d\right\}$ is continuous and  for 
  every $(x,s)\in\mathbb{R}^d\times\mathcal{S}$, $H^{(l)}(x,s)$ is a convex and compact subset of $\mathbb{R}^d$,
  \item [(ii)] for every $(x,s)\in\mathbb{R}^d\times\mathcal{S}$, $H(x,s)\subseteq H^{(l+1)}(x,s)\subseteq H^{(l)}(x,s)$,
  \item [(iii)] there exists $K^{(l)}>0$, such that for every 
  $(x,s)\in\mathbb{R}^d\times\mathcal{S},\\ \sup_{z\in H^{(l)}(x,s)}\parallel z\parallel\leq K^{(l)}(1+\parallel x\parallel)$.
 \end{itemize}
 Furthermore, for every $(x,s)\in\mathbb{R}^d\times\mathcal{S}$, $H(x,s)=\cap_{\substack{l\geq1}}H^{(l)}(x,s)$.
\end{lemma}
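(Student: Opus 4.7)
The plan is to construct $H^{(l)}$ by the standard partition-of-unity smoothing of an u.s.c. convex-compact-valued map at scale $1/l$, essentially the Cellina approximation carried out in Chapter 1 of Aubin--Cellina. Fix $l \geq 1$, pick a locally finite open cover $\{B_i^{(l)}\}_{i \in I_l}$ of $\mathbb{R}^d \times \mathcal{S}$ by balls of radius $1/l$ with a subordinate continuous partition of unity $\{\psi_i^{(l)}\}$, and set
\[
W_i^{(l)} := \overline{\mathrm{co}} \bigcup_{(y,t) \in \overline{B_i^{(l)}}} H(y,t), \qquad H^{(l)}(x,s) := \sum_{i \in I_l} \psi_i^{(l)}(x,s)\, W_i^{(l)},
\]
the latter a Minkowski combination. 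Since (A1)(ii) holds uniformly in $s$ and $\overline{B_i^{(l)}}$ is bounded in the $x$-coordinate, each $W_i^{(l)}$ is convex and compact; continuity and convex-compact-valuedness of $H^{(l)}$ then follow from local finiteness of the cover and continuity of the $\psi_i^{(l)}$, which gives (i).

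For (iii), whenever $\psi_i^{(l)}(x,s) > 0$ we have $(x,s) \in \overline{B_i^{(l)}}$, so every $(y,t) \in \overline{B_i^{(l)}}$ satisfies $\|y\| \leq \|x\| + 2/l$; by (A1)(ii) this yields $\sup_{z \in W_i^{(l)}} \|z\| \leq K(1 + \|x\| + 2/l)$, and taking the weighted Minkowski combination absorbs the additive $2/l$ into a constant $K^{(l)}$ depending only on $K$ and $l$. The inclusion $H(x,s) \subseteq H^{(l)}(x,s)$ in (ii) is also immediate: whenever $\psi_i^{(l)}(x,s) > 0$, we have $H(x,s) \subseteq W_i^{(l)}$, and since the weights sum to one and $H(x,s)$ is itself convex by (A1)(i), the weighted Minkowski sum contains $H(x,s)$.

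For the decreasing nesting $H^{(l+1)}(x,s) \subseteq H^{(l)}(x,s)$, I would take the covers at successive scales as refinements of one another, so that each $B_i^{(l+1)}$ lies inside some $B_j^{(l)}$ and hence $W_i^{(l+1)} \subseteq W_j^{(l)}$; alternatively, redefine $H^{(l)}$ as the finite intersection $\bigcap_{m \leq l} G^{(m)}$ of the basic approximations $G^{(m)}$ at scale $1/m$. Finally, for $\bigcap_l H^{(l)}(x,s) = H(x,s)$, the $\supseteq$ direction is built in, while for $\subseteq$ any $z$ in the intersection is a limit of convex combinations of values of $H$ on $(1/l)$-neighborhoods of $(x,s)$; the u.s.c. lemma proved just above, together with convexity of $H(x,s)$, forces $z \in H(x,s)$.

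The principal obstacle I expect is reconciling continuity of each $H^{(l)}$ with strict monotonicity in $l$: the raw partition-of-unity construction at different scales is not automatically nested, so the covers must be chosen as successive refinements and the Minkowski weights tracked compatibly, or one passes to finite intersections and checks (non-trivially, since intersections of continuous multifunctions need not be continuous) that continuity survives in this setting, where each $W_i^{(l)}$ is a genuine neighborhood of $H$.
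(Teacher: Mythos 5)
Your construction is the same Cellina-type partition-of-unity approximation the paper uses, and parts (i), (iii), the inclusion $H\subseteq H^{(l)}$, and the identification $\bigcap_l H^{(l)}=H$ are argued correctly and essentially as in the paper's Appendix A. The genuine gap is exactly the one you flag and then leave unresolved: the monotone nesting in (ii). Your first fix (taking the cover at scale $l+1$ to refine the cover at scale $l$) does not work as stated. Refinement gives you, for each $i$ active at level $l+1$ at the point $(x,s)$, an inclusion $W_i^{(l+1)}\subseteq W_{j(i)}^{(l)}$ for \emph{one} index $j(i)$, whereas the Minkowski-combination argument requires $W_i^{(l+1)}\subseteq W_j^{(l)}$ for \emph{every} $j$ with $\psi_j^{(l)}(x,s)>0$: only then can you write $W_i^{(l+1)}=\sum_j\psi_j^{(l)}(x,s)W_i^{(l+1)}\subseteq\sum_j\psi_j^{(l)}(x,s)W_j^{(l)}$ using convexity. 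Knowing only $W_i^{(l+1)}\subseteq W_{j(i)}^{(l)}$ yields a convex combination of the sets $W_j^{(l)}$ with the wrong weights, and Minkowski combinations with different weights are not comparable in general. Your second fix, $H^{(l)}:=\bigcap_{m\leq l}G^{(m)}$, secures nesting but sacrifices the continuity claim of (i): intersections of continuous convex-compact-valued maps need not be lower semicontinuous, and the interiority conditions that would rescue l.s.c. can fail here (e.g.\ when $H$ is single-valued the sets $G^{(m)}(x,s)$ may have empty interior), so this route requires an argument you have not supplied.

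The paper closes this gap with a quantitative choice you are missing: it sets $\epsilon_l=3^{-l}$, covers by $\epsilon_l$-balls (passing to a locally finite refinement by paracompactness), but defines $A_i^{l}$ as the closed convex hull of $H$ over the ball of radius $2\epsilon_l$ about the center $(x_i^l,s_i^l)$ --- twice the cover radius. Then for any $(x,s)$, any $i$ active at level $l+1$ and any $j$ active at level $l$, the two centers are within $\epsilon_l+\epsilon_{l+1}=\tfrac{4}{3}\epsilon_l$ of each other, so $B(2\epsilon_{l+1},x_i^{l+1},s_i^{l+1})\subseteq B(2\epsilon_{l+1}+\tfrac{4}{3}\epsilon_l,x_j^l,s_j^l)=B(2\epsilon_l,x_j^l,s_j^l)$, giving $A_i^{l+1}\subseteq A_j^{l}$ for \emph{all} active pairs; two applications of convexity then yield $H^{(l+1)}(x,s)\subseteq H^{(l)}(x,s)$ with no refinement or intersection needed. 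The doubling of the radius together with the geometric decay $\epsilon_{l+1}\leq\epsilon_l/3$ is the essential device; with your hulls taken over the cover balls themselves at scales $1/l$, no such uniform containment between consecutive levels is available.
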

A brief outline of the proof of the above lemma is provided in appendix \ref{prf_ctem}. 
The following are some useful observations from the proof of Lemma \ref{ctem}.
\begin{itemize}
 \item [(a)] $\tilde{K}:=\sup_{l\geq1}K^{(l)}$ is finite.
 \item [(b)] For every $(x,s)\in\mathbb{R}^d\times\mathcal{S}$, for every $\epsilon>0$, there exists $L$ (depending on $\epsilon$ and 
 $(x,s)$), such that, for every $l\geq L$, $H^{(l)}(x,s)\subseteq H(x,s)+\epsilon U$.
\end{itemize}

The next lemma gives us a continuous parametrization of the continuous set valued maps $H^{(l)}$ as in Lemma \ref{ctem}. This  
lemma follows from Theorem 2 in chapter 1, section 7 of \cite{aubindi}.
\begin{lemma}
 \emph{[parametrization]}\label{param} For every $l\geq1$, $H^{(l)}$ be a set valued map as in Lemma \ref{ctem}. Then for every 
 $l\geq1$, there exists a continuous function $h^{(l)}:\mathbb{R}^d\times\mathcal{S}\times U\rightarrow\mathbb{R}^d$ such that,
 \begin{itemize}
  \item [(i)] for every $(x,s)\in \mathbb{R}^d\times\mathcal{S}$, $H^{(l)}(x,s)=h^{(l)}(x,s,U)$ where $U$ denotes the closed unit ball 
  in $\mathbb{R}^d$ and $h^{(l)}(x,s,U):=\left\{h^{(l)}(x,s,u):u\in U\right\}$,
  \item[(ii)] for every $(x,s,u)\in\mathbb{R}^d\times\mathcal{S}\times U$, 
  $\parallel h^{(l)}(x,s,u)\parallel\leq K^{(l)}(1+\parallel x\parallel)$ where $K^{(l)}$ is as in Lemma \ref{ctem}\it{(iii)}\rm. 
 \end{itemize}
\end{lemma}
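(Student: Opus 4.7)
Since the lemma is explicitly flagged as a consequence of Theorem 2, Chapter 1, Section 7 of \cite{aubindi}, the plan is simply to invoke that parametrization theorem and verify its hypotheses. That theorem states that any continuous set-valued map from a metric space into the nonempty convex compact subsets of $\mathbb{R}^d$ admits a continuous parametrization by the closed unit ball $U$ of $\mathbb{R}^d$.

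First I would check the hypotheses for $H^{(l)}$, for each fixed $l\geq 1$, regarded as a set-valued map on the metric space $\mathbb{R}^d\times\mathcal{S}$ (with the product metric) into the subsets of $\mathbb{R}^d$. By Lemma \ref{ctem}(i), $H^{(l)}$ is continuous on $\mathbb{R}^d\times\mathcal{S}$ with convex and compact values, and these values are nonempty because $H(x,s)\subseteq H^{(l)}(x,s)$ by Lemma \ref{ctem}(ii) and $H(x,s)$ is nonempty by $(A1)(i)$. The cited theorem then yields a continuous function $h^{(l)}:\mathbb{R}^d\times\mathcal{S}\times U\to\mathbb{R}^d$ with $h^{(l)}(x,s,U)=H^{(l)}(x,s)$ for every $(x,s)\in\mathbb{R}^d\times\mathcal{S}$, which is exactly claim (i).

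Claim (ii) is then essentially free. For every $(x,s,u)\in\mathbb{R}^d\times\mathcal{S}\times U$, the element $h^{(l)}(x,s,u)$ lies in $h^{(l)}(x,s,U)=H^{(l)}(x,s)$, so by Lemma \ref{ctem}(iii),
\begin{equation*}
\parallel h^{(l)}(x,s,u)\parallel\leq\sup_{z\in H^{(l)}(x,s)}\parallel z\parallel\leq K^{(l)}(1+\parallel x\parallel),
\end{equation*}
which is the desired estimate.

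Essentially all of the difficulty in this lemma is hidden inside the black-box theorem being cited; verifying the hypotheses from Lemma \ref{ctem} is routine. If one wanted a self-contained argument, the main obstacle would be constructing $h^{(l)}$ explicitly, which is typically done using the Steiner point selection (continuous in the Hausdorff metric) combined with a directional slicing of the convex compact set $H^{(l)}(x,s)$ so that points of $U$ parametrize the set smoothly. Reproducing that construction is not necessary here.
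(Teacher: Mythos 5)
Your proposal is correct and matches the paper's own treatment: the paper likewise disposes of this lemma by citing Theorem 2, Chapter 1, Section 7 of \cite{aubindi}, offering no further argument. Your added verification of the hypotheses (nonempty convex compact values, continuity from Lemma \ref{ctem}) and the derivation of the bound in (ii) from $h^{(l)}(x,s,u)\in H^{(l)}(x,s)$ together with Lemma \ref{ctem}(iii) are exactly the routine steps left implicit there.
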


Combining Lemma \ref{ctem} and Lemma \ref{param} we obtain the approximation theorem stated below.
\begin{theorem}
 \emph{[approximation]} Let $H$ be a set valued map satisfying $(A1)$. Then there exists a sequence of continuous functions, 
 $\left\{h^{(l)}\right\}_{l\geq1}$, such that, for every $l\geq1$,
 \begin{itemize}
  \item [(i)] $h^{(l)}:\mathbb{R}^d\times\mathcal{S}\times U\rightarrow \mathbb{R}^d$ is continuous and for every 
  $(x,s)\in\mathbb{R}^d\times\mathcal{S}$, $h^{(l)}(x,s,U)$ is a convex and compact subset of $\mathbb{R}^d$,
  \item [(ii)] for every $(x,s)\in\mathbb{R}^d\times\mathcal{S}$, $H(x,s)\subseteq h^{(l+1)}(x,s,U)\subseteq h^{(l)}(x,s,U)$,
  \item [(iii)] there exists $K^{(l)}>0$, such that for every $(x,s,u)\in \mathbb{R}^d\times\mathcal{S}\times U$, 
  $\\\parallel h^{(l)}(x,s,u)\parallel\leq K^{(l)}(1+\parallel x\parallel)$.
 \end{itemize}
Furthermore, for every $(x,s)\in\mathbb{R}^d\times\mathcal{S}$, $H(x,s)=\cap_{l\geq1}h^{(l)}(x,s,U)$.
\end{theorem}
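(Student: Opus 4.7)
The plan is to obtain the theorem by composing Lemma \ref{ctem} with Lemma \ref{param}, essentially as a bookkeeping exercise. I would proceed as follows.

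First I would invoke Lemma \ref{ctem} on the set-valued map $H$, which by (A1) has closed graph and compact convex values with a uniform linear growth bound. This yields the sequence $\{H^{(l)}\}_{l\geq 1}$ of continuous, convex-compact-valued set-valued maps satisfying the nesting $H(x,s)\subseteq H^{(l+1)}(x,s)\subseteq H^{(l)}(x,s)$, the linear growth bound $\sup_{z\in H^{(l)}(x,s)}\|z\|\leq K^{(l)}(1+\|x\|)$, and the intersection identity $H(x,s)=\bigcap_{l\geq 1}H^{(l)}(x,s)$.

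Next I would apply Lemma \ref{param} separately to each $H^{(l)}$. Since each $H^{(l)}$ is continuous, convex-compact-valued, and linearly bounded, Lemma \ref{param} provides, for each $l\geq 1$, a continuous parametrization $h^{(l)}:\mathbb{R}^d\times\mathcal{S}\times U\to\mathbb{R}^d$ with $h^{(l)}(x,s,U)=H^{(l)}(x,s)$ and pointwise norm bound $\|h^{(l)}(x,s,u)\|\leq K^{(l)}(1+\|x\|)$ using the same constant $K^{(l)}$ as in Lemma \ref{ctem}.

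The rest is verification. Property (i) is immediate: continuity of $h^{(l)}$ comes from Lemma \ref{param}(i), and $h^{(l)}(x,s,U)=H^{(l)}(x,s)$ is convex and compact by Lemma \ref{ctem}(i). Property (ii) is obtained by rewriting Lemma \ref{ctem}(ii) through the identity $h^{(l)}(x,s,U)=H^{(l)}(x,s)$, giving $H(x,s)\subseteq h^{(l+1)}(x,s,U)\subseteq h^{(l)}(x,s,U)$. Property (iii) is exactly Lemma \ref{param}(ii). Finally the intersection identity $H(x,s)=\bigcap_{l\geq 1}h^{(l)}(x,s,U)$ is just the restatement of the last clause of Lemma \ref{ctem} in terms of the images of the parametrizations.

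There is no real obstacle here: the content of the theorem has been entirely absorbed into the preceding two lemmas, and the proof is a one-line ``combine Lemma \ref{ctem} and Lemma \ref{param}'' together with the translation between $H^{(l)}(x,s)$ and $h^{(l)}(x,s,U)$. The only point worth stating explicitly is that the parametrization from Lemma \ref{param} is chosen \emph{per} $l$, so that the continuity in $(x,s,u)$ and the growth constant $K^{(l)}$ are inherited individually from each $H^{(l)}$; the sequence $\{h^{(l)}\}_{l\geq 1}$ itself need not be jointly continuous in $l$, which is not required by the statement.
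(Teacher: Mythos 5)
Your proof is correct and matches the paper exactly: the paper introduces the theorem with the single sentence ``Combining Lemma \ref{ctem} and Lemma \ref{param} we obtain the approximation theorem stated below,'' which is precisely your argument. The translation of each property of $H^{(l)}$ into the corresponding property of $h^{(l)}(x,s,U)$ is the entire content, and you have carried it out faithfully.
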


\subsection{Measurable set valued maps and integration}

Let $(\mathcal{Y},\mathscr{F}_{\mathcal{Y}})$ denote a measurable space and $F:\mathcal{Y}\rightarrow\left\{\text{subsets of }\mathbb{R}^d\right\}$
be a set valued map such that, for every $y\in\mathcal{Y}$, $F(y)$ is a non-empty closed subset of $\mathbb{R}^d$. Throughout this 
section $F$ refers to the set valued map as defined above.

\begin{definition}
\emph{[measurable set-valued map]} A set valued map $F$ is measurable if for every $C\subseteq \mathbb{R}^d$, closed,
\begin{equation*}
 F^{-1}(C):=\left\{y\in\mathcal{Y}:F(y)\cap C\neq\emptyset\right\}\in \mathscr{F}_{\mathcal{Y}}.
\end{equation*}
\end{definition}
We refer the reader to Theorem 1.2.3 in \cite{shoumei} for other notions of measurability and their relation to the definition above.

\begin{definition}
 \emph{[measurable selection]} A function $f:\mathcal{Y}\rightarrow\mathbb{R}^d$ is a measurable selection of a set valued map $F$ if 
 $f$ is measurable and for every $y\in\mathcal{Y}$, $f(y)\in F(y)$.
\end{definition}
For a set valued map $F$ let, $\mathscr{S}(F)$ denote the set of all measurable selections. The next lemma summarizes some standard
results about measurable set valued maps and their measurable selections.

\begin{lemma}\label{msel}
 For any measurable set valued map $F$,
 \begin{itemize}
  \item [(i)] $\mathscr{S}(F)\neq \emptyset$.
  \item [(ii)] \it{(Castaing representation) }\rm there exists $\left\{f_n\right\}_{n\geq1}\subseteq\mathscr{S}(F)$ such that, for every 
  $y\in\mathcal{Y}$, $F(y)=cl(\left\{f_n(y)\right\}_{n\geq1})$, where $cl(\cdot)$ denotes the closure of a set.
 \end{itemize}
\end{lemma}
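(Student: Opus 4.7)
The plan is to prove (i) by the standard Kuratowski--Ryll-Nardzewski successive approximation argument, and then to bootstrap (i) to obtain (ii) by applying it to a countable family of restrictions of $F$ to closed balls centered at a dense subset of $\mathbb{R}^d$.

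For (i), I would fix a countable dense set $\{x_k\}_{k\geq 1}\subseteq\mathbb{R}^d$ and inductively construct simple measurable functions $f_n:\mathcal{Y}\to\mathbb{R}^d$ with $d(f_n(y), F(y))\leq 2^{-n}$ and $\|f_{n+1}(y) - f_n(y)\|\leq 3\cdot 2^{-(n+1)}$ for every $y\in\mathcal{Y}$. At the inductive step one introduces the measurable sets
\[
A_{n,k} := F^{-1}\bigl(\overline{B(x_k, 2^{-(n+1)})}\bigr) \cap \bigl\{y \in \mathcal{Y}: \|f_n(y) - x_k\| \leq 3\cdot 2^{-(n+1)}\bigr\}
\]
and sets $f_{n+1}(y) := x_k$ on $A_{n,k}\setminus\bigcup_{j<k}A_{n,j}$; measurability of $A_{n,k}$ comes from the definition of measurability of $F$ applied to the closed ball together with measurability of $f_n$. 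The base case $f_0$ can be obtained by setting $f_0(y) := x_{k(y)}$ where $k(y)$ is the smallest $k$ with $y\in F^{-1}(\overline{B(x_k,1)})$. The resulting sequence is uniformly Cauchy in $y$ and converges to a measurable $f$; closedness of $F(y)$ then gives $f(y)\in F(y)$.

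For (ii), let $\{x_k\}_{k\geq 1}$ be dense in $\mathbb{R}^d$ and let $\{r_m\}_{m\geq 1}$ enumerate the positive rationals. For each pair $(k,m)$ put $\mathcal{Y}_{k,m} := F^{-1}(\overline{B(x_k, r_m)})\in\mathscr{F}_{\mathcal{Y}}$ and, on $\mathcal{Y}_{k,m}$, define the restricted map $F_{k,m}(y) := F(y)\cap \overline{B(x_k, r_m)}$. Since $F_{k,m}^{-1}(C) = F^{-1}(C\cap\overline{B(x_k,r_m)})$ for every closed $C\subseteq\mathbb{R}^d$, the map $F_{k,m}$ is itself measurable with non-empty closed values, so by (i) it admits a measurable selection $g_{k,m}$; I extend $g_{k,m}$ to $\mathcal{Y}$ by setting it equal, on the complement of $\mathcal{Y}_{k,m}$, to some fixed selection $f^{\ast}\in\mathscr{S}(F)$ produced by (i). To verify the Castaing property, fix $y\in\mathcal{Y}$ and $z\in F(y)$: for any $\epsilon>0$ pick $r_m<\epsilon/2$ and $x_k$ with $\|x_k-z\|<r_m$; then $y\in\mathcal{Y}_{k,m}$ and $\|g_{k,m}(y)-z\|\leq 2r_m<\epsilon$. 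Since every $g_{k,m}(y)\in F(y)$ and $F(y)$ is closed, this gives $F(y)=cl(\{g_{k,m}(y)\}_{k,m\geq 1})$.

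The technically most delicate point is verifying, at the inductive step of (i), that the $A_{n,k}$ cover $\mathcal{Y}$ so that $f_{n+1}$ is defined everywhere; this reduces to combining the inductive bound $d(f_n(y),F(y))\leq 2^{-n}$ with density of $\{x_k\}$ to produce, for each $y$, a point $z\in F(y)$ within $2^{-n}$ of $f_n(y)$ and an $x_k$ within $2^{-(n+1)}$ of $z$. The remaining bookkeeping, namely measurability of $A_{n,k}$, of $\mathcal{Y}_{k,m}$, and of $F_{k,m}$, all follows directly from the definition of measurability of $F$ via preimages of closed sets, so no further obstacle arises.
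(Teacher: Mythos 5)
Your argument is correct. Note that the paper does not actually prove this lemma: it simply cites Theorems 1.2.6 and 1.2.7 of the reference \cite{shoumei}, so there is no in-paper proof to compare against. What you have written is a correct, self-contained rendition of the standard arguments behind those citations: the Kuratowski--Ryll-Nardzewski successive-approximation construction for (i), and the Castaing representation obtained by selecting from the restrictions $F(\cdot)\cap\overline{B(x_k,r_m)}$ for (ii). The bookkeeping checks out: the sets $A_{n,k}$ cover $\mathcal{Y}$ because the distance from $f_n(y)$ to the non-empty closed set $F(y)\subseteq\mathbb{R}^d$ is attained (closed subsets of $\mathbb{R}^d$ are proximinal), so the triangle-inequality estimate $\|f_n(y)-x_k\|\leq 2\cdot 2^{-(n+1)}+2^{-(n+1)}$ goes through with the stated constants; the identity $F_{k,m}^{-1}(C)=F^{-1}(C\cap\overline{B(x_k,r_m)})$ correctly transfers measurability to the restricted maps; and the extension of $g_{k,m}$ by a global selection $f^{\ast}$ keeps every $g_{k,m}$ in $\mathscr{S}(F)$, so the closure of $\{g_{k,m}(y)\}$ is trapped inside the closed set $F(y)$ while containing a dense subset of it. Two cosmetic remarks only: the functions $f_{n+1}$ are countably-valued rather than simple in the usual finite-valued sense, and if you preferred to avoid invoking attainment of the distance you could run the induction with strict inequalities and a slightly larger constant; neither point affects the validity of the proof.
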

We refer the reader to Theorem 1.2.6 and Theorem 1.2.7 in \cite{shoumei} for the proofs of Lemma \ref{msel}$(i)$ and $(ii)$ 
respectively.

\begin{definition}
 \emph{[$\mu$-integrable set-valued map]}
 Let $\mu$ be a probability measure on $(\mathcal{Y},\mathscr{F}_{\mathcal{Y}})$. A measurable set valued map $F$ is said to be 
 $\mu$-integrable if, there exists $f\in\mathscr{S}(F)$ which is $\mu$-integrable.
\end{definition}

\begin{definition}
 \emph{[Aumann's integral]} Let $\mu$ be a probability measure on $(\mathcal{Y},\mathscr{F}_{\mathcal{Y}})$. The integral of an 
 integrable set valued map $F$ is defined as,
 \begin{equation*}
  \int_{\mathcal{Y}} F(y)\mu(dy):=\left\{\int_{\mathcal{Y}} f(y)\mu(dy):\ f\in \mathscr{S}(F),\ f\ is\ \mu-integrable\right\}.
 \end{equation*}
\end{definition}

The next lemma states a useful result on the properties of the integral of a set valued map which is convex and compact set valued.
\begin{lemma}\label{cldint}
 Let $\mu$ be a probability measure on $(\mathcal{Y},\mathscr{F}_{\mathcal{Y}})$ and $F$ a $\mu$-integrable set valued map such that, 
 for every $y\in\mathcal{Y}$, $F(y)$ is convex and compact. Then, $\int_{\mathcal{Y}} F(y)\mu(dy)$ is a convex and closed subset of $\mathbb{R}^d$.
\end{lemma}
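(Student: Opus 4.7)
The plan is to handle convexity and closedness separately, with closedness being the substantive part. For convexity, take $a_1,a_2\in\int_{\mathcal{Y}}F(y)\mu(dy)$ with $a_i=\int_{\mathcal{Y}}f_i(y)\mu(dy)$ for integrable $f_i\in\mathscr{S}(F)$, and for $\lambda\in[0,1]$ set $g:=\lambda f_1+(1-\lambda)f_2$. The map $g$ is measurable and integrable, and convexity of $F(y)$ pointwise gives $g(y)\in F(y)$ for every $y$, so $g\in\mathscr{S}(F)$ and $\lambda a_1+(1-\lambda)a_2=\int_{\mathcal{Y}}g(y)\mu(dy)$ lies in the integral.

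For closedness, suppose $a_n=\int_{\mathcal{Y}}f_n(y)\mu(dy)\to a$ with $f_n\in\mathscr{S}(F)$ integrable; I would produce $f\in\mathscr{S}(F)$ integrable with $\int_{\mathcal{Y}}f(y)\mu(dy)=a$ via a weak $L^1$ compactness argument. Using integrable boundedness of $F$ (which in all applications in this paper is supplied by $(A1)(ii)$, after fixing or localising in $x$, giving a uniform integrable envelope $\sup_{z\in F(y)}\|z\|\le k(y)$), the sequence $\{f_n\}$ is uniformly integrable in $L^1(\mu;\mathbb{R}^d)$. By Dunford--Pettis, a subsequence $f_{n_k}$ converges weakly in $L^1$ to some $f$, and weak convergence immediately yields $\int_{\mathcal{Y}}f(y)\mu(dy)=\lim_k a_{n_k}=a$.

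It remains to verify $f\in\mathscr{S}(F)$. Apply Mazur's lemma to obtain convex combinations $\tilde f_m=\sum_{k\ge m}\alpha_{m,k}f_{n_k}$ (finite sums of nonnegative weights summing to one) converging to $f$ in $L^1$-norm, hence pointwise almost everywhere along a further subsequence. Convexity of the values places each $\tilde f_m(y)$ in $F(y)$, and closedness of $F(y)$ then forces $f(y)\in F(y)$ for $\mu$-a.e.\ $y$; redefining $f$ on a $\mu$-null set using any measurable selection guaranteed by Lemma \ref{msel}$(i)$ produces a genuine element of $\mathscr{S}(F)$.

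The main obstacle is the uniform integrability step: the bare definition of $\mu$-integrability only guarantees a single integrable selection, and without an integrable envelope the integral of a convex compact valued map need not be closed. The argument therefore rests on integrable boundedness, which is standard in the Aumann--Hiai--Umegaki framework and is precisely why $(A1)(ii)$ is formulated with a constant $K$ uniform in the Markov noise variable $s$.
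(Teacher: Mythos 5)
Your proposal is correct in substance, but note first that the paper does not prove this lemma at all: it is imported from the literature with a pointer to Theorem 2.2.2 of \cite{shoumei}, so there is no in-paper argument to compare against. Your convexity step is the standard one, and your closedness step (integrable envelope, uniform integrability, Dunford--Pettis, Mazur's lemma, then pointwise membership in $F(y)$ via convexity and closedness of the values) is the classical Hiai--Umegaki-style proof and is sound.

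The caveat you raise is genuine, and you are right to insist on it: mere $\mu$-integrability in the paper's sense (existence of one integrable selection) does not imply closedness of the Aumann integral. A concrete counterexample: take $\mathcal{Y}=\mathbb{N}$, $\mu(\{n\})=2^{-n}$, and $F(n)=\left\{(a,a/n):0\leq a\leq 2^{n}\right\}\subseteq\mathbb{R}^{2}$, a compact convex segment for each $n$, with the zero selection integrable. Writing $t_{n}=a_{n}2^{-n}$, the Aumann integral is $\left\{\left(\sum_{n}t_{n},\sum_{n}t_{n}/n\right):t_{n}\in[0,1],\ \sum_{n}t_{n}<\infty\right\}$, which contains $(1,1/N)$ for every $N$ but not $(1,0)$, hence is not closed. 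So the lemma as literally stated requires the integrable-boundedness hypothesis you supply. This costs nothing where the lemma is actually used: it is only ever applied to $H_{x}$ and $H^{(l)}_{x}$, which by Lemma \ref{msble} are uniformly bounded by $C_{x}$ and $C^{(l)}_{x}$, so your envelope can be taken constant and every measurable selection is automatically integrable (as the paper itself records in Lemma \ref{msble}$(iii)$--$(iv)$). With that hypothesis made explicit, your proof is complete; the only cosmetic remark is that the final redefinition on a null set can simply patch $f$ with the values of any fixed measurable selection from Lemma \ref{msel}$(i)$, exactly as you say.
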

For a proof of the above lemma we refer the reader to Theorem 2.2.2 in \cite{shoumei}. The next lemma is a useful tool to prove 
measurability of set valued maps that arise later in this paper.

\begin{lemma}\label{mtool}
 Let $g:\mathcal{Y}\rightarrow\mathbb{R}^d$ be a measurable function and $F$ a measurable set valued map. Then the map 
 $y\rightarrow d(g(y),F(y))$ where $d(g(y),F(y)):=\inf_{z\in F(y)}\parallel g(y)-z \parallel$, is measurable.  
\end{lemma}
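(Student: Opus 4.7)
The plan is to reduce the infimum defining $d(g(y),F(y))$, which a priori ranges over the (possibly uncountable) closed set $F(y)$, to a countable infimum of real-valued measurable functions on $\mathcal{Y}$. The key tool is the Castaing representation provided by Lemma \ref{msel}(ii), which converts the set-valued data $F$ into a countable family of ordinary measurable maps whose pointwise closure reconstructs $F$.

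First I would invoke Lemma \ref{msel}(ii) to obtain a sequence $\{f_n\}_{n\geq1}\subseteq\mathscr{S}(F)$ such that for every $y\in\mathcal{Y}$, $F(y)=cl(\{f_n(y)\}_{n\geq1})$. Next I would establish the identity
\begin{equation*}
d(g(y),F(y))=\inf_{n\geq1}\|g(y)-f_n(y)\|
\end{equation*}
pointwise in $y$. The inequality $\leq$ is immediate from $f_n(y)\in F(y)$; the reverse inequality uses continuity of $z\mapsto\|g(y)-z\|$ on $\mathbb{R}^d$: for any $z\in F(y)$ there exists a subsequence $\{f_{n_k}(y)\}$ converging to $z$, along which $\|g(y)-f_{n_k}(y)\|\to\|g(y)-z\|$, so the countable infimum is bounded above by $\|g(y)-z\|$ for every $z\in F(y)$.

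With this identity in hand, I would observe that for each fixed $n$ the map $y\mapsto g(y)-f_n(y)$ is $\mathscr{F}_{\mathcal{Y}}$-measurable as the difference of two measurable $\mathbb{R}^d$-valued maps, and composing with the continuous norm $\|\cdot\|$ yields measurability of $y\mapsto\|g(y)-f_n(y)\|$. Since the pointwise infimum of a countable family of measurable real-valued functions is measurable, this gives measurability of $y\mapsto d(g(y),F(y))$. I do not anticipate any serious obstacle; the argument is a standard application of the Castaing representation, and the only delicate point is the density argument identifying the infimum over $F(y)$ with the infimum over the countable subfamily, which rests squarely on continuity of the norm and closedness of $F(y)$.
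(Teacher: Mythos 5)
Your proposal is correct and follows essentially the same route as the paper's proof: invoke the Castaing representation from Lemma \ref{msel}(ii), identify $d(g(y),F(y))$ with the countable infimum $\inf_{n\geq1}\|g(y)-f_n(y)\|$, and conclude by measurability of countable infima of measurable functions. You in fact spell out the density argument for the identity more carefully than the paper, which simply asserts it.
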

\begin{proof}
 Since $F$ is measurable, there exists $\left\{f_n\right\}_{n\geq1}$ a Castaing representation of $F$. Clearly for every $n\geq1$, the map
 $y\rightarrow\parallel g(y)-f_n(y)\parallel$ is measurable. Hence, the map $y\rightarrow\inf_{n\geq1}\parallel g(y)-f_n(y)\parallel$ 
 is measurable. Since for every $y\in\mathcal{Y}$, $cl(\left\{f_n(y)\right\}_{n\geq1})=F(y)$ we have, 
 $\inf_{n\geq1}\parallel g(y)-f_n(y)\parallel=d(g(y),F(y))$. Therefore the map $y\rightarrow d(g(y),F(y))$ is measurable.\qed
\end{proof}

\section{Limiting differential inclusion and its properties}
\label{ldi}

In this section we shall define the limiting differential inclusion (or the mean field) whose flow the iterates of recursion \eqref{rec} are 
expected to track. Analogous to the single valued case as in \cite{borkarmark}, the set valued map associated with the limiting differential 
inclusion is nothing but the set valued map obtained by averaging the set valued map $H$ in equation \eqref{rec} w.r.t. the stationary 
distributions of the Markov noise. The above shall be made precise in this section. Throughout this section we assume that $H$ denotes the 
set valued map satisfying $(A1)$ and $\left\{H^{(l)}\right\}_{l\geq1}$, $\left\{h^{(l)}\right\}_{l\geq1}$ are as in Lemma \ref{ctem} and Lemma \ref{param} 
respectively. 

First we define slices of the set valued maps $H$ and $H^{(l)}$ for every $x\in\mathbb{R}^d$.
\begin{definition}
 For every $x\in \mathbb{R}^d$, define,
 \begin{itemize}
  \item [(i)] $H_x:\mathcal{S}\rightarrow\left\{\text{subsets of }\mathbb{R}^d\right\}$ such that, for every $s\in\mathcal{S}$, $H_x(s):=H(x,s)$.
  \item [(ii)] for every $l\geq1$, $H^{(l)}_x:\mathcal{S}\rightarrow\left\{\text{subsets of }\mathbb{R}^d\right\}$ such that, for every 
  $s\in\mathcal{S}$, $\\H^{(l)}_x(s):=H^{(l)}(x,s)$.
  \item [(iii)] for every $l\geq1$, $h^{(l)}_x:\mathcal{S}\times U\rightarrow\mathbb{R}^d$ such that, for every 
  $(s,u)\in\mathcal{S}\times U$, $\\h^{(l)}_x(s,u):=h^{(l)}(x,s,u)$.
 \end{itemize}
\end{definition}

The lemma below summarizes properties that the set valued maps $H_x$ and $H_x^{(l)}$ inherit from the set valued maps $H$ and $H^{(l)}$ 
respectively. 
\begin{lemma}\label{msble}
 For every $x\in \mathbb{R}^d$, 
 \begin{itemize}
  \item [(i)] $H_x$ is a measurable set valued map and for every $s\in\mathcal{S}$, $H_x(s)$ is a convex and compact subset of 
  $\mathbb{R}^d$. Further there exists $C_x:=K(1+\parallel x\parallel)>0$, such that 
  $\\\sup_{s\in\mathcal{S}}\sup_{z\in H_x(s)}\parallel z\parallel\leq C_x$ (which means $H_x$ is bounded). 
  \item [(ii)] for every $l\geq1$, $H^{(l)}_x$ is a measurable set valued map and for every $s\in\mathcal{S}$, $H^{(l)}_x(s)$ 
  is a convex and compact subset of $\mathbb{R}^d$. Further for every $l\geq1$, there exists 
  $C^{(l)}_x:=K^{(l)}(1+\parallel x\parallel)>0$, such that 
  $\sup_{s\in\mathcal{S}}\sup_{z\in H^{(l)}_x(s)}\parallel z\parallel\leq C^{(l)}_x$ (which means $H^{(l)}_x$ is bounded). 
  \item [(iii)] For any probability measure $\mu$ on $(\mathcal{S},\mathscr{B}(\mathcal{S}))$, all measurable selections of $H_x$
  are $\mu$-integrable and hence $H_x$ is $\mu$-integrable.
  \item [(iv)] For any probability measure $\mu$ on $(\mathcal{S},\mathscr{B}(\mathcal{S}))$, all measurable selections of $H^{(l)}_x$
  are $\mu$-integrable and hence $H^{(l)}_x$ is $\mu$-integrable.
  \item [(v)] for every $l\geq1$, $h^{(l)}_x$ is continuous, bounded by the constant $C^{(l)}_x$ where $C^{(l)}_x$ is as in part 
  $(ii)$ of this lemma and for every $s\in\mathcal{S}$, $h^{(l)}_x(s,U)=H^{(l)}_x(s)$.
 \end{itemize}
\end{lemma}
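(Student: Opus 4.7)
The plan is to verify each of the five conclusions by restricting the already-established structure of $H$, $H^{(l)}$, and $h^{(l)}$ to the slice at a fixed $x\in\mathbb{R}^d$. Most of the claims transfer immediately; the only nontrivial work is establishing measurability of $H_x$ and $H_x^{(l)}$ in the sense of the earlier definition.

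For part $(i)$, convexity and compactness of $H_x(s)=H(x,s)$ come from $(A1)(i)$, and the uniform bound with $C_x := K(1+\|x\|)$ is just $(A1)(ii)$ with $x$ held fixed; note that $C_x$ does not depend on $s$, which is exactly why the strengthening in $(A1)(ii)$ is convenient here. For measurability I would invoke the earlier u.s.c.\ lemma to conclude that $H$ is u.s.c.\ on $\mathbb{R}^d\times\mathcal{S}$, and observe that upper semicontinuity restricts to slices, so $H_x$ is u.s.c.\ on $\mathcal{S}$. For a compact-valued u.s.c.\ map into $\mathbb{R}^d$ it is standard that for every closed $C\subseteq\mathbb{R}^d$ the upper inverse $\{s:H_x(s)\cap C\neq\emptyset\}$ is closed (hence Borel); this yields measurability per the definition just stated. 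Part $(ii)$ is handled identically: by Lemma \ref{ctem}$(i)$, $H^{(l)}$ is continuous, its slice $H_x^{(l)}$ is continuous, in particular u.s.c., and so measurable by the same argument; convexity, compactness, and the bound $C_x^{(l)} := K^{(l)}(1+\|x\|)$ follow directly from Lemma \ref{ctem}$(i)$ and $(iii)$.

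For $(iii)$ and $(iv)$, I would note that by $(i)$ and $(ii)$ any measurable selection of $H_x$ (resp.\ $H_x^{(l)}$) takes values in the closed ball of radius $C_x$ (resp.\ $C_x^{(l)}$), so its norm is a uniformly bounded measurable function and therefore $\mu$-integrable for every probability measure $\mu$ on $(\mathcal{S},\mathscr{B}(\mathcal{S}))$. Existence of at least one such selection, needed to conclude $\mu$-integrability of the set-valued map per the definition, is supplied by Lemma \ref{msel}$(i)$. Finally, $(v)$ is immediate from Lemma \ref{param}: continuity of $h_x^{(l)}$ inherits from continuity of $h^{(l)}$ in $(x,s,u)$, the bound $\|h_x^{(l)}(s,u)\|\leq C_x^{(l)}$ is Lemma \ref{param}$(ii)$ with $x$ fixed, and $h_x^{(l)}(s,U)=h^{(l)}(x,s,U)=H^{(l)}(x,s)=H_x^{(l)}(s)$ is Lemma \ref{param}$(i)$ by definition.

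The main (and really only) obstacle is the measurability statement in $(i)$, where one must carefully invoke the fact that a compact-valued u.s.c.\ map has closed upper inverse images of closed sets; once this is in place, everything else in the lemma is a direct transcription of properties already established for the unsliced maps.
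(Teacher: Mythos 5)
Your proposal is correct and follows essentially the same route as the paper: the paper also reduces everything to showing that $\{s\in\mathcal{S}:H_x(s)\cap C\neq\emptyset\}$ is closed for closed $C$, the only cosmetic difference being that it derives this directly from the closed graph of $H_x$ (via $(A1)(iii)$) together with boundedness through an explicit subsequence argument, rather than by citing upper semicontinuity of the slice and the standard fact about upper inverses of compact-valued u.s.c.\ maps. Parts $(ii)$--$(v)$ of your argument coincide with the paper's proof.
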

\begin{proof} Fix some $x\in\mathbb{R}^d$.
 \begin{itemize}
  \item [(i)] For every $s\in\mathcal{S}$, the convexity and compactness of $H_x(s)$ follows from the definition of $H_x$ and 
  assumption $(A1)(i)$. Boundedness of $H_x$ follows from the definition of $H_x$ and assumption $(A1)(ii)$. From definition
  of $H_x$ and assumption $(A1)(iii)$, we deduce that for every sequence $\left\{s_n\right\}_{n\geq1}$ converging to $s$ and for every sequence 
  $\left\{z_n\in H_x(s_n)\right\}_{n\geq1}$ converging to $z$, $z\in H_x(s)$ (which means that the graph of $H_x$ is closed). 
  
  Let $C$ be some closed subset of $\mathbb{R}^d$. Consider a sequence $\left\{s_n\right\}_{n\geq1}$ converging to $s$ such that for every 
  $n\geq1$, $s_n\in \left\{s\in \mathcal{S}:H_x(s)\cap C\neq \emptyset\right\}$. For every $n\geq1$, let $z_n\in H_x(s_n)\cap C$. Since $H_x$ 
  is bounded, $\left\{z_n\right\}_{n\geq1}$ is a bounded sequence in $\mathbb{R}^d$ and hence has a convergent subsequence, say 
  $\left\{z_{n_k}\right\}_{k\geq1}$ converging to $z$. Since $C$ is closed and for every $k\geq1$, $z_{n_k}\in C$, we have $z\in C$. 
  Since the graph of $H_x$ is closed, we  have, $z\in H_x(s)$. Hence $H_x(s)\cap C\neq\emptyset$. Therefore 
  $\left\{s\in \mathcal{S}:H_x(s)\cap C\neq \emptyset\right\}$ is closed and clearly belongs to $\mathscr{B}(\mathcal{S})$, which establishes the 
  measurability of $H_x$.
  \item [(ii)] Fix some $l\geq1$. For every $s\in\mathcal{S}$, the convexity and compactness of $H^{(l)}_x(s)$ follows from the 
  definition of $H^{(l)}_x$ and Lemma \ref{ctem}$(i)$. Boundedness of $H^{(l)}_x$ follows from the definition of $H^{(l)}_x$ and 
  Lemma \ref{ctem}$(iii)$. Let $\left\{s_n\right\}_{n\geq1}$ be a sequence in $\mathcal{S}$ converging to $s$ and 
  $\left\{z_n\in H^{(l)}_x(s_n)\right\}_{n\geq1}$ converging to $z$. Using the continuity of $H^{(l)}$ and the definition of $H^{(l)}_x$, 
  we have, for every $\epsilon>0$, there exists $\delta>0$ such that 
  $\cup_{\left\{n:d(s_n,s)<\delta\right\}}H^{(l)}_x(s_n)\subseteq H^{(l)}_x(s)+\epsilon U$. Since $H^{(l)}_x(s)+\epsilon U$ is closed, we have,  
  $\overline{\cup_{\left\{n:d(s_n,s)<\delta\right\}}H^{(l)}_x(s_n)}\subseteq H^{(l)}_x(s)+\epsilon U$. Hence for every $\epsilon>0$, 
  $z\in H^{(l)}_x(s)+\epsilon U$. Since $H^{(l)}_x(s)$ is compact we have, $z\in H^{(l)}_x(s)$. Hence the graph of $H^{(l)}_x$ is 
  closed and by argument similar to the one in part $(i)$ of this proof we can prove the measurability of $H^{(l)}_x$.
  \item [(iii)] Since $H_x$ is bounded, for every $f\in\mathscr{S}(H_x)$, we have 
  $\sup_{s\in\mathcal{S}}\parallel f(s)\parallel\leq C_x$. Hence $\int_{\mathcal{S}}\parallel f(s)\parallel\mu(ds)\leq C_x$.
  \item [(iv)] Argument is same as in part $(iii)$ of this proof.
  \item [(v)] Follows from definition of $h^{(l)}_x$ and Lemma \ref{param}.\qed
 \end{itemize}
\end{proof}

Recall that $\mathcal{P}(\mathcal{S}\times U)$ denotes the set of probability measures on 
$(\mathcal{S}\times U,\mathscr{B}(\mathcal{S})\otimes\mathscr{B}(U))$ with the Prohorov's topology. 
For any probability measure $\nu\in\mathcal{P}(\mathcal{S}\times U)$, let $\nu_{\mathcal{S}}\in\mathcal{P}(\mathcal{S})$ denote the 
image of $\nu$ under the projection, $\mathcal{S}\times U\rightarrow\mathcal{S}$ (i.e. for every $A\in\mathscr{B}(\mathcal{S})$, 
$\nu_{\mathcal{S}}(A):=\int_{A\times U}\nu(ds,du)$). The next lemma provides a useful characterization of the integral of set valued maps 
$H^{(l)}_x$ in terms of their parametrization, $h^{(l)}_x$.

\begin{lemma}\label{chint}
 For every $x\in\mathbb{R}^d$, for every $l\geq1$, for every probability measure $\mu$ on $(\mathcal{S},\mathscr{B}(\mathcal{S}))$, 
 \begin{equation*}
  \int_{\mathcal{S}}H^{(l)}_x(s)\mu(ds)=\left\{\int_{\mathcal{S}\times U}h^{(l)}_x(s,u)\nu(ds,du): 
  \nu\in \mathcal{P}(\mathcal{S}\times\ U),\ \nu_{\mathcal{S}}=\mu\right\}
 \end{equation*}
\end{lemma}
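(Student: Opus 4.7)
The plan is to prove the two inclusions separately. For $(\supseteq)$ I rely on disintegration plus the convexity of $H^{(l)}_x(s)$; for $(\subseteq)$ I rely on a measurable selection theorem (Filippov-style) that lets me invert the parametrization $h^{(l)}_x$.

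For the $(\supseteq)$ direction, I fix $\nu \in \mathcal{P}(\mathcal{S}\times U)$ with $\nu_{\mathcal{S}}=\mu$. Since $\mathcal{S}$ and $U$ are compact metric (hence Polish), standard disintegration yields a measurable family $\{\nu_s\}_{s\in\mathcal{S}}\subseteq\mathcal{P}(U)$ such that $\nu(ds,du)=\mu(ds)\nu_s(du)$. I define $\bar h(s):=\int_U h^{(l)}_x(s,u)\nu_s(du)$; this is measurable in $s$ (Fubini) and bounded by $C^{(l)}_x$ from Lemma \ref{msble}(v), hence $\mu$-integrable. Because $h^{(l)}_x(s,u)\in H^{(l)}_x(s)$ for every $u\in U$ and $H^{(l)}_x(s)$ is convex and compact, the integral $\bar h(s)$ — a ``continuous convex combination'' of points in $H^{(l)}_x(s)$ — lies in $H^{(l)}_x(s)$. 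Thus $\bar h\in\mathscr{S}(H^{(l)}_x)$, and Fubini gives $\int_{\mathcal{S}\times U}h^{(l)}_x(s,u)\nu(ds,du)=\int_{\mathcal{S}}\bar h(s)\mu(ds)\in\int_{\mathcal{S}}H^{(l)}_x(s)\mu(ds)$.

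For the $(\subseteq)$ direction, I take any $f\in\mathscr{S}(H^{(l)}_x)$ (automatically $\mu$-integrable by Lemma \ref{msble}(iv)) and look for $\nu$. For each $s\in\mathcal{S}$, $f(s)\in H^{(l)}_x(s)=h^{(l)}_x(s,U)$ by Lemma \ref{msble}(v), so the set
\[
F(s):=\{u\in U:h^{(l)}_x(s,u)=f(s)\}
\]
is non-empty; it is also closed in the compact set $U$ by joint continuity of $h^{(l)}_x$. I invoke Filippov's implicit function theorem (or equivalently Kuratowski--Ryll-Nardzewski applied to $F$, whose graph is Borel since $h^{(l)}_x$ is continuous and $f$ is measurable) to obtain a measurable selection $u:\mathcal{S}\to U$ with $h^{(l)}_x(s,u(s))=f(s)$ for every $s$. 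Defining $\nu$ to be the pushforward of $\mu$ under $s\mapsto(s,u(s))$ yields $\nu\in\mathcal{P}(\mathcal{S}\times U)$ with $\nu_{\mathcal{S}}=\mu$, and a change of variables gives $\int_{\mathcal{S}\times U}h^{(l)}_x(s,u)\nu(ds,du)=\int_{\mathcal{S}}h^{(l)}_x(s,u(s))\mu(ds)=\int_{\mathcal{S}}f(s)\mu(ds)$, as required.

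I expect the main (only substantive) obstacle to be the measurable selection step in $(\subseteq)$: one must verify that the hypotheses of Filippov's theorem are met, i.e.\ that $h^{(l)}_x$ is Carath\'eodory (here jointly continuous, so immediate) and that $f$ is measurable with $f(s)\in h^{(l)}_x(s,U)$ (both given). Everything else — disintegration, Fubini, and the fact that an average of points in a compact convex set lies in that set — is routine.
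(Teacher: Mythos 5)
Your proposal is correct and follows essentially the same route as the paper: for $(\supseteq)$ the paper also disintegrates $\nu(ds,du)=\mu(ds)q(s,du)$ and uses convexity and compactness of $H^{(l)}_x(s)=h^{(l)}_x(s,U)$ to conclude that $s\mapsto\int_U h^{(l)}_x(s,u)q(s,du)$ is a measurable selection of $H^{(l)}_x$, and for $(\subseteq)$ it likewise forms the set-valued map $G(s)=\{u\in U:h^{(l)}(x,s,u)=f(s)\}$, verifies its measurability (in an appendix, via Lemma \ref{mtool}), extracts a measurable selection $g$, and takes $\nu$ to be the pushforward of $\mu$ under $s\mapsto(s,g(s))$. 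The measurable-selection step you flag as the only substantive obstacle is exactly the point the paper isolates and proves separately.
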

\begin{proof}
 Fix some $x\in\mathbb{R}^d$, $l\geq1$ and $\mu\in\mathcal{P}(\mathcal{S})$.
 
 Let $z\in\int_{\mathcal{S}}H^{(l)}_x(s)\mu(ds)$. There exists $f\in\mathscr{S}(H^{(l)}_x)$ such that, $z=\int_{\mathcal{S}}f(s)\mu(ds)$. 
 Define a set valued map, $G:\mathcal{S}\rightarrow\left\{\text{subsets of }U\right\}$ such that, for every $s\in\mathcal{S}$, 
 $G(s):=\left\{u\in U: h^{(l)}(x,s,u)=f(s)\right\}$. Clearly for every $s\in\mathcal{S}$, $G(s)$ is non-empty and from the continuity of 
 $h^{(l)}$ we have that $G(s)$ is a closed subset of $U$. Moreover $G$ is a measurable set valued map (the proof of measurability has 
 been provided in appendix \ref{prf_msb}). Since $G$ is measurable, by Lemma \ref{msel}$(i)$ we have, $\mathscr{S}(G)\neq\emptyset$. For any 
 $g\in\mathscr{S}(G)$, we have, for every $s\in\mathcal{S}$, $h^{(l)}(x,s,g(s))=f(s)$. Define 
 $\hat{g}:\mathcal{S}\rightarrow\mathcal{S}\times U$ such that, for every $s\in\mathcal{S}$, $\hat{g}(s)=(s,g(s))$. Let 
 $\nu:=\mu\hat{g}^{-1}$ (push-forward measure). Clearly $\nu_{\mathcal{S}}=\mu$ and 
 $\int_{\mathcal{S}\times U}h^{(l)}(x,s,u)\nu(ds,du))=\int_{\mathcal{S}\times U}h^{(l)}(x,s,u)\mu\hat{g}^{-1}(ds,du))=
 \int_{\mathcal{S}}h^{(l)}(x,\hat{g}(s))\mu(ds)=\int_{\mathcal{S}}h^{(l)}(x,s,g(s))\mu(ds)=
 \int_{\mathcal{S}}f(s)\mu(ds)=z$. 
 
 Let $\nu\in\mathcal{P}(\mathcal{S}\times U)$ such that, $\nu_{\mathcal{S}}=\mu$. By Corollary 3.1.2 in \cite{borkarap}, there exists a 
 $\mu$-a.s. unique measurable map $q:\mathcal{S}\rightarrow\mathcal{P}(U)$ such that, $\nu(ds,du)=\mu(ds)q(s,du)$. By definition of 
 $h^{(l)}_x$ and Lemma \ref{param}, we have, for every $s\in\mathcal{S}$, $H^{(l)}_x(s)=h_x^{(l)}(s,U)=\left\{h_x^{(l)}(s,u):u\in U\right\}$ is 
 convex and compact subset of $\mathbb{R}^d$. Hence for every $s\in\mathcal{S}$, $\int_{U}h_x^{(l)}(s,u)q(s,du)\in H^{(l)}_x(s)$.
 Let $f:\mathcal{S}\rightarrow\mathbb{R}^d$ be such that for every $s\in\mathcal{S}$, $f(s)=\int_{U}h_x^{(l)}(s,u)q(s,du)$. Then,
 clearly $f$ is measurable and $f\in\mathscr{S}(H^{(l)}_x)$. Therefore, $\int_{\mathcal{S}\times U}h^{(l)}_x(s,u)\nu(ds,du)=
 \int_{\mathcal{S}}\big{[}\int_{U}h^{(l)}_x(s,u)q(s,du)\big{]}\mu(ds)=\int_{\mathcal{S}}f(s)\mu(ds)\in
 \int_{\mathcal{S}}H^{(l)}_x(s)\mu(ds)$.\qed
\end{proof}

Recall that for every $x\in\mathbb{R}^d$, $D(x)$ denotes the set of stationary distributions associated with the Markov chain whose 
transition probability kernel is given by $\Pi(x,\cdot)(\cdot)$ as in assumption $(A2)$. We shall now define the set valued maps associated 
with the limiting differential inclusion which is obtained by averaging the set valued map $H$ w.r.t. the stationary distributions of the 
Markov noise. We shall perform a similar operation on set valued maps $H^{(l)}$ which approximate $H$.

\begin{definition}:
 \begin{itemize}
  \item [(i)] Define $\hat{H}:\mathbb{R}^d\rightarrow\left\{\text{subsets of }\mathbb{R}^d\right\}$ such that, for every $x\in\mathbb{R}^d$,
  \begin{equation}\label{lim1}
  \hat{H}(x):=\cup_{\mu\in D(x)}\int_{\mathcal{S}}H_x(s)\mu(ds).
  \end{equation}
  \item [(ii)] For every $l\geq1$, define $\hat{H}^{(l)}:\mathbb{R}^d\rightarrow\left\{\text{subsets of }\mathbb{R}^d\right\}$ such that, for every 
  $x\in\mathbb{R}^d$, 
  \begin{equation}\label{lim2}
  \hat{H}^{(l)}(x):=\cup_{\mu\in D(x)}\int_{\mathcal{S}}H^{(l)}_x(s)\mu(ds).   
  \end{equation}
 \end{itemize}
\end{definition}

For every $l\geq1$, the differential inclusion (DI) associated with the set valued map $\hat{H}^{(l)}$ is given by,
\begin{equation}\label{di1}
 \frac{dx}{dt}\in \hat{H}^{(l)}(x).
\end{equation}
Similarly the differential inclusion (DI) associated with the set valued map $\hat{H}$ is given by,
\begin{equation}\label{di2}
 \frac{dx}{dt}\in \hat{H}(x).
\end{equation}

A function $\bf{x}\rm:\mathbb{R}\rightarrow\mathbb{R}^d$ is said to be a \bf{solution }\rm of DI \eqref{di1} (or DI \eqref{di2}) with initial condition
$x_0\in\mathbb{R}^d$ if, $\bf{x}\rm$ is absolutely continuous, $\bf{x}\rm(0)=x_0$ and for a.e. $t\in\mathbb{R}$, 
$\frac{d\bf{x}\rm(t)}{dt}\in\hat{H}^{(l)}(\bf{x}\rm(t))$ (or $\frac{d\bf{x}\rm(t)}{dt}\in\hat{H}(\bf{x}\rm(t))$).

For every $l\geq1$, for every initial condition $x_0\in\mathbb{R}^d$, let $\Sigma^{(l)}(x_0)$ denote the set of solutions to DI \eqref{di1} 
with initial condition $x_0$, that is, 
\begin{equation*}
 \Sigma^{(l)}(x_0)=\left\{\bf{x}\rm:\mathbb{R}\rightarrow\mathbb{R}^d: \bf{x}\rm\ is\ a\ solution\ of\ DI\ \eqref{di1},\ \bf{x}\rm(0)=x_0\right\}.
\end{equation*}
Similarly, let $\Sigma(x_0)$ denote the set of solutions of DI \eqref{di2} with initial condition $x_0\in\mathbb{R}^d$. For any 
$A\subseteq\mathbb{R}^d$, for every $l\geq1$, $\Sigma^{(l)}(A):=\cup_{x_0\in A}\Sigma^{(l)}(x_0)$ (similarly, 
$\Sigma(A):=\cup_{x_0\in A}\Sigma(x_0)$). 

The next lemma establishes some properties of the set valued map $\hat{H}^{(l)}$ which ensure the existence of solutions to DI \eqref{di1}. 
\begin{lemma}\label{march1}
 For every $l\geq1$, $\hat{H}^{(l)}$ satisfies the following,
 \begin{itemize}
  \item [(i)] for $K^{(l)}$ as in Lemma \ref{ctem}$(iii)$, we have, for every $x\in\mathbb{R}^d$, 
  $\\\sup_{z\in\hat{H}^{(l)}(x)}\parallel z\parallel\leq K^{(l)}(1+\parallel x\parallel)$.
  \item [(ii)] for every $x\in\mathbb{R}^d$, $\hat{H}^{(l)}(x)$ is a non-empty, convex and compact subset of $\mathbb{R}^d$.
  \item [(iii)] for every sequence $\left\{x_n\right\}_{n\geq1}$ in $\mathbb{R}^d$ converging to $x$, for every sequence 
  $\left\{z_n\in\hat{H}^{(l)}(x_n)\right\}_{n\geq1}$ converging to $z$, we have, $z\in\hat{H}^{(l)}(x)$.
 \end{itemize}
\end{lemma}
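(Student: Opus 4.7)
The plan is to push the description of $\hat{H}^{(l)}(x)$ through Lemma \ref{chint}, which rewrites it as
\[
  \hat{H}^{(l)}(x) = \Bigl\{ \int_{\mathcal{S}\times U} h^{(l)}_x(s,u)\, \nu(ds,du) : \nu \in \mathcal{P}(\mathcal{S}\times U),\ \nu_{\mathcal{S}} \in D(x) \Bigr\}.
\]
In this representation $\hat{H}^{(l)}(x)$ becomes the image of the set $E(x) := \{\nu \in \mathcal{P}(\mathcal{S}\times U) : \nu_{\mathcal{S}} \in D(x)\}$ under the map $\nu \mapsto \int h^{(l)}_x\, d\nu$, and all three properties will follow by exploiting compactness of $\mathcal{P}(\mathcal{S}\times U)$, the properties of $D$ recalled after (A2), and the (joint) continuity and local boundedness of $h^{(l)}$.

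For (i), any $z \in \hat{H}^{(l)}(x)$ equals $\int f\,d\mu$ for some $f \in \mathscr{S}(H^{(l)}_x)$ and $\mu \in D(x)$; by Lemma \ref{msble}(ii), $\|f(s)\| \le C^{(l)}_x = K^{(l)}(1+\|x\|)$, so $\|z\| \le K^{(l)}(1+\|x\|)$. For (ii), non-emptiness is immediate because $D(x) \neq \emptyset$ (compactness of $\mathcal{S}$ plus (A2)) and $\mathscr{S}(H^{(l)}_x) \neq \emptyset$ by Lemmas \ref{msble}(ii) and \ref{msel}(i). Convexity follows from the characterization above: if $z_i = \int h^{(l)}_x\, d\nu_i$ with $(\nu_i)_{\mathcal{S}} \in D(x)$, $i=1,2$, then $\lambda\nu_1+(1-\lambda)\nu_2$ lies in $E(x)$ since $D(x)$ is convex and the projection to $\mathcal{P}(\mathcal{S})$ is affine, and integration is linear in $\nu$. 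For compactness, $E(x)$ is closed in the compact space $\mathcal{P}(\mathcal{S}\times U)$ (the projection $\nu \mapsto \nu_{\mathcal{S}}$ is continuous in the Prohorov topology, and $D(x)$ is compact), hence compact; since $h^{(l)}_x$ is continuous and bounded on the compact set $\mathcal{S}\times U$, the map $\nu \mapsto \int h^{(l)}_x\, d\nu$ is continuous, so $\hat{H}^{(l)}(x)$ is the continuous image of a compact set.

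For (iii), take $x_n \to x$ and $z_n \to z$ with $z_n = \int h^{(l)}_{x_n}\, d\nu_n$, $(\nu_n)_{\mathcal{S}} = \mu_n \in D(x_n)$. Compactness of $\mathcal{P}(\mathcal{S}\times U)$ gives a subsequence $\nu_{n_k} \to \nu$ weakly, and continuity of the projection yields $\mu_{n_k} \to \nu_{\mathcal{S}}$. The closed-graph property of $D$ (recalled after (A2)) then forces $\nu_{\mathcal{S}} \in D(x)$. It remains to identify $z = \int h^{(l)}_x\, d\nu$, which I would do via the splitting
\[
  \Bigl\| \int h^{(l)}_{x_{n_k}}\, d\nu_{n_k} - \int h^{(l)}_x\, d\nu \Bigr\| \le \sup_{(s,u)}\|h^{(l)}(x_{n_k},s,u) - h^{(l)}(x,s,u)\| + \Bigl\| \int h^{(l)}_x\, d(\nu_{n_k} - \nu) \Bigr\|.
\]
The second term vanishes by weak convergence since $h^{(l)}_x$ is continuous and bounded. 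For the first, joint continuity of $h^{(l)}$ together with compactness of $\{x_{n_k}\}\cup\{x\} \times \mathcal{S} \times U$ yields uniform continuity in $x$ over $\mathcal{S}\times U$, so this term also vanishes.

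The one step that requires genuine care, and which I expect to be the main obstacle, is the uniform-in-$(s,u)$ convergence $h^{(l)}(x_{n_k},\cdot,\cdot) \to h^{(l)}(x,\cdot,\cdot)$; without it, the weak limit of $\nu_{n_k}$ does not automatically interact with the moving integrand. The remedy is standard — restrict attention to a compact neighborhood of $x$ containing the convergent sequence and invoke the Heine–Cantor theorem — but it is the place where the joint continuity of $h^{(l)}$ from Lemma \ref{param}, as opposed to mere continuity in $(s,u)$ for each fixed $x$, is genuinely used.
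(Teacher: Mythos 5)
Your proposal is correct and follows essentially the same route as the paper: part (i) via the bound on measurable selections from Lemma \ref{msble}(ii), and parts (ii)--(iii) via the representation of $\hat{H}^{(l)}(x)$ from Lemma \ref{chint}, Prohorov compactness of $\mathcal{P}(\mathcal{S}\times U)$, the convexity/closed-graph properties of $D$, and the uniform convergence $h^{(l)}_{x_{n_k}}\to h^{(l)}_x$ obtained by Heine--Cantor on a compact neighborhood (which is exactly how the paper handles it in Appendix \ref{prf_conv}). The only cosmetic difference is that for compactness in (ii) you present $\hat{H}^{(l)}(x)$ as the continuous image of the compact set $E(x)=\lambda^{-1}(D(x))$, whereas the paper argues boundedness plus sequential closedness; these are the same ingredients arranged differently.
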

\begin{proof}
 Fix $l\geq1$.
 \begin{itemize}
  \item [(i)] For every $x\in\mathbb{R}^d$, from Lemma \ref{msble}$(ii)$, we have, for every $f\in\mathscr{S}(H^{(l)}_x)$, for every 
  $\mu\in D(x)$, $\parallel\int_{\mathcal{S}}f(s)\mu(ds)\parallel\leq\int_{\mathcal{S}}\parallel f(s)\parallel\mu(ds)\leq C^{(l)}_x=
  K^{(l)}(1+\parallel x\parallel)$. Using the above and from the definition of $\hat{H}^{(l)}$ we have, 
  $\sup_{z\in \hat{H}^{(l)}(x)}\parallel z\parallel=\\\sup_{\mu\in D(x)}\sup_{f\in\mathscr{S}(H^{(l)}_x)}\parallel 
  \int_{\mathcal{S}}f(s)\mu(ds)\parallel\leq K^{(l)}(1+\parallel x\parallel)$.
  \item [(ii)] For every $x\in\mathbb{R}^d$, $\hat{H}^{(l)}(x)$ is clearly non-empty. Let $z_1,z_2\in\hat{H}^{(l)}(x)$ and 
  $\theta\in(0,1)$. By definition of $\hat{H}^{(l)}$ and by Lemma \ref{chint}, we have that there exist $\nu^{1},\nu^{2}\in 
  \mathcal{P}(\mathcal{S}\times U)$ such that, for $i\in\left\{1,2\right\}$, $z_i=\int_{\mathcal{S}\times U}h^{(l)}_x(s,u)\nu^i(ds,du)$ and 
  $\nu^{i}_{\mathcal{S}}\in D(x)$. Hence,
  \begin{align*}
   \theta z_1+(1-\theta)z_2&=\theta\int_{\mathcal{S}\times U}h^{(l)}_x(s,u)\nu^1(ds,du)+
   (1-\theta)\int_{\mathcal{S}\times U}h^{(l)}_x(s,u)\nu^2(ds,du)\\
                           &=\int_{\mathcal{S}\times U}h^{(l)}_x(s,u)(\theta\nu^1+(1-\theta)\nu^2)(ds,du),
  \end{align*}
  where, $\theta\nu^1+(1-\theta)\nu^2\in\mathcal{P}(\mathcal{S}\times U)$ such that, for every  
  $A\in\mathscr{B}(\mathcal{S})\otimes \mathscr{B}(U)$, $\\(\theta\nu^1+(1-\theta)\nu^2)(A)=\theta\nu^1(A)+(1-\theta)\nu^2(A)$. Further 
  $(\theta\nu^1+(1-\theta)\nu^2)_{\mathcal{S}}=\theta\nu^1_{\mathcal{S}}+(1-\theta)\nu^2_{\mathcal{S}}$ and it belongs to 
  $D(x)$, since for every $x\in\mathbb{R}^d$, $D(x)$ is convex.
  
  By part $(i)$ of this proof, for every $x\in\mathbb{R}^d$, $\hat{H}^{(l)}(x)$ is bounded. Hence in order to show compactness it is enough 
  to prove that for every $x\in\mathbb{R}^d$, $\hat{H}^{(l)}(x)$ is closed. Let $\left\{z_n\right\}_{n\geq1}$ be a sequence in $\hat{H}^{(l)}(x)$ 
  for some $x\in\mathbb{R}^d$, converging to $z$. By Lemma \ref{chint}, we know that for every $n\geq1$, there exists 
  $\nu^n\in\mathcal{P}(\mathcal{S}\times U)$ such that, $z_n=\int_{\mathcal{S}\times U}h^{(l)}_x(s,u)\nu^n(ds,du)$ and 
  $\nu^n_{\mathcal{S}}\in D(x)$. Since $\mathcal{S}\times U$ is a compact metric space, by Prohorov's theorem (see theorem 2.3.1 in 
  \cite{borkarap}), there exists a subsequence $\left\{n_k\right\}_{k\geq1}$ such that, $\left\{\nu^{n_k}\right\}_{k\geq1}$ converges to some 
  $\nu\in\mathcal{P}(\mathcal{S}\times U)$. Clearly $\left\{z_{n_k}\right\}_{k\geq1}$ converges to $z$ and $\left\{\nu^{n_k}_{\mathcal{S}}\right\}_{k\geq1}$ 
  converges to $\nu_{\mathcal{S}}$ in $\mathcal{P}(\mathcal{S})$. Since for every $x\in\mathbb{R}^d$, $D(x)$ is a closed subset of 
  $\mathcal{P}(\mathcal{S})$, we have $\nu_{\mathcal{S}}\in D(x)$. By theorem 2.1.1$(ii)$ in \cite{borkarap} and continuity of 
  $h^{(l)}_x$ we have $\left\{\int_{\mathcal{S}\times U}h^{(l)}_x(s,u)\nu^{n_k}(ds,du)\right\}_{k\geq1}$ converges to 
  $\int_{\mathcal{S}\times U}h^{(l)}_x(s,u)\nu(ds,du)$ and hence $z=\int_{\mathcal{S}\times U}h^{(l)}_x(s,u)\nu(ds,du)$ with 
  $\nu_{\mathcal{S}}\in D(x)$. Therefore $z\in \hat{H}^{(l)}(x)$ and hence $\hat{H}^{(l)}(x)$ is closed.
  \item [(iii)] Let $\left\{x_n\right\}_{n\geq1}$ be a sequence in $\mathbb{R}^d$ converging to $x$ and $\left\{z_n\in \hat{H}^{(l)}(x_n)\right\}_{n\geq1}$ 
  converging to $z$. By Lemma \ref{chint}, we have, for every $n\geq1$, there exists $\nu^n\in\mathcal{P}(\mathcal{S}\times U)$,
  such that $z_n=\int_{\mathcal{S}\times U}h_{x_n}^{(l)}(s,u)\nu^n(ds,du)$ and $\nu^n_{\mathcal{S}}\in D(x_n)$. Since 
  $\mathcal{S} \times U$ is a compact metric space, by Prohorov's theorem we have that, there exists a subsequence, $\left\{n_k\right\}_{k\geq1}$
  such that, $\left\{\nu^{n_k}\right\}_{k\geq1}$ converges to some $\nu\in\mathcal{P}(\mathcal{S}\times U)$. Since $\left\{x_{n_k}\right\}_{k\geq1}$ converges 
  to $x$, we have $\left\{h^{(l)}_{x_{n_k}}\right\}_{k\geq1}$ converges uniformly to $h^{(l)}_x$ (for a proof see appendix \ref{prf_conv}). By uniform convergence 
  of $h^{(l)}_{x_{n_k}}$ to $h^{(l)}_x$ and Theorem 2.1.1$(ii)$ in \cite{borkarap} we have that the sequence 
  $\left\{\int_{\mathcal{S}\times U}h^{(l)}_{x_{n_k}}(s,u)\nu^{n_k}(ds,du)\right\}_{k\geq1}$ converges to 
  $\int_{\mathcal{S}\times U}h^{(l)}_x(s,u)\nu(ds,du)$. Hence $z=\int_{\mathcal{S}\times U}h^{(l)}_x(s,u)\nu(ds,du)$. Clearly 
  $\left\{\nu^{n_k}_{\mathcal{S}}\right\}_{k\geq1}$ converges to $\nu_{\mathcal{S}}$ and by closed graph property of the map $x\rightarrow D(x)$, we 
  have $\nu_{\mathcal{S}}\in D(x)$. Hence $z=\int_{\mathcal{S}\times U}h^{(l)}_x(s,u)\nu(ds,du)$ and $\nu_{\mathcal{S}}\in D(x)$. 
  Therefore by Lemma \ref{chint} and definition of $\hat{H}^{(l)}$, we have, $z\in \hat{H}^{(l)}(x)$.\qed
 \end{itemize}
\end{proof}
The set valued map satisfying properties stated in Lemma \ref{march1}, is called a Marchaud map (see page 62 of \cite{aubinvt}).
By Lemma \ref{march1}, we know that for every $l\geq1$, the set valued maps $\hat{H}^{(l)}$ are Marchaud maps. For such maps their 
associated DIs are known to admit at least one solution through every initial condition (see \cite{aubindi}, chapter 2.1 or 
\cite{benaim1}, section 1.2). Thus for every $l\geq1$, for every initial condition $x_0\in\mathbb{R}^d$, $\Sigma^{(l)}(x_0)\neq\emptyset$.

The next lemma establishes relation between $\hat{H}$ and $\hat{H}^{(l)}$.  
\begin{lemma}\label{approx1}
 For every $x\in\mathbb{R}^d$,
 \begin{itemize}
  \item [(i)] for every $l\geq1$, $\hat{H}(x)\subseteq\hat{H}^{(l+1)}(x)\subseteq\hat{H}^{(l)}(x)$.
  \item [(ii)]$\cup_{\mu\in D(x)}\cap_{l\geq1}\int_{\mathcal{S}}H_x^{(l)}(s)\mu(ds)=\cap_{l\geq1}\hat{H}^{(l)}(x)$.
  \item [(iii)] $\hat{H}(x)=\cap_{l\geq1}\hat{H}^{(l)}(x)$.
 \end{itemize}
\end{lemma}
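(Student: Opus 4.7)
The plan is to handle the three parts in order. Part (i) is essentially immediate: the nesting $H_x(s) \subseteq H_x^{(l+1)}(s) \subseteq H_x^{(l)}(s)$ from Lemma \ref{ctem}(ii) gives $\mathscr{S}(H_x) \subseteq \mathscr{S}(H_x^{(l+1)}) \subseteq \mathscr{S}(H_x^{(l)})$; these inclusions descend to Aumann integrals against any $\mu \in D(x)$, and taking unions over $\mu$ in each term yields the claim.

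For part (ii), the inclusion $\subseteq$ is direct from \eqref{lim2}. For $\supseteq$, fix $z \in \bigcap_l \hat{H}^{(l)}(x)$ and set
\[
B^{(l)} := \bigl\{\mu \in D(x) : z \in \int_{\mathcal{S}} H^{(l)}_x(s)\,\mu(ds)\bigr\}.
\]
These sets are non-empty and, by (i), decreasing in $l$. I would show each $B^{(l)}$ is closed in the compact space $\mathcal{P}(\mathcal{S})$: given $\mu^n \to \mu$ with $\mu^n \in B^{(l)}$, use Lemma \ref{chint} to pick $\nu^n \in \mathcal{P}(\mathcal{S} \times U)$ with $\nu^n_{\mathcal{S}} = \mu^n$ and $z = \int h^{(l)}_x\, d\nu^n$; extract a Prohorov-convergent subsequence $\nu^{n_k} \to \nu$, so $\nu_{\mathcal{S}} = \mu \in D(x)$ (closedness of $D(x)$); pass to the limit using continuity of $h^{(l)}_x$ to obtain $z = \int h^{(l)}_x\, d\nu$, hence $\mu \in B^{(l)}$. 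A nested family of non-empty compact sets has non-empty intersection, producing a single $\mu$ with $z \in \int H_x^{(l)}(s)\mu(ds)$ for every $l$.

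For part (iii), the inclusion $\hat{H}(x) \subseteq \bigcap_l \hat{H}^{(l)}(x)$ follows from (i). Conversely, given $z \in \bigcap_l \hat{H}^{(l)}(x)$, apply (ii) to obtain $\mu \in D(x)$ with $z \in \bigcap_l \int H_x^{(l)}(s)\mu(ds)$; it suffices to show $z \in \int H_x(s)\mu(ds)$. For each $l$, Lemma \ref{chint} provides $\nu^l \in \mathcal{P}(\mathcal{S} \times U)$ with $\nu^l_{\mathcal{S}} = \mu$ and $z = \int h^{(l)}_x\, d\nu^l$. Push $\nu^l$ forward under $(s,u) \mapsto (s, h^{(l)}_x(s,u))$ to obtain $\eta^l \in \mathcal{P}(\mathcal{S} \times \mathbb{R}^d)$, supported on the closed graph $\tilde{H}^{(l)} := \{(s,y) : y \in H_x^{(l)}(s)\}$, with $\eta^l_{\mathcal{S}} = \mu$ and $z = \int y\, \eta^l(ds,dy)$. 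All $\eta^l$ live in the fixed compact $\mathcal{S} \times \{y : \|y\| \leq \tilde{K}(1+\|x\|)\}$, so Prohorov yields a subsequence $\eta^{l_k} \to \eta$. Since $\eta^l$ is supported in $\tilde{H}^{(m)}$ whenever $l \geq m$, Portmanteau forces $\eta$ to be supported in $\tilde{H}^{(m)}$ for every $m$, hence in $\bigcap_m \tilde{H}^{(m)} = \{(s,y) : y \in H_x(s)\}$ by the approximation theorem. Disintegrating $\eta(ds,dy) = \mu(ds)\, q(s,dy)$ via Corollary 3.1.2 of \cite{borkarap}, with $q(s,\cdot)$ concentrated on $H_x(s)$ for $\mu$-a.e. $s$, the map $f(s) := \int y\, q(s,dy)$ lies in $H_x(s)$ by convexity and closedness; modifying on a $\mu$-null set via a measurable selection from Lemma \ref{msel} gives $f \in \mathscr{S}(H_x)$ with $z = \int f\, d\mu$, so $z \in \hat{H}(x)$.

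The main obstacle will be in (iii): pinning down the support of the weak limit $\eta$ on the intersection of the approximating graphs, and then assembling the disintegration kernel into a genuine measurable selection of $H_x$. The remainder is careful bookkeeping with Lemma \ref{chint}, Prohorov's theorem, and the closedness properties of $D(x)$.
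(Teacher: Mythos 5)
Your proof is correct, and while parts (i) and (ii) essentially mirror the paper's argument, part (iii) takes a genuinely different route. In (i) you use exactly the paper's chain $\mathscr{S}(H_x)\subseteq\mathscr{S}(H_x^{(l+1)})\subseteq\mathscr{S}(H_x^{(l)})$. In (ii) the paper extracts a limit point $\mu$ of the measures $\mu^{(l)}$ and then, via a double-indexed family $\nu^{(l,l_k)}$ and Prohorov's theorem, verifies $z\in\int_{\mathcal{S}}H_x^{(l)}(s)\mu(ds)$ for each $l$; your packaging of the same ingredients (Lemma \ref{chint}, Prohorov, closedness of $D(x)$, continuity of $h_x^{(l)}$) as a finite-intersection-property argument on the nested compact sets $B^{(l)}$ is equivalent and arguably cleaner. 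The real divergence is in (iii): the paper bounds $d\bigl(z,\int_{\mathcal{S}}H_x(s)\mu(ds)\bigr)$ by $\int_{\mathcal{S}}d(f^{(l)}(s),H_x(s))\mu(ds)$ using the interchange of infimum and integral (Lemma 1.3.12 of \cite{shoumei}) together with the Castaing representation, sends this to zero by observation (b) after Lemma \ref{ctem} and bounded convergence, and concludes from closedness of the Aumann integral (Lemma \ref{cldint}); you instead push the parametrizing measures onto the graphs of the $H_x^{(l)}$, take a weak limit supported on the graph of $H_x$ via Portmanteau, and disintegrate to manufacture a selection through fiberwise barycenters. Both are sound. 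Your route is more self-contained measure-theoretically and recycles the Prohorov/pushforward toolkit already deployed in Lemma \ref{march1}, but it must justify two routine facts the paper's route avoids: that the disintegration kernel $q(s,\cdot)$ is $\mu$-a.e.\ concentrated on the fibers $H_x(s)$ of the closed graph, and that the barycenter of a probability measure on a compact convex set lies in that set; the paper's route is shorter once the Aumann-integral lemmas it cites are granted. One cosmetic point: the decrease of your sets $B^{(l)}$ follows from the nesting of $\int_{\mathcal{S}}H_x^{(l)}(s)\mu(ds)$ for a \emph{fixed} $\mu$ (which is what the proof of (i) establishes), not from the statement of (i) itself.
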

\begin{proof}
 Fix $x\in\mathbb{R}^d$.
    \begin{itemize}
      \item [(i)] Fix $l\geq1$.
      By Lemma \ref{ctem}$(ii)$, for every $(x,s)\in\mathbb{R}^d\times\mathcal{S}$, $H(x,s)\subseteq H^{(l+1)}(x,s)\subseteq 
      H^{(l)}(x,s)$. Hence, for every $x\in\mathbb{R}^d$, by definition of $H_x$ and $H^{(l)}_x$ we have, for every $s\in\mathcal{S}$,
      $H_x(s)\subseteq H^{(l+1)}_x(s)\subseteq H^{(l)}_x(s)$. Therefore, $\mathscr{S}(H_x)\subseteq\mathscr{S}(H_x^{(l+1)})\subseteq
      \mathscr{S}(H^{(l)}_x)$. From the above we obtain that, for every $\mu\in D(x)$, $\int_{\mathcal{S}}H_x(s)\mu(ds)\subseteq
      \int_{\mathcal{S}}H_x^{(l+1)}(s)\mu(ds)\subseteq\int_{\mathcal{S}}H^{(l)}_x(s)\mu(ds)$ which leads to the desired conclusion.
      \item [(ii)] Clearly by definition of the set valued map $\hat{H}^{(l)}$ we have that 
      $\cup_{\mu\in D(x)}\cap_{l\geq1}\int_{\mathcal{S}}H_x^{(l)}(s)\mu(ds)\subseteq\cap_{l\geq1}\hat{H}^{(l)}(x)$. Let 
      $z\in\cap_{l\geq1}\hat{H}^{(l)}(x)$. Then for every $l\geq1$, there exists $\mu^{(l)}\in D(x)$ such that $z\in\int_{\mathcal{S}}
      H^{(l)}_x(s)\mu^{(l)}(ds)$. By Prohorov's theorem the sequence of probability measures $\left\{\mu^{(l)}\right\}_{l\geq1}$ is relatively 
      compact in $\mathcal{P}(\mathcal{S})$ and hence has a limit point, say $\mu$. Let $\left\{\mu^{(l_k)}\right\}_{k\geq1}$ denote a subsequence 
      with $\mu$ as its limit and since $D(x)$ is a closed subset of $\mathcal{P}(\mathcal{S})$, we have that $\mu\in D(x)$. For every 
      $l\geq1$, for every $k$ such that $l_k\geq l$, we have that $\mathscr{S}(H_x^{(l_k)})\subseteq\mathscr{S}(H_x^{(l)})$. Thus for 
      every $l\geq1$, for every $k$ such that $l_k\geq l$, we have that $z\in\int_{\mathcal{S}}H^{(l)}_x(s)\mu^{(l_k)}(ds)$. By Lemma 
      \ref{chint}, we have that for every $l\geq1$, for every $k$ such that $l_k\geq l$, there exists $\nu^{(l,l_k)}\in
      \mathcal{P}(\mathcal{S}\times U)$ with $\nu^{(l,l_k)}_{\mathcal{S}}=\mu^{(l_k)}$ and $z=\int_{\mathcal{S}\times U}h^{(l)}(x,s,u)
      \nu^{(l,l_k)}(ds,du)$. 
      
      Fix $l\geq1$. Since $\mathcal{S}\times U$ is a compact metric space, by Prohorov's theorem we have that the 
      sequence $\left\{\nu^{(l,l_k)}\right\}_{k\geq1}$ is relatively compact in $\mathcal{P}(\mathcal{S}\times U)$. Let $\nu^{(l)}$ denote a limit 
      point of the above sequence and its clear that $\nu^{(l)}_{\mathcal{S}}=\mu$. Then clearly $z=\int_{\mathcal{S}\times U}
      h^{(l)}(x,s,u)\nu^{(l)}(ds,du)$ and hence by Lemma \ref{chint}, $z\in\int_{\mathcal{S}}H^{(l)}_x(s)\mu(ds)$.
      Since $l\geq1$ is arbitrary, we get that for $\mu\in D(x)$, for every $l\geq1$, $z\in\int_{\mathcal{S}}H^{(l)}_x(s)\mu(ds)$.
      Thus $z\in\cup_{\mu\in D(x)}\cap_{l\geq1}\int_{\mathcal{S}}H_x^{(l)}(s)\mu(ds)$ and hence $\cap_{l\geq1}\hat{H}^{(l)}(x)\subseteq
      \cup_{\mu\in D(x)}\cap_{l\geq1}\int_{\mathcal{S}}H_x^{(l)}(s)\mu(ds)$.
      \item [(iii)] By part$(i)$ of this lemma we have, for every $x\in\mathbb{R}^d$, $\hat{H}(x)\subseteq\cap_{l\geq1}\hat{H}^{(l)}(x)$. 
      Therefore its enough to show that for every $x\in\mathbb{R}^d$, $\cap_{l\geq1}\hat{H}^{(l)}(x)\subseteq\hat{H}(x)$. Fix 
      $x\in\mathbb{R}^d$ and $\mu\in D(x)$. Let $z\in\cap_{l\geq1}\int_{\mathcal{S}}H^{(l)}_x(s)\mu(ds)$. For every $l\geq1$, there 
      exists $f^{(l)}\in\mathscr{S}(H^{(l)}_x)$ such that $z=\int_{\mathcal{S}}f^{(l)}(s)\mu(ds)$. Let 
      $d(z,\int_{\mathcal{S}}H_x(s)\mu(ds)):=\inf\left\{\parallel z-y\parallel: y\in\int_{\mathcal{S}}H_x(s)\mu(ds)\right\}$. Then, for every 
      $l\geq1$,
      \begin{align*}
       d(z,\int_{\mathcal{S}}H_x(s)\mu(ds))&=\inf_{f\in\mathscr{S}(H_x)}\parallel\int_{\mathcal{S}}(f^{(l)}(s)-f(s))\mu(ds)\parallel\\
                                           &\leq\inf_{f\in\mathscr{S}(H_x)}\int_{\mathcal{S}}\parallel f^{(l)}(s)-f(s)\parallel\mu(ds)\\
                                           &=\int_{\mathcal{S}}\inf_{f\in\mathscr{S}(H_x)}\parallel f^{(l)}(s)-f(s)\parallel\mu(ds),
      \end{align*}
      where the last equality follows from Lemma 1.3.12 in \cite{shoumei}. By Lemma \ref{msble}$(i)$, $H_x$ is a measurable set valued 
      map and hence by Lemma \ref{msel}$(ii)$, $H_x$ admits a Castaing representation, $\left\{f_n\right\}_{n\geq1}$. Hence for every 
      $s\in\mathcal{S}$, $d(f^{(l)}(s),H_x(s))\leq\inf_{f\in\mathscr{S}(H_x)}\parallel f^{(l)}(s)-f(s)\parallel\leq
      \inf_{n\geq1}\parallel f^{(l)}(s)-f_n(s)\parallel=d(f^{(l)}(s),H_x(s))$ where $s\rightarrow d(f^{(l)}(s),H_x(s))$ is as in Lemma 
      \ref{mtool}. Therefore, for every $l\geq1$,
      \begin{equation*}
       d(z,\int_{\mathcal{S}}H_x(s)\mu(ds))\leq\int_{\mathcal{S}}d(f^{(l)}(s),H_x(s))\mu(ds).
      \end{equation*}      
      By observation (b) stated after Lemma \ref{ctem}, we have that, for every $s\in\mathcal{S}$, 
      $\\\lim_{l\to\infty}d(f^{(l)}(s),H_x(s))=0$. Further by observation (a) stated after Lemma \ref{ctem}, Lemma \ref{msble}$(i)-(ii)$, 
      we have that, for every $s\in\mathcal{S}$, $d(f^{(l)}(s),H_x(s))\leq(\tilde{K}+K)(1+\parallel x\parallel)$. Thus, by 
      bounded convergence theorem we get, $\lim_{l\to\infty}\int_{\mathcal{S}}d(f^{(l)}(s),H_x(s))\mu(ds)=0$. Hence, 
      $d(z,\int_{\mathcal{S}}H_x(s)\mu(ds))=0$. By Lemma \ref{cldint}, we have that $\int_{\mathcal{S}}H_x(s)\mu(ds)$ is closed. 
      Therefore $z\in\int_{\mathcal{S}}H_x(s)\mu(ds)$. 
      
      From the arguments in the previous paragraph, we have that for every $x\in\mathbb{R}^d$, for every $\mu\in D(x)$, 
      $\cap_{l\geq1}\int_{\mathcal{S}}H^{(l)}_x(s)\mu(ds)\subseteq\int_{\mathcal{S}}H_x(s)\mu(ds)$. Thus, for every $x\in\mathbb{R}^d$, 
      $\\\cup_{\mu\in D(x)}\cap_{l\geq1}\int_{\mathcal{S}}H^{(l)}_x(s)\mu(ds)\subseteq\cup_{\mu\in D(x)}\int_{\mathcal{S}}H_x(s)\mu(ds)$. 
      Therefore, by part $(ii)$ of this lemma we have that for every $x\in\mathbb{R}^d$, 
      $\cap_{l\geq1}\hat{H}^{(l)}(x)\subseteq\hat{H}(x)$.\qed
    \end{itemize}
\end{proof}

In the next lemma we show that the set valued map $\hat{H}$ is a Marchaud map.
\begin{lemma}\label{march2}
 The set valued map $\hat{H}$ satisfies the following:
 \begin{itemize}
  \item [(i)] For $K>0$ as in Lemma \ref{msble}$(i)$, for every $x\in\mathbb{R}^d$, 
  $\sup_{z\in\hat{H}(x)}\parallel z\parallel\leq K(1+\parallel x\parallel)$.
  \item [(ii)] For every $x\in\mathbb{R}^d$, $\hat{H}(x)$  is a non-empty, convex and compact subset of $\mathbb{R}^d$.
  \item [(iii)] For every sequence $\left\{x_n\right\}_{n\geq1}$ in $\mathbb{R}^d$ converging to $x$, for every sequence 
  $\left\{z_n\in\hat{H}(x_n)\right\}_{n\geq1}$ converging to $z$, we have, $z\in\hat{H}(x)$.
 \end{itemize}
\end{lemma}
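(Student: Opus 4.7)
The plan is to deduce all three Marchaud properties for $\hat{H}$ by transferring them from the approximating family $\{\hat{H}^{(l)}\}_{l\geq 1}$ via the identity $\hat{H}(x)=\cap_{l\geq 1}\hat{H}^{(l)}(x)$ established in Lemma \ref{approx1}(iii). The advantage is that each $\hat{H}^{(l)}$ is already known to be Marchaud by Lemma \ref{march1}, so most of the work reduces to taking intersections of nested compact convex sets and exploiting the closed graph property that is already in hand.

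For part (i), I would argue directly from the definition: if $z\in\hat{H}(x)$, there exist $\mu\in D(x)$ and $f\in\mathscr{S}(H_x)$ with $z=\int_{\mathcal{S}}f(s)\mu(ds)$; then Lemma \ref{msble}(i) gives $\|f(s)\|\le C_x=K(1+\|x\|)$ pointwise, whence $\|z\|\le K(1+\|x\|)$ by the triangle inequality for the integral. Taking the supremum over $\mu$ and $f$ yields the bound.

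For part (ii), non-emptiness is immediate since $D(x)\neq\emptyset$ (as noted in the discussion of (A2)) and any measurable selection of $H_x$, which exists by Lemma \ref{msel}(i) and Lemma \ref{msble}(i), produces an element of $\int_{\mathcal{S}}H_x(s)\mu(ds)$. The subtle property is \emph{convexity}, because $\hat{H}(x)=\cup_{\mu\in D(x)}\int_{\mathcal{S}}H_x(s)\mu(ds)$ is written as a union of convex sets, which is not obviously convex; this will be the main obstacle if one tries to argue directly. The clean way around it is to use Lemma \ref{approx1}(iii): since each $\hat{H}^{(l)}(x)$ is convex by Lemma \ref{march1}(ii), the intersection $\cap_{l\geq 1}\hat{H}^{(l)}(x)=\hat{H}(x)$ is convex as well. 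Compactness likewise follows: each $\hat{H}^{(l)}(x)$ is compact, so the intersection is closed; combined with the bound in (i) this gives compactness.

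For part (iii), let $x_n\to x$ and $z_n\in\hat{H}(x_n)$ with $z_n\to z$. Fix any $l\geq 1$. By Lemma \ref{approx1}(i), $\hat{H}(x_n)\subseteq\hat{H}^{(l)}(x_n)$, so $z_n\in\hat{H}^{(l)}(x_n)$, and the closed graph property of $\hat{H}^{(l)}$ in Lemma \ref{march1}(iii) yields $z\in\hat{H}^{(l)}(x)$. Since $l$ was arbitrary, $z\in\cap_{l\geq 1}\hat{H}^{(l)}(x)$, which equals $\hat{H}(x)$ by Lemma \ref{approx1}(iii). This completes the argument; the only genuinely delicate point is the convexity step, which is handled by the intersection characterization rather than by a direct manipulation of stationary measures.
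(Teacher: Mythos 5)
Your proposal is correct and follows essentially the same route as the paper: part (i) by the direct integral bound via measurable selections, part (ii) by combining non-emptiness of $D(x)$ and $\mathscr{S}(H_x)$ with the intersection identity $\hat{H}(x)=\cap_{l\geq1}\hat{H}^{(l)}(x)$ from Lemma \ref{approx1}(iii) together with Lemma \ref{march1}(ii), and part (iii) by passing through $\hat{H}^{(l)}$ and its closed graph property. Your remark that convexity is the point where a direct argument on the union over $D(x)$ would be awkward, and that the intersection characterization sidesteps this, accurately reflects why the paper structures the proof this way.
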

\begin{proof}:
 \begin{itemize}
  \item [(i)] For every $x\in\mathbb{R}^d$, from Lemma \ref{msble}$(i)$, we have, for every $f\in\mathscr{S}(H_x)$, for every 
  $\mu\in D(x)$, $\parallel\int_{\mathcal{S}}f(s)\mu(ds)\parallel\leq\int_{\mathcal{S}}\parallel f(s)\parallel\mu(ds)\leq C_x=
  K(1+\parallel x\parallel)$. Using the above and from the definition of $\hat{H}$ we have, 
  $\sup_{z\in \hat{H}(x)}\parallel z\parallel=\sup_{\mu\in D(x)}\sup_{f\in\mathscr{S}(H_x)}\parallel
  \int_{\mathcal{S}}f(s)\mu(ds)\parallel\\\leq K(1+\parallel x\parallel)$.
  \item [(ii)] For every $x\in\mathbb{R}^d$, by Lemma \ref{msble}$(iii)$, for every $\mu\in D(x)$, $H_x$ is $\mu$-integrable. Hence 
  $\hat{H}_x$ is non-empty. By Lemma \ref{march1}$(ii)$ and Lemma \ref{approx1}$(iii)$, we have that, $\hat{H}(x)$ is a convex and 
  compact subset of $\mathbb{R}^d$.
  \item [(iii)] Let $\left\{x_n\right\}_{n\geq1}$ be a sequence in $\mathbb{R}^d$, converging to $x$ and $\left\{z_n\in \hat{H}(x_n)\right\}_{n\geq1}$ 
  converging to $z$. By Lemma \ref{approx1}$(i)$, for every $l\geq1$, for every $n\geq1$, $z_n\in\hat{H}^{(l)}(x_n)$. Hence by Lemma 
  \ref{march1}$(iii)$, we have that, for every $l\geq1$, $z\in\hat{H}^{(l)}(x)$. Therefore by Lemma \ref{approx1}$(iii)$, we have, 
  $z\in\hat{H}(x)$.\qed 
 \end{itemize}
\end{proof}

By Lemma \ref{march2}, we know that the set valued map $\hat{H}$ is a Marchaud map. Hence for every $x_0\in\mathbb{R}^d$, 
$\Sigma(x_0)\neq\emptyset$.

By definition, $\Sigma^{(l)}(x_0)$ and $\Sigma(x_0)$ are subsets of $\mathcal{C}(\mathbb{R},\mathbb{R}^d)$, the space of all $\mathbb{R}^d$ 
valued continuous functions on $\mathbb{R}$. $\mathcal{C}(\mathbb{R},\mathbb{R}^d)$ is a complete metric space for distance $\bf{D}\rm$, 
defined by,
\begin{equation*}
 \bf{D}\rm(\bf{x}\rm,\bf{z}\rm):=\sum_{k=1}^{\infty}\frac{1}{2^k}\min(\parallel \bf{x}\rm-\bf{z}\rm\parallel_{[-k,k]},1),
\end{equation*}
where $\parallel\cdot\parallel_{[-k,k]}$ denotes the sup norm on $\mathcal{C}([-k,k],\mathbb{R}^d)$.

The next lemma summarizes some properties of the solutions of DI \eqref{di1},\eqref{di2} and the relation between them.

\begin{lemma}\hfill
\begin{itemize}
 \item [(i)] For every $A\subseteq\mathbb{R}^d$ compact, $\Sigma(A)$ and for every $l\geq1$, $\Sigma^{(l)}(A)$ are compact subsets of 
 $\mathcal{C}(\mathbb{R},\mathbb{R}^d)$.
 \item [(ii)] For every $x_0\in\mathbb{R}^d$, for every $l\geq1$, $\Sigma(x_0)\subseteq\Sigma^{(l+1)}(x_0)\subseteq\Sigma^{(l)}(x_0)$.
 \item [(iii)] For every $x_0\in\mathbb{R}^d$, $\Sigma(x_0)=\cap_{l\geq1}\Sigma^{(l)}(x_0)$.
\end{itemize}
\end{lemma}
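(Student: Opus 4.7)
The plan is three-pronged and tackles the parts in reverse order of difficulty. For \textbf{part (ii)}, I would appeal directly to Lemma \ref{approx1}(i): since $\hat{H}(x)\subseteq\hat{H}^{(l+1)}(x)\subseteq\hat{H}^{(l)}(x)$ pointwise, any absolutely continuous function whose derivative a.e.\ lies in the smaller set also satisfies the differential inclusion for the larger one, and the initial condition is unchanged, so both inclusions between the solution sets are immediate.

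For \textbf{part (iii)}, the inclusion $\Sigma(x_0)\subseteq\cap_{l\geq 1}\Sigma^{(l)}(x_0)$ is an immediate consequence of (ii). For the reverse, fix $\mathbf{x}\in\cap_{l\geq 1}\Sigma^{(l)}(x_0)$. For each $l\geq 1$ there is a Lebesgue-null set $N_l$ off which $\frac{d\mathbf{x}(t)}{dt}\in\hat{H}^{(l)}(\mathbf{x}(t))$. Since $N=\cup_{l\geq 1} N_l$ is still null, for $t\notin N$ the derivative lies in $\cap_{l\geq 1}\hat{H}^{(l)}(\mathbf{x}(t))=\hat{H}(\mathbf{x}(t))$ by Lemma \ref{approx1}(iii). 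Combined with $\mathbf{x}(0)=x_0$ and absolute continuity this gives $\mathbf{x}\in\Sigma(x_0)$.

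\textbf{Part (i)} is the main work. I would handle $\Sigma^{(l)}(A)$ first; the argument for $\Sigma(A)$ is verbatim, using Lemma \ref{march2} in place of Lemma \ref{march1}. Fix $T>0$ and let $R=\sup_{x\in A}\|x\|$. For any $\mathbf{x}\in\Sigma^{(l)}(A)$, the linear growth estimate in Lemma \ref{march1}(i) combined with Gronwall's inequality yields a bound $\sup_{|t|\leq T}\|\mathbf{x}(t)\|\leq (R+K^{(l)} T)e^{K^{(l)} T}$ independent of the particular solution, and plugging this back in gives a uniform Lipschitz constant $L(T,l)$ on $[-T,T]$. Equicontinuity together with pointwise boundedness lets me apply Arzel\`a--Ascoli on each $[-k,k]$ and extract, via a diagonal argument, a subsequence of any $\{\mathbf{x}_n\}\subseteq\Sigma^{(l)}(A)$ that converges in the metric $\mathbf{D}$ to some $\mathbf{x}\in\mathcal{C}(\mathbb{R},\mathbb{R}^d)$. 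Since $A$ is closed, $\mathbf{x}(0)\in A$.

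The remaining and \textbf{main obstacle} is to show that the limit $\mathbf{x}$ is itself a solution. I would exploit the uniform $L^\infty$ bound on $\{\dot{\mathbf{x}}_n\}$ on $[-T,T]$ to extract, for each $T$, a further subsequence along which $\dot{\mathbf{x}}_n$ converges weakly in $L^2([-T,T],\mathbb{R}^d)$; passing $\mathbf{x}_n(t)=\mathbf{x}_n(0)+\int_0^t \dot{\mathbf{x}}_n(s)\,ds$ to the limit identifies the weak limit with $\dot{\mathbf{x}}$. Then by Mazur's lemma one produces convex combinations of the $\dot{\mathbf{x}}_n$ converging to $\dot{\mathbf{x}}$ a.e., and the closed-graph property of $\hat{H}^{(l)}$ (Lemma \ref{march1}(iii)) together with convexity and compactness of its values (Lemma \ref{march1}(ii)) forces $\dot{\mathbf{x}}(t)\in\hat{H}^{(l)}(\mathbf{x}(t))$ for a.e.\ $t\in[-T,T]$. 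Since $T$ was arbitrary, $\mathbf{x}\in\Sigma^{(l)}(A)$. This closure-of-solutions step is the classical convergence theorem for Marchaud differential inclusions, so if desired it can be invoked directly from \cite{aubindi} (chapter 2) rather than reproved in detail.
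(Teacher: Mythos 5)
Your proposal is correct and follows essentially the same route as the paper: parts (ii) and (iii) are proved exactly as in the paper via Lemma \ref{approx1}(i) and (iii) (pointwise inclusion of the right-hand sides, then intersecting over $l$ off a countable union of null sets). For part (i) the paper simply invokes Lemma 3.1 of \cite{benaim1}, whereas you reprove that result in full (Gronwall bound, Arzel\`a--Ascoli with a diagonal extraction, weak $L^2$ compactness of the derivatives plus Mazur's lemma and the Marchaud properties from Lemmas \ref{march1} and \ref{march2} to close the solution set); this is the standard compactness argument for Marchaud inclusions and, as you note, could equally be cited from \cite{aubindi}, so the content is the same.
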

\begin{proof}:
 \begin{itemize}
  \item [(i)] Follows from Lemma 3.1 in \cite{benaim1}. 
  \item [(ii)] Fix some $x_0\in\mathbb{R}^d$ and $l\geq1$. Let $\bf{x}\rm\in\Sigma(x_0)$. Then for $a.e.\ t\in\mathbb{R}$, 
  $\frac{d\bf{x}\rm(t)}{dt}\in\hat{H}(\bf{x}\rm(t))$. By Lemma \ref{approx1}$(i)$, we have, for every $t\geq0$, 
  $\hat{H}(\bf{x}\rm(t))\subseteq\hat{H}^{(l)}(\bf{x}\rm(t))$. Therefore, for $a.e.\ t\in\mathbb{R}$, 
  $\frac{d\bf{x}\rm(t)}{dt}\in\hat{H}^{(l)}(\bf{x}\rm(t))$. Hence, $\bf{x}\rm\in\Sigma^{(l)}(x_0)$. A similar argument gives us that, 
  $\Sigma^{(l+1)}(x_0)\subseteq\Sigma^{(l)}(x_0)$.
  \item [(iii)] By part $(ii)$ of this lemma we have that, for every $x_0\in\mathbb{R}^d$, 
  $\Sigma(x_0)\subseteq\cap_{l\geq1}\Sigma^{(l)}(x_0)$. Therefore, it is enough to show that, for every $x_0\in\mathbb{R}^d$, 
  $\cap_{l\geq1}\Sigma^{(l)}(x_0)\subseteq\Sigma(x_0)$. Let $\bf{x}\rm\in\cap_{l\geq1}\Sigma^{(l)}(x_0)$ for some $x_0\in\mathbb{R}^d$. 
  Then for every $l\geq1$, for $a.e.\ t\in\mathbb{R}$, $\frac{d\bf{x}\rm(t)}{dt}\in\hat{H}^{(l)}(\bf{x}\rm(t))$. Thus for 
  $a.e.\ t\in\mathbb{R}$, $\frac{d\bf{x}\rm(t)}{dt}\in\cap_{l\geq1}\hat{H}^{(l)}(\bf{x}\rm(t))$. By Lemma \ref{approx1}$(iii)$, we have 
  that, for $a.e.\ t\in\mathbb{R}$, $\frac{d\bf{x}\rm(t)}{dt}\in\hat{H}(\bf{x}\rm(t))$. Hence $\bf{x}\rm\in\Sigma(x_0)$.\qed
 \end{itemize}
\end{proof}

\section{Space of probability measure valued functions}
\label{sopmvf}
In this section we define the space of probability measure valued measurable functions on $[0,\infty)$ and introduce an appropriate 
topology on this space. Such spaces are used in the theory of existence of optimal control for diffusions and can also be found in 
\cite{borkarmark} and \cite{borkaropt}. We shall use the following in the analysis of recursion \eqref{rec} presented in the next section.

Let $\mathcal{M}(\mathcal{S}\times U)$ denote the set of all functions $\gamma:[0,\infty)\rightarrow\mathcal{P}(\mathcal{S}\times U)$, 
measurable. Formally,
\begin{equation*}
 \mathcal{M}(\mathcal{S}\times U):=\left\{\gamma:[0,\infty)\rightarrow\mathcal{P}(\mathcal{S}\times U):\ \gamma\ is\ measurable\right\}.
\end{equation*}
Similarly, define $\mathcal{M}(\mathcal{S})$, the set of all functions $\gamma:[0,\infty)\rightarrow\mathcal{P}(\mathcal{S})$, measurable.
Formally, 
\begin{equation*}
 \mathcal{M}(\mathcal{S}):=\left\{\gamma:[0,\infty)\rightarrow\mathcal{P}(\mathcal{S}):\ \gamma\ is\ measurable\right\}.
\end{equation*}

Let $\tau_{\mathcal{S}\times U}$ denote the topology on $\mathcal{M}(\mathcal{S}\times U)$ which is the coarsest topology that 
renders continuous the maps $\gamma\in\mathcal{M}(\mathcal{S}\times U)\rightarrow\int_{0}^{T}g(t)[\int_{\mathcal{S}\times U}f(s,u)
\gamma(t)(ds,du)]dt\in\mathbb{R}$ for all $f\in\mathcal{C}(\mathcal{S}\times U,\mathbb{R})$, for all $T>0$ and for all 
$g\in\mathbb{L}_2[0,T]$.

Similarly let $\tau_{\mathcal{S}}$ denote the topology on $\mathcal{M}(\mathcal{S})$ which is the coarsest topology that renders 
continuous the maps $\gamma\in\mathcal{M}(\mathcal{S})\rightarrow\int_{0}^{T}g(t)[\int_{\mathcal{S}}f(s)\gamma(t)(ds)]dt\in\mathbb{R}$ 
for all $f\in\mathcal{C}(\mathcal{S},\mathbb{R})$, for all $T>0$ and for all $g\in\mathbb{L}_2[0,T]$.

The next result is a well known metrization lemma for the above defined topological spaces. 

\begin{lemma}
 \emph{[Metrization]} 
 \begin{itemize}
  \item [(i)] The topological space $(\mathcal{M}(\mathcal{S}\times U),\tau_{\mathcal{S}\times U})$ is compact metrizable.
  \item [(ii)] The topological space $(\mathcal{M}(\mathcal{S}),\tau_{\mathcal{S}})$ is compact metrizable.
 \end{itemize}
\end{lemma}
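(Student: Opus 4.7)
The plan is to identify, for each finite horizon $T > 0$, the restricted space of measurable $\mathcal{P}(\mathcal{S}\times U)$-valued functions on $[0,T]$ with a closed subset of the compact metrizable space $\mathcal{P}([0,T]\times\mathcal{S}\times U)$, and then to glue these identifications across $T = 1, 2, \ldots$ via a countable product. Concretely, I would define, for each $T > 0$, a map $\Phi_T: \mathcal{M}(\mathcal{S}\times U) \to \mathcal{P}([0,T]\times\mathcal{S}\times U)$ sending $\gamma$ to the normalized product measure
\[
\Phi_T(\gamma)(A) := \frac{1}{T}\int_0^T\!\!\int_{\mathcal{S}\times U}\mathbf{1}_A(t,s,u)\,\gamma(t)(ds,du)\,dt.
\]
By the disintegration result (Corollary 3.1.2 in \cite{borkarap}, already invoked in the paper), the image $K_T := \Phi_T(\mathcal{M}(\mathcal{S}\times U))$ equals the set of $\eta\in\mathcal{P}([0,T]\times\mathcal{S}\times U)$ whose first marginal is $\lambda_{[0,T]}/T$. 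This marginal condition is preserved under Prohorov convergence, so $K_T$ is closed in the compact metric space $\mathcal{P}([0,T]\times\mathcal{S}\times U)$, and is therefore itself compact and metrizable.

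Next I would match topologies: show that under $\Phi_T$, the portion of $\tau_{\mathcal{S}\times U}$ involving only $[0,T]$-integrals coincides with the weak (Prohorov) topology on $K_T$. In one direction, test functions $g(t)f(s,u)$ with $g\in C[0,T]\subseteq\mathbb{L}_2[0,T]$ and $f\in C(\mathcal{S}\times U)$ generate the weak topology on $\mathcal{P}([0,T]\times\mathcal{S}\times U)$ by Stone--Weierstrass (their linear span is dense in $C([0,T]\times\mathcal{S}\times U)$), so $\tau_{\mathcal{S}\times U}$-convergence implies weak convergence of the $\Phi_T(\gamma_n)$. In the reverse direction, for general $g\in\mathbb{L}_2[0,T]$ I would approximate $g$ in $\mathbb{L}_2$-norm by continuous $g_k$ and use Cauchy--Schwarz together with the common first marginal $\lambda_{[0,T]}/T$ to obtain the uniform-in-$\eta\in K_T$ bound
\[
\Bigl|\int(g-g_k)(t)f(s,u)\,\eta(dt,ds,du)\Bigr|\le \|f\|_\infty\sqrt{T}\,\|g-g_k\|_{\mathbb{L}_2[0,T]},
\]
from which weak convergence implies $\tau_{\mathcal{S}\times U}$-convergence along the $[0,T]$ slice.

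Finally I would assemble: the diagonal map $\Phi := (\Phi_1,\Phi_2,\ldots)$ sends $\mathcal{M}(\mathcal{S}\times U)$ into $\prod_{T=1}^\infty K_T$, which is a compact metrizable countable product. The image lies in the closed subset of tuples that are consistent under restriction and rescaling across different $T$, and $\Phi$ is a homeomorphism onto its image: injectivity (modulo $dt$-a.e. equivalence, which is implicit in treating elements of $\mathcal{M}$ as equivalence classes) comes from the disintegration, and bicontinuity follows from Step~2 applied at every $T$ since $\tau_{\mathcal{S}\times U}$ is, by definition, the supremum over $T$ of the $[0,T]$-topologies. Pulling back the product metric then yields a compatible metric on $\mathcal{M}(\mathcal{S}\times U)$ and exhibits it as compact, proving part (i). Part (ii) follows verbatim by replacing $\mathcal{S}\times U$ with $\mathcal{S}$ throughout.

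The main obstacle will be Step~2: the uniform $\mathbb{L}_2$-approximation argument that reconciles the $\mathbb{L}_2[0,T]$ test functions in the definition of $\tau_{\mathcal{S}\times U}$ with the continuous test functions of the weak topology on $\mathcal{P}([0,T]\times\mathcal{S}\times U)$. The whole argument hinges on the fact that every element of $K_T$ has the same first marginal, which is what makes the Cauchy--Schwarz estimate above uniform and lets the $\mathbb{L}_2$-test topology collapse to the weak-$*$ topology on $K_T$. A secondary bookkeeping subtlety is handling the $dt$-a.e. equivalence so that $\Phi$ is genuinely a homeomorphism rather than merely continuous and surjective onto a closed compact set.
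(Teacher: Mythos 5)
Your proposal is correct and follows essentially the same route as the proof the paper defers to (Lemma 2.1 in \cite{borkarmark}): identify $\gamma|_{[0,T]}$ with a measure on $[0,T]\times\mathcal{S}\times U$ having Lebesgue first marginal, use the fixed marginal plus Cauchy--Schwarz to show the $\mathbb{L}_2$-test topology collapses to the Prohorov topology on that closed set, and glue over $T\in\mathbb{N}$ --- the classical relaxed-control/Young-measure compactification. The only points needing routine care, which you already flag, are that $\mathcal{M}(\mathcal{S}\times U)$ must be read as $dt$-a.e.\ equivalence classes for Hausdorffness, and that closedness of the consistency set under restriction uses that the slice $\{T\}\times\mathcal{S}\times U$ is null for the common first marginal.
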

For a proof of the above lemma we refer the reader to Lemma 2.1 in \cite{borkarmark}. The next lemma provides a continuous map from 
$\mathcal{M}(\mathcal{S}\times U)$ to $\mathcal{M}(\mathcal{S})$ which is used later in this paper. Recall that for any probability 
measure, $\nu\in\mathcal{P}(\mathcal{S}\times U)$, $\nu_{\mathcal{S}}$ denotes the image of $\nu$ under projection, 
$\mathcal{S}\times U\rightarrow\mathcal{S}$ (i.e. $\nu_{\mathcal{S}}\in\mathcal{P}(\mathcal{S})$ such that for every 
$A\in\mathscr{B}(\mathcal{S})$, $\nu_{\mathcal{S}}(A)=\int_{A\times U}\nu(ds,du)$).

\begin{lemma}:\label{lambd}
 \begin{itemize}
  \item [(i)]The map $\lambda:\mathcal{P}(\mathcal{S}\times U)\rightarrow\mathcal{P}(\mathcal{S})$ such that, for every 
  $\nu\in\mathcal{P}(\mathcal{S}\times U)$, $\lambda(\nu):=\nu_{\mathcal{S}}$, is continuous.
  \item [(ii)] For every $\gamma\in\mathcal{M}(\mathcal{S}\times U)$, $\lambda\circ\gamma\in\mathcal{M}(\mathcal{S})$.
  \item [(iii)] The map $\Lambda:\mathcal{M}(\mathcal{S}\times U)\rightarrow\mathcal{M}(\mathcal{S})$ such that, for every 
  $\gamma\in\mathcal{M}(\mathcal{S}\times U)$, $\Lambda(\gamma):=\lambda\circ\gamma$, is continuous.
 \end{itemize}
\end{lemma}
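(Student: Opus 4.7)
The plan is to handle the three parts in order, leveraging the fact that each subsequent statement is a mild consequence of the previous one.

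For part $(i)$, I would argue directly from the definition of the Prohorov topology. Given a sequence $\nu_n\to\nu$ in $\mathcal{P}(\mathcal{S}\times U)$, one must show $(\nu_n)_{\mathcal{S}}\to\nu_{\mathcal{S}}$ in $\mathcal{P}(\mathcal{S})$. For any $f\in\mathcal{C}(\mathcal{S},\mathbb{R})$, define $\tilde{f}\in\mathcal{C}(\mathcal{S}\times U,\mathbb{R})$ by $\tilde{f}(s,u):=f(s)$; it is bounded since $\mathcal{S}\times U$ is compact. By the definition of weak convergence applied to $\tilde{f}$,
\begin{equation*}
\int_{\mathcal{S}}f(s)(\nu_n)_{\mathcal{S}}(ds)=\int_{\mathcal{S}\times U}\tilde{f}(s,u)\nu_n(ds,du)\longrightarrow\int_{\mathcal{S}\times U}\tilde{f}(s,u)\nu(ds,du)=\int_{\mathcal{S}}f(s)\nu_{\mathcal{S}}(ds),
\end{equation*}
which is exactly the weak convergence $(\nu_n)_{\mathcal{S}}\to\nu_{\mathcal{S}}$. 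Since the Prohorov topology on $\mathcal{P}(\mathcal{S})$ is metrizable, sequential continuity suffices.

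For part $(ii)$, given $\gamma\in\mathcal{M}(\mathcal{S}\times U)$, the map $\lambda\circ\gamma:[0,\infty)\to\mathcal{P}(\mathcal{S})$ is a composition of a Borel measurable map $\gamma$ with a continuous (hence Borel measurable) map $\lambda$ from part $(i)$, so $\lambda\circ\gamma$ is Borel measurable and belongs to $\mathcal{M}(\mathcal{S})$.

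For part $(iii)$, the natural approach is to invoke the universal property of the initial topology defining $\tau_{\mathcal{S}}$. The topology $\tau_{\mathcal{S}}$ is generated by the family of maps $\Phi_{f,g,T}:\gamma\mapsto\int_0^T g(t)[\int_{\mathcal{S}}f(s)\gamma(t)(ds)]dt$, indexed by $f\in\mathcal{C}(\mathcal{S},\mathbb{R})$, $T>0$, $g\in\mathbb{L}_2[0,T]$. It therefore suffices to check that $\Phi_{f,g,T}\circ\Lambda$ is continuous on $(\mathcal{M}(\mathcal{S}\times U),\tau_{\mathcal{S}\times U})$ for every admissible triple $(f,g,T)$. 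With $\tilde{f}(s,u):=f(s)$ as before, a direct computation gives
\begin{equation*}
(\Phi_{f,g,T}\circ\Lambda)(\gamma)=\int_0^T g(t)\Big[\int_{\mathcal{S}}f(s)\,\gamma(t)_{\mathcal{S}}(ds)\Big]dt=\int_0^T g(t)\Big[\int_{\mathcal{S}\times U}\tilde{f}(s,u)\,\gamma(t)(ds,du)\Big]dt.
\end{equation*}
Since $\tilde{f}\in\mathcal{C}(\mathcal{S}\times U,\mathbb{R})$, the right-hand side is precisely one of the defining maps of $\tau_{\mathcal{S}\times U}$ and is therefore continuous, which yields continuity of $\Lambda$.

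I do not anticipate any genuine obstacle: the main subtlety is merely pedantic, namely verifying that $\lambda\circ\gamma$ is Borel measurable (handled by composition with the continuous $\lambda$) and invoking the universal property of the initial topology in part $(iii)$ rather than trying to work with an explicit metric. The lift of a test function on $\mathcal{S}$ to one on $\mathcal{S}\times U$ via $\tilde{f}(s,u):=f(s)$ is the single unifying trick throughout.
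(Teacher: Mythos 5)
Your proof is correct and follows essentially the same route as the paper: the lift $\tilde{f}:=f\circ\pi$ is exactly the paper's device in parts $(i)$ and $(iii)$, and part $(ii)$ is the same one-line observation that a composition of measurable maps is measurable. The only cosmetic difference is that in part $(iii)$ you invoke the universal property of the initial topology, whereas the paper runs the identical computation along convergent sequences (legitimate since both spaces are metrizable by the Metrization lemma); the underlying identity is the same.
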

\begin{proof}:
 \begin{itemize}
  \item [(i)] Let $\left\{\nu^n\right\}_{n\geq1}$ be a sequence in $\mathcal{P}(\mathcal{S}\times U)$ converging to 
  $\nu\in\mathcal{P}(\mathcal{S}\times U)$. Let $\pi:\mathcal{S}\times U\rightarrow\mathcal{S}$ denote the projection map. 
  Then for every $f\in\mathcal{C}(\mathcal{S},\mathbb{R})$, $f\circ\pi\in\mathcal{C}(\mathcal{S}\times U,\mathbb{R})$. By Theorem 2.1.1 
  $(ii)$ in \cite{borkarap} and the compactness of $\mathcal{S}\times U$, for every $f\in\mathcal{C}(\mathcal{S},\mathbb{R})$, 
  $\int_{\mathcal{S}\times U}\left(f\circ\pi\right)(s,u)\nu^n(ds,du)\to\int_{\mathcal{S}\times U}\left(f\circ\pi\right)(s,u)\nu(ds,du)$ as $n\to\infty$. Hence for 
  every $f\in\mathcal{C}(\mathcal{S},\mathbb{R})$, $\int_{\mathcal{S}}f(s)\nu^n\pi^{-1}(ds)\to\int_{\mathcal{S}}f(s)\nu\pi^{-1}(ds)$. 
  Observing that for every $n\geq1$, $\nu^n\pi^{-1}=\nu^n_{\mathcal{S}}$ and $\nu\pi^{-1}=\nu_{\mathcal{S}}$ gives us that, for every 
  $f\in\mathcal{C}(\mathcal{S},\mathbb{R})$, $\int_{\mathcal{S}}f(s)\nu^n_{\mathcal{S}}(ds)\to\int_{\mathcal{S}}f(s)\nu_{\mathcal{S}}(ds)$
  . Hence $\lambda(\nu^n)\to\lambda(\nu)$ in $\mathcal{P}(\mathcal{S})$, which gives us continuity of $\lambda(\cdot)$.
  \item [(ii)] Composition of two measurable functions is measurable.
  \item [(iii)] Let $\left\{\gamma_n\right\}_{n\geq1}$ be a sequence in $\mathcal{M}(\mathcal{S}\times U)$ converging to 
  $\gamma\in\mathcal{M}(\mathcal{S}\times U)$. Then, for every $f\in\mathcal{C}(\mathcal{S}\times U,\mathbb{R})$, for every $T>0$ and for 
  every $g\in\mathbb{L}_2[0,T]$, $\int_{0}^{T}g(t)[\int_{\mathcal{S}\times U}f(s,u)\gamma_n(t)(ds,du)]dt\to\int_{0}^{T}g(t)
  [\int_{\mathcal{S}\times U}f(s,u)\gamma(t)(ds,du)]dt$ as $n\to\infty$. Hence, for every $f\in\mathcal{C}(\mathcal{S},\mathbb{R})$, for 
  every $T>0$ and for every $g\in\mathbb{L}_2[0,T]$, $\int_{0}^{T}g(t)[\int_{\mathcal{S}\times U}\left(f\circ\pi\right)(s,u)\gamma_n(t)(ds,du)]dt\to\\
  \int_{0}^{T}g(t)[\int_{\mathcal{S}\times U}\left(f\circ\pi\right)(s,u)\gamma(t)(ds,du)]dt$ as $n\to\infty$. By argument similar to part $(i)$ of this
  lemma, we have, for every $f\in\mathcal{C}(\mathcal{S},\mathbb{R})$, for every $T>0$ and for every $g\in\mathbb{L}_2[0,T]$, 
  $\int_{0}^{T}g(t)[\int_{\mathcal{S}}f(s)\left(\lambda\circ\gamma_n\right)(t)(ds)]dt\to\int_{0}^{T}g(t)
  [\int_{\mathcal{S}}f(s)\left(\lambda\circ\gamma\right)(t)(ds)]dt$ as $n\to\infty$. Therefore $\left\{\left(\lambda\circ\gamma_n\right)\right\}_{n\geq1}$ converges to 
  $\left(\lambda\circ\gamma\right)$ in $\mathcal{M}(\mathcal{S})$, which gives us continuity of $\Lambda(\cdot)$.\qed 
 \end{itemize}
\end{proof}

\section{Recursion analysis}
\label{recanal}

Before we present the analysis of recursion \eqref{rec} we begin with some preliminaries in the next subsection. Later we present the main 
result of the paper followed by the limit set theorem which characterizes the limit set of the recursion in terms of the dynamics of the 
limiting DI. Throughout this section we shall assume that assumptions $(A1)-(A5)$ are satisfied.

\subsection{Preliminaries}\label{prelim}
Define $t(0):=0$ and for every $n\geq1$, $t(n):=\sum_{k=0}^{n-1}a(k)$. Define the stochastic process 
$\bar{X}:\Omega\times [0,\infty)\rightarrow\mathbb{R}^d$ such that, for every $(\omega,t)\in\Omega\times [0,\infty)$, 
\begin{equation*}
 \bar{X}(\omega,t):=\left(\frac{t-t(n)}{t(n+1)-t(n)}\right)X_{n+1}(\omega)+\left(\frac{t(n+1)-t}{t(n+1)-t(n)}\right)X_{n}(\omega),
\end{equation*}
where $n$ is such that $t\in[t(n),t(n+1))$.

Recall that recursion \eqref{rec} is given by,
\begin{equation*}
 X_{n+1}-X_{n}-a(n)M_{n+1}\in H(X_n,S_n),
\end{equation*}
for every $n\geq0$. By Lemma \ref{ctem}, we have that, for every $l\geq1$, for every $n\geq0$, $H(X_n,S_n)\subseteq H^{(l)}(X_n,S_n)$. 
Therefore, for every $l\geq1$, recursion \eqref{rec} can be written as, 
\begin{equation}\label{rec1}
 X_{n+1}-X_{n}-a(n)M_{n+1}\in H^{(l)}(X_n,S_n),
\end{equation}
for every $n\geq0$. By Lemma \ref{param}, we know that for every $l\geq1$, the set valued map $H^{(l)}$, admits a parametrization. The 
next lemma allows us write the recursion in terms of the parametrization of $H^{(l)}$.

\begin{lemma}
 For every $l\geq1$, for every $n\geq0$, there exists a $U$-valued random variable on $\Omega$, say $U^{(l)}_n$, such that,
 \begin{equation*}
  X_{n+1}-X_{n}-a(n)M_{n+1}= h^{(l)}(X_n,S_n,U^{(l)}_n),
 \end{equation*}
 where $h^{(l)}$ is as in Lemma \ref{param}.
\end{lemma}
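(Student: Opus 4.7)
Fix $l \geq 1$ and $n \geq 0$, and set $Y_n := X_{n+1} - X_n - a(n)M_{n+1}$. By recursion \eqref{rec1} and Lemma \ref{param}(i), for every $\omega \in \Omega$ we have $Y_n(\omega) \in H^{(l)}(X_n(\omega), S_n(\omega)) = h^{(l)}(X_n(\omega), S_n(\omega), U)$. The idea is to view this pointwise inclusion as a measurable selection problem: define the set-valued map $G : \Omega \to \{\text{subsets of } U\}$ by
\[
G(\omega) := \{u \in U : h^{(l)}(X_n(\omega), S_n(\omega), u) = Y_n(\omega)\}.
\]
Non-emptiness of $G(\omega)$ follows from Lemma \ref{param}(i), while continuity of $u \mapsto h^{(l)}(X_n(\omega), S_n(\omega), u)$ together with compactness of $U$ makes each $G(\omega)$ a non-empty compact subset of $U$. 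The strategy is to show that $G$ is a measurable set-valued map and then extract $U^{(l)}_n$ as a measurable selection via Lemma \ref{msel}(i).

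\textbf{Measurability of $G$.} Introduce $\Phi(\omega, u) := h^{(l)}(X_n(\omega), S_n(\omega), u) - Y_n(\omega)$. Since $X_n, S_n, Y_n$ are $\mathscr{F}$-measurable and $h^{(l)}$ is continuous, $\omega \mapsto \Phi(\omega, u)$ is measurable for each fixed $u \in U$, while $u \mapsto \Phi(\omega, u)$ is continuous for each fixed $\omega$. For any closed $C \subseteq U$, compactness of $C$ together with continuity of $\Phi(\omega, \cdot)$ implies that $G(\omega) \cap C \neq \emptyset$ is equivalent to $\inf_{u \in C} \parallel \Phi(\omega, u) \parallel = 0$. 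Choosing a countable dense subset $\{u_k\}_{k \geq 1}$ of $C$ and using continuity of $\Phi(\omega, \cdot)$ once more yields
\[
G^{-1}(C) = \left\{\omega \in \Omega : \inf_{k \geq 1} \parallel \Phi(\omega, u_k) \parallel = 0 \right\},
\]
which belongs to $\mathscr{F}$ because each $\omega \mapsto \parallel \Phi(\omega, u_k) \parallel$ is measurable. Hence $G$ is a measurable set-valued map.

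\textbf{Conclusion and main obstacle.} By Lemma \ref{msel}(i), $G$ admits a measurable selection $U^{(l)}_n \in \mathscr{S}(G)$. By construction $U^{(l)}_n$ is a $U$-valued random variable on $\Omega$ satisfying $h^{(l)}(X_n(\omega), S_n(\omega), U^{(l)}_n(\omega)) = Y_n(\omega)$ for every $\omega \in \Omega$, which is exactly the required identity. The only nontrivial step is the measurability of $G$; this reduces to a standard Carath\'{e}odory-selection argument using separability of $U$ and the continuity/measurability of $\Phi$, and mirrors the appendix calculation referenced in the proof of Lemma \ref{chint}.
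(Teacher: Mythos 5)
Your proposal is correct and follows essentially the same route as the paper: both define the set-valued map $G(\omega)=\{u\in U: h^{(l)}(X_n(\omega),S_n(\omega),u)=X_{n+1}(\omega)-X_n(\omega)-a(n)M_{n+1}(\omega)\}$, verify it is non-empty and closed valued, establish measurability via the criterion $G^{-1}(C)\in\mathscr{F}$ for closed $C$, and take a measurable selection using Lemma \ref{msel}(i). The only (immaterial) difference is in the measurability step, where you argue directly with a countable dense subset of $C$ and the Carath\'eodory function $\Phi$, while the paper routes through Lemma \ref{mtool} applied to the set-valued map $\omega\mapsto h^{(l)}(X_n(\omega),S_n(\omega),C)$; both arguments are valid.
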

\begin{proof}Fix $l\geq1$ and $n\geq0$. Define a set valued map $G:\Omega\rightarrow\left\{\text{subsets of }U\right\}$, such that, for every 
 $\omega\in\Omega$, $G(\omega):=\left\{u\in U: h^{(l)}(X_n(\omega),S_n(\omega),u)=X_{n+1}(\omega)-X_n(\omega)-a(n)M_{n+1}(\omega)\right\}$. By 
 Lemma \ref{param} and by $\eqref{rec1}$, we have that for every $\omega\in\Omega$, $G(\omega)\neq\emptyset$. By continuity of 
 $h^{(l)}(\cdot)$, we have that for every $\omega\in\Omega$, $G(\omega)$ is closed. For any $C\subseteq U$, closed, 
 $G^{-1}(C)=\left\{\omega\in\Omega:d(X_{n+1}(\omega)-X_n(\omega)-a(n)M_{n+1}(\omega),h^{(l)}(X_n(\omega),S_n(\omega),C))=0\right\}$. Clearly the set 
 valued map $\omega\rightarrow h^{(l)}(X_n(\omega),S_n(\omega),C)$ is measurable and hence by Lemma \ref{mtool}, the map
 $\omega\rightarrow d(X_{n+1}(\omega)-X_n(\omega)-a(n)M_{n+1}(\omega),h^{(l)}(X_n(\omega),S_n(\omega),C))$ is measurable. Thus 
 $G^{-1}(C)\in\mathscr{F}$. Therefore by Lemma \ref{msel}, $\mathscr{S}(G)\neq\emptyset$. Set $U^{(l)}_n$ to be any measurable selection 
 of $G$.\qed
\end{proof}

For every $l\geq1$, define $\Gamma^{(l)}:\Omega\times[0,\infty)\rightarrow\mathcal{P}(\mathcal{S}\times U)$, such that, for every 
$(\omega,t)\in\Omega\times[0,\infty)$,
\begin{equation}\label{diracm}
 \Gamma^{(l)}(\omega,t)=\delta_{S_n(\omega)}\otimes\delta_{U^{(l)}_n(\omega)},
\end{equation}
where $n$ is such that $t\in[t(n),t(n+1))$, $\delta_{S_n(\omega)}\in\mathcal{P}(\mathcal{S})$ denotes the Dirac measure in 
$\mathcal{P}(\mathcal{S})$(i.e. for any $A\in\mathscr{B}(\mathcal{S})$, $\delta_{S_n(\omega)}(A)=1$ if $S_n(\omega)\in A$ or $0$ otherwise) 
and $\delta_{U^{(l)}_n(\omega)}\in\mathcal{P}(U)$ denotes the Dirac measure in $\mathcal{P}(U)$.

For every $l\geq1$, for every $\nu\in\mathcal{P}(\mathcal{S}\times U)$, define $\tilde{h}^{(l)}_{\nu}:\mathbb{R}^d\rightarrow\mathbb{R}^d$, 
such that for every $x\in\mathbb{R}^d$, $\tilde{h}^{(l)}_{\nu}(x):=\int_{\mathcal{S}\times U}h^{(l)}(x,s,u)\nu(ds,du)$ where 
$h^{(l)}(\cdot)$ are as in Lemma \ref{param}. The next lemma provides an equicontinuity result used later.

\begin{lemma}\label{paramuse}
 For every $l\geq1$, for every $r>0$, $\left\{\tilde{h}^{(l)}_{\nu}(\cdot)|_{rU}:\nu\in\mathcal{P}(\mathcal{S}\times U)\right\}$ is an 
 equicontinuous family where $\tilde{h}^{(l)}_{\nu}(\cdot)|_{rU}$ denotes the restriction of $\tilde{h}^{(l)}_\nu(\cdot)$ to the closed 
 ball of radius $r$.
\end{lemma}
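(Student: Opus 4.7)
The plan is to reduce equicontinuity of the family of integrals to uniform continuity of the integrand $h^{(l)}$ on a suitable compact set. Fix $l \geq 1$ and $r > 0$. I would first observe that since $\mathcal{S} \times U$ is a compact metric space and $rU \subseteq \mathbb{R}^d$ is compact, the product $rU \times \mathcal{S} \times U$ is compact. By Lemma \ref{param}(i), $h^{(l)}$ is continuous on $\mathbb{R}^d \times \mathcal{S} \times U$, hence uniformly continuous when restricted to $rU \times \mathcal{S} \times U$.

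Given $\epsilon > 0$, uniform continuity yields $\delta > 0$ such that, for all $x, y \in rU$ with $\|x - y\| < \delta$ and all $(s,u) \in \mathcal{S} \times U$, we have $\|h^{(l)}(x,s,u) - h^{(l)}(y,s,u)\| < \epsilon$. Then for any $\nu \in \mathcal{P}(\mathcal{S} \times U)$,
\begin{equation*}
\|\tilde{h}^{(l)}_\nu(x) - \tilde{h}^{(l)}_\nu(y)\| \leq \int_{\mathcal{S} \times U} \|h^{(l)}(x,s,u) - h^{(l)}(y,s,u)\| \nu(ds,du) < \epsilon,
\end{equation*}
so $\delta$ works simultaneously for every $\nu \in \mathcal{P}(\mathcal{S} \times U)$. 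This establishes the equicontinuity of the family $\{\tilde{h}^{(l)}_\nu(\cdot)|_{rU} : \nu \in \mathcal{P}(\mathcal{S} \times U)\}$.

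There is essentially no obstacle here: the only input is the continuity of the parametrization $h^{(l)}$ combined with the compactness of the domain. The bound on $\|h^{(l)}\|$ from Lemma \ref{param}(ii) is not even needed for this step, since the $\nu$-average preserves the uniform modulus of continuity in $x$ without any boundedness argument beyond integrability, which follows automatically from continuity on a compact set.
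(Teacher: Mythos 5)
Your proof is correct and follows essentially the same route as the paper: uniform continuity of $h^{(l)}$ on the compact set $rU\times\mathcal{S}\times U$ via Heine--Cantor, followed by passing the modulus of continuity through the $\nu$-integral, with $\delta$ independent of $\nu$. Your closing remark that the bound from Lemma \ref{param}(ii) is not needed here is also accurate.
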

\begin{proof}
 Fix $l\geq1$ and $r>0$. Since $rU\times \mathcal{S}\times U$ is compact, by Heine-Cantor theorem, 
 $h^{(l)}(\cdot)|_{rU\times \mathcal{S}\times U}$ is uniformly continuous. Fix $\epsilon>0$. Then, there exits $\delta>0$
 (depending on $l$, $r$ and $\epsilon$), such that, for every $x,x'\in rU$, for every $s,s'\in\mathcal{S}$, for every $u,u'\in U$ 
 satisfying, $\parallel x-x'\parallel<\delta$, $d_{\mathcal{S}}(s,s')<\delta$, $\parallel u-u'\parallel<\delta$, we have, 
 $\parallel h^{(l)}(x,s,u)-h^{(l)}(x',s',u')\parallel<\epsilon$. Therefore for $\delta$ as above, for $x,x'\in rU$ satisfying 
 $\parallel x-x'\parallel<\delta$, we have that for every $\nu\in\mathcal{P}(\mathcal{S}\times U)$, $\parallel \tilde{h}^{(l)}_{\nu}(x)-
 \tilde{h}^{(l)}_{\nu}(x')\parallel=\parallel \int_{\mathcal{S}\times U}(h^{(l)}(x,s,u)-h^{(l)}(x',s,u))\nu(ds,du)\parallel\leq
 \\\int_{\mathcal{S}\times U}\parallel h^{(l)}(x,s,u)-h^{(l)}(x',s,u)\parallel\nu(ds,du)<\int_{\mathcal{S}\times U}\epsilon\nu(ds,du)=
 \epsilon$.\qed
\end{proof}

For every $l\geq1$, define the stochastic process, $Y^{(l)}:\Omega\times[0,\infty)\rightarrow\mathbb{R}^d$, such that, for every 
$(\omega,t)\in\Omega\times[0,\infty)$, $Y^{(l)}(\omega,t):=h^{(l)}(X_n(\omega),S_n(\omega),U^{(l)}_n(\omega))$, where $n$ is such that 
$t\in[t(n),t(n+1))$. 

In what follows, in the next two subsections, most of the arguments are sample path wise. We omit $\omega$ from our notation and use lower 
case letters to denote the above defined quantities along a particular sample path, for example, $\bar{x}(t),\ x_n,\ s_n,\ m_n,\ 
u^{(l)}_n,\ y^{(l)}(t)$ and $\gamma^{(l)}(t)$ are to be understood as $\bar{X}(\omega,t),\ X_n(\omega),\ S_n(\omega),\ M_n(\omega),\ 
U^{(l)}_n(\omega),\ Y^{(l)}(\omega,t)$ and $\Gamma^{(l)}(\omega,t)$ respectively for some $\omega$ fixed. 

\subsection{Main result - Asymptotic pseudotrajectory}

For every $\omega\in\Omega$, for every $l\geq1$, for every $\tilde{t}\geq0$, let $\tilde{x}^{(l)}(\cdot;\tilde{t})$ denote the solution 
of the o.d.e., 
\begin{equation}
\label{ode}
 \dot{\tilde{x}}^{(l)}(t;\tilde{t})=y^{(l)}(t+\tilde{t}),
\end{equation}
for every $t\geq0$, with initial condition $\tilde{x}^{(l)}(0;\tilde{t})=\bar{x}(\tilde{t})$. 

First we shall get rid of the additive noise terms, $M_n$. Let 
\begin{equation*}
 \Omega_{a,s}:=\left\{\omega\in\Omega: (A4)\ and\ (A5)\ hold\right\},
\end{equation*}
i.e., the set of sample paths where the iterates are stable and the additive noise terms are eventually negligible. By assumption $(A4)$ 
and $(A5)$, we have $\mathbb{P}(\Omega_{a,s})=1$.

\begin{lemma}\label{addnois}
 For every $l\geq1$, almost surely for every $\omega$, for every $T>0$,
 \begin{equation*}
  \lim_{t\to\infty}\sup_{0\leq q\leq T}\parallel \bar{x}(t+q)-\tilde{x}^{(l)}(q;t)\parallel=0.
 \end{equation*}
\end{lemma}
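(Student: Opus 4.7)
The plan is to represent $\bar x$ as an integral form driven by $y^{(l)}$ plus an ``additive noise accumulator,'' so that the difference $\bar x(t+q)-\tilde x^{(l)}(q;t)$ collapses exactly to the increment of that accumulator over $[t,t+q]$; assumption $(A4)$ then delivers the limit.

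First, using the lemma just above, the recursion reads $x_{n+1}-x_n=a(n)M_{n+1}+a(n)h^{(l)}(x_n,s_n,u_n^{(l)})$. Since $y^{(l)}(\cdot)$ is constant on each $[t(n),t(n+1))$ with value $h^{(l)}(x_n,s_n,u_n^{(l)})$, we have $a(n)h^{(l)}(x_n,s_n,u_n^{(l)})=\int_{t(n)}^{t(n+1)}y^{(l)}(s)\,ds$. Plugging this into the linear-interpolation formula I expect to obtain, for $t\in[t(n),t(n+1))$,
$$\bar x(t)=\bar x(0)+\int_0^t y^{(l)}(s)\,ds+e(t),\qquad e(t):=\sum_{m=0}^{n-1}a(m)M_{m+1}+(t-t(n))M_{n+1}.$$
Because $\tilde x^{(l)}(q;t)=\bar x(t)+\int_t^{t+q}y^{(l)}(s)\,ds$ by definition of the o.d.e.\ \eqref{ode}, subtracting gives the clean identity
$$\bar x(t+q)-\tilde x^{(l)}(q;t)=e(t+q)-e(t).$$

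Second, I would observe that $e(t)$ is the piecewise-linear interpolation of the partial sums $S_k:=\sum_{m=0}^{k-1}a(m)M_{m+1}$ (i.e.\ $e(t)$ is a convex combination of $S_n$ and $S_{n+1}$ when $t\in[t(n),t(n+1))$). Hence if $t\in[t(n),t(n+1))$ and $t+q\in[t(k),t(k+1))$, triangle inequality yields
$$\|e(t+q)-e(t)\|\le 2\sup_{n\le j\le k+1}\|S_j-S_n\|=2\sup_{n\le j\le k+1}\Big\|\sum_{m=n}^{j-1}a(m)M_{m+1}\Big\|.$$
Since $q\in[0,T]$ forces $k+1\le \tau(n,T)+1$ (and by monotonicity of $\{a(n)\}$, any such $j$ is still controlled via a slight enlargement of the window in $(A4)$), the right-hand side is at most a constant multiple of $\sup_{n\le j\le \tau(n,T)}\|\sum_{m=n}^{j}a(m)M_{m+1}\|$.

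Third, on the full-measure event $\Omega_{a,s}$ assumption $(A4)$ says this last quantity tends to zero as $n\to\infty$; since $n=n(t)\to\infty$ as $t\to\infty$, the supremum over $q\in[0,T]$ of $\|\bar x(t+q)-\tilde x^{(l)}(q;t)\|$ vanishes, proving the lemma.

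The main obstacle here is not conceptual but bookkeeping: one must handle the fractional endpoints (the times $t$ and $t+q$ do not generally lie on the grid $\{t(n)\}$), and carefully verify that the window $\tau(n,T)$ in $(A4)$ is wide enough to dominate the boundary corrections coming from the endpoint intervals $[t(n),t(n+1))$ and $[t(k),t(k+1))$. Observing that $(A4)$ already implies $a(n)\|M_{n+1}\|\to 0$ a.s.\ (take $T$ small so $\tau(n,T)=n+1$ eventually) makes this boundary bookkeeping routine.
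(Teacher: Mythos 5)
Your proof is correct and follows essentially the same route as the paper's: in both arguments the drift terms cancel exactly and the difference $\bar{x}(t+q)-\tilde{x}^{(l)}(q;t)$ reduces to (interpolated) partial sums $\sum_{m=n}^{k}a(m)M_{m+1}$, which assumption $(A4)$ sends to zero. The only cosmetic difference is that the paper carries out the computation at grid points $t=t(n)$ (where your accumulator $e$ equals $S_n$ exactly, so no factor of $2$ appears) and then remarks that the general case follows, whereas you handle general $t$ directly via the identity $\bar{x}(t+q)-\tilde{x}^{(l)}(q;t)=e(t+q)-e(t)$; your endpoint bookkeeping via a slightly enlarged window in $(A4)$ is sound.
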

\begin{proof} 
Fix $l\geq1$, $\omega\in\Omega_{a,s}$ and $T>0$. We shall prove the claim along the sequence $\left\{t(n)\right\}_{n\geq0}$ as defined in section 
\ref{prelim}. The general claim easily follows from this special case.

Fix $n\geq0$. Let $\tau(n,T):=\min\left\{k>n:t(k)\geq t(n)+T\right\}$. Let $q\in[0,T]$. Then, there exists $k$ such that $t(n)+q\in[t(k),t(k+1))$ 
and $n\leq k\leq \tau(n,T)-1$. By definition of $\bar{x}(\cdot)$ and $\tilde{x}^{(l)}(\cdot;t(n))$, we have that, 
$\bar{x}(t(n)+q)=\alpha x_k+(1-\alpha)x_{k+1}$ and $\tilde{x}^{(l)}(q;t(n))=\alpha\tilde{x}^{(l)}(t(k)-t(n);t(n))+\\(1-\alpha)
\tilde{x}^{(l)}(t(k+1)-t(n);t(n))$ where $\alpha=\frac{t(k+1)-t(n)-q}{t(k+1)-t(k)}$. Since $\tilde{x}^{(l)}(\cdot;t(n))$ is a solution of the 
o.d.e. $\eqref{ode}$, we have that, for every $k\geq n$, $\tilde{x}^{(l)}(t(k)-t(n);t(n))=x_n+\sum_{j=n}^{k-1}a(j)h^{(l)}(x_j,s_j,u^{(l)}_j)
$ and by Lemma \ref{paramuse}, we have that, for every $k\geq n$, $x_k=\bar{x}(t(k))=x_n+\sum_{j=n}^{k-1}a(j)h^{(l)}(x_j,s_j,u^{(l)}_j)+
\sum_{j=n}^{k-1}a(j)m_{j+1}$. Thus, 
\begin{align*}
\parallel\bar{x}(t(n)+q)-\tilde{x}^{(l)}(q;t(n))\parallel&\leq \parallel\alpha\sum_{j=n}^{k-1}a(j)m_{j+1}+(1-\alpha)
                              \sum_{j=n}^{k}a(j)m_{j+1}\parallel\\
                              &\leq\alpha\parallel\sum_{j=n}^{k-1}a(j)m_{j+1}\parallel+(1-\alpha)\parallel\sum_{j=n}^{k}a(j)m_{j+1}
                              \parallel\\
                              &\leq \sup_{n\leq k\leq \tau(n,T)}\parallel\sum_{j=n}^{k}a(j)m_{j+1}\parallel.
\end{align*}
Since the r.h.s. of the above inequality is independent of $q\in[0,T]$, we have, $\\\sup_{0\leq q\leq T}\parallel\bar{x}(t(n)+q)-
\tilde{x}^{(l)}(q;t(n))\parallel\leq\sup_{n\leq k\leq \tau(n,T)}\parallel\sum_{j=n}^{k}a(j)m_{j+1}\parallel$. Therefore, 
$\\\lim_{n\to\infty}\sup_{0\leq q\leq T}\parallel\bar{x}(t(n)+q)-\tilde{x}^{(l)}(q;t(n))\parallel\leq\lim_{n\to\infty}
\sup_{n\leq k\leq \tau(n,T)}\parallel\sum_{j=n}^{k}a(j)m_{j+1}\parallel$. Now the claim follows follows from assumption $(A4)$.\qed
\end{proof}
\begin{lemma}\label{eqcont}
 For every $l\geq1$, for almost every $\omega$, $\left\{\tilde{x}^{(l)}(\cdot;t)\right\}_{t\geq0}$ is relatively compact in 
 $\mathcal{C}([0,\infty),\mathbb{R}^d)$.
\end{lemma}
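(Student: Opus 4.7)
The plan is to apply the Arzel\`a--Ascoli theorem on each compact interval $[0,T]$, since $\mathcal{C}([0,\infty),\mathbb{R}^d)$ endowed with the topology of uniform convergence on compact sets is the projective limit of the spaces $\mathcal{C}([0,T],\mathbb{R}^d)$. Thus a family is relatively compact in $\mathcal{C}([0,\infty),\mathbb{R}^d)$ if and only if its restriction to each $[0,T]$ is relatively compact. I would fix $l\geq 1$ and work sample-path-wise on the full-measure set $\Omega_{a,s}$ from Lemma \ref{addnois} (in particular, where $(A5)$ holds). Set $r(\omega):=\sup_{n\geq 0}\|x_n(\omega)\|$, which is finite by $(A5)$.

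The two ingredients I need are uniform boundedness of initial values and a uniform Lipschitz bound on the trajectories. For the former, linear interpolation gives $\sup_{\tilde{t}\geq 0}\|\bar{x}(\tilde{t})\|\leq r$, so all initial conditions $\tilde{x}^{(l)}(0;\tilde{t})=\bar{x}(\tilde{t})$ lie in the closed ball of radius $r$. For the latter, observe that by definition $\dot{\tilde{x}}^{(l)}(t;\tilde{t})=y^{(l)}(t+\tilde{t})=h^{(l)}(x_n,s_n,u^{(l)}_n)$ where $n$ is the index with $t+\tilde{t}\in[t(n),t(n+1))$, and Lemma \ref{param}(ii) gives
\begin{equation*}
\|h^{(l)}(x_n,s_n,u^{(l)}_n)\|\leq K^{(l)}(1+\|x_n\|)\leq K^{(l)}(1+r).
\end{equation*}
Hence each $\tilde{x}^{(l)}(\cdot;\tilde{t})$ is Lipschitz on $[0,\infty)$ with constant $K^{(l)}(1+r)$, independent of $\tilde{t}$.

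Combining these two bounds, for any $T>0$ and any $\tilde{t}\geq 0$,
\begin{equation*}
\sup_{t\in[0,T]}\|\tilde{x}^{(l)}(t;\tilde{t})\|\leq r+TK^{(l)}(1+r),
\end{equation*}
which yields pointwise (in fact, uniform-in-$t\in[0,T]$) boundedness of the family on $[0,T]$, and the uniform Lipschitz constant $K^{(l)}(1+r)$ gives equicontinuity. The Arzel\`a--Ascoli theorem then implies relative compactness of $\{\tilde{x}^{(l)}(\cdot;\tilde{t})|_{[0,T]}\}_{\tilde{t}\geq 0}$ in $\mathcal{C}([0,T],\mathbb{R}^d)$. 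Since $T>0$ is arbitrary and $\mathcal{C}([0,\infty),\mathbb{R}^d)$ carries the metric $\mathbf{D}$ introduced just before Section \ref{recanal}, a standard diagonal argument produces, from any sequence $\{\tilde{t}_k\}$, a subsequence along which $\tilde{x}^{(l)}(\cdot;\tilde{t}_k)$ converges uniformly on every compact interval. This gives the claimed relative compactness in $\mathcal{C}([0,\infty),\mathbb{R}^d)$.

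There is no real obstacle here: the only non-trivial input is that $(A5)$ provides a sample-path-wise bound $r(\omega)<\infty$ so that the linear-growth bound of Lemma \ref{param}(ii) upgrades into a uniform (in $\tilde{t}$) bound on the vector field $y^{(l)}(\cdot+\tilde{t})$. Once that observation is made, the proof is a direct application of Arzel\`a--Ascoli.
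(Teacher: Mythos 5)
Your proof is correct and follows essentially the same route as the paper: use $(A5)$ to get a sample-path bound $r$ on the iterates, deduce a uniform Lipschitz constant $K^{(l)}(1+r)$ for the trajectories from the linear-growth bound on $h^{(l)}$, and conclude by Arzel\`a--Ascoli. The only difference is that you spell out the pointwise boundedness on $[0,T]$ and the diagonal argument explicitly, which the paper leaves implicit.
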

\begin{proof}
 Fix $l\geq1$, $\omega\in\Omega_{a,s}$. By assumption $(A5)$, we know that there exists $r>0$ such that 
 $\sup_{n\geq0}\parallel x_n\parallel\leq r$ and hence $\sup_{t\geq0}\tilde{x}^{(l)}(0;t)=\sup_{t\geq0}\bar{x}(t)\leq r$. 
 
 For any $t\geq0$, let $[t]:=\max\left\{n\geq0:t(n)\leq t\right\}$. For every $t\geq0$ and $q_1,q_2\in[0,\infty)$ (w.l.o.g. assume $q_1<q_2$) 
 we have, 
 \begin{align*}
  \parallel \tilde{x}^{(l)}(q_1;t)-\tilde{x}^{(l)}(q_2;t)\parallel&=\parallel\int_{q_1}^{q_2}h^{(l)}(x_{[t+q]},s_{[t+q]},u^{(l)}_{[t+q]})
                                                                     dq\parallel\\
                                         &\leq\int_{q_1}^{q_2}\parallel h^{(l)}(x_{[t+q]},s_{[t+q]},u^{(l)}_{[t+q]})\parallel dq\\
                                         &\leq\int_{q_1}^{q_2}K^{(l)}(1+\parallel x_{[t+q]}\parallel)dq\\
                                         &\leq C^{(l)}(q_2-q_1),
 \end{align*}
where $C^{(l)}:=K^{(l)}(1+r)$ and $r>0$ is such that, $\sup_{n\geq0}\parallel x_n\parallel\leq r$. Thus 
$\left\{\tilde{x}^{(l)}(\cdot;t)\right\}_{t\geq0}$ is an equicontinuous family. Now the claim follows from Arzella-Ascoli theorem.
\qed  
\end{proof}

From Lemma \ref{addnois} and Lemma \ref{eqcont} we conclude the following for almost every $\omega$.
\begin{itemize}
 \item [(1)] The family of functions, \bf{$\left\{\bar{x}(\cdot+t)\right\}_{t\geq0}$ is relatively compact }\rm in 
 $\mathcal{C}([0,\infty),\mathbb{R}^d)$, because if not, there exist $t_n\to\infty$ such that $\left\{\bar{x}(\cdot+t_n)\right\}_{n\geq0}$ does not 
 have a limit point in $\mathcal{C}([0,\infty),\mathbb{R}^d)$. Then by Lemma \ref{addnois}, $\left\{\tilde{x}^{(l)}(\cdot;t_n)\right\}_{n\geq0}$ does 
 not have a limit point in $\mathcal{C}([0,\infty),\mathbb{R}^d)$ which contradicts Lemma \ref{eqcont}.
 \item [(2)] Since $\left\{\bar{x}(\cdot+t)\right\}_{t\geq0}$ is relatively compact in $\mathcal{C}([0,\infty),\mathbb{R}^d)$, by Arzella-Ascoli 
 theorem, we have that for every $T>0$, $\left\{\bar{x}(\cdot+t)|_{[0,T]}\right\}_{t\geq0}$ is equicontinuous. Set $T=1$ and fix $\epsilon>0$. Then 
 for any $t_0\geq\frac{1}{2}$, there exists $t\geq0$ such that $t+\frac{1}{2}=t_0$. By equicontinuity of $\left\{\bar{x}(\cdot+t)|_{[0,T]}\right\}_{t\geq0}$, 
 we can obtain a $\delta>0$ (independent of $t$ and hence $t_0$) such that, for every $t'$ satisfying 
 $|t'-t_0|<\min\left\{\delta,\frac{1}{2}\right\}$, we have $\parallel \bar{x}(t')-\bar{x}(t_0)\parallel<\epsilon$. Since $\epsilon$ was arbitrary 
 and $\bar{x}(\cdot)|_{[0,\frac{1}{2}]}$ is uniformly continuous we obtain that the function \bf{$\bar{x}(\cdot)$ is uniformly continuous
 }\rm on $[0,\infty)$.
\end{itemize}
 
\begin{proposition}\label{main1}
 For $a.e.\ \omega$, every limit point $x^*(\cdot)$ of $\left\{\bar{x}(\cdot+t)\right\}_{t\geq0}$ satisfies the following.
 \begin{itemize}
  \item [(i)] For every $l\geq1$, there exists $\tilde{\gamma}^{(l)}\in\mathcal{M}(\mathcal{S}\times U)$ such that, for every $t\geq0$,
  \begin{equation*}
   x^*(t)=x^*(0)+\int_{0}^{t}\left[\int_{\mathcal{S}\times U}h^{(l)}(x^*(q),s,u)\tilde{\gamma}^{(l)}(q)(ds,du)\right]dq.
  \end{equation*}
  \item [(ii)] For every $l\geq1$, $\tilde{\gamma}^{(l)}$ as in part $(i)$ of this lemma is such that, for almost every $t\geq0$,
  \begin{equation*}
  \Lambda(\tilde{\gamma}^{(l)})(t)\in D(x^*(t)).
  \end{equation*}
  \item [(iii)] $x^*(\cdot)$ is absolutely continuous and for almost every $t\geq0$,
  \begin{equation*}
   \frac{dx^*(t)}{dt}\in\hat{H}(x^*(t)).
  \end{equation*}
 \end{itemize} 
\end{proposition}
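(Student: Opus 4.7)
The plan is to fix $\omega$ from a probability-one set on which $(A4)$, $(A5)$ and the martingale conclusions below all hold, choose a sequence $t_n\to\infty$ realising the limit point, i.e., $\bar{x}(\cdot+t_n)\to x^*(\cdot)$ in $\mathcal{C}([0,\infty),\mathbb{R}^d)$, and then work with the shifted random measures $\gamma^{(l)}_n(q):=\gamma^{(l)}(t_n+q)$ from \eqref{diracm}. Lemma \ref{addnois} upgrades the $\bar{x}$-convergence to $\tilde{x}^{(l)}(\cdot;t_n)\to x^*(\cdot)$ for every $l$, and since $(\mathcal{M}(\mathcal{S}\times U),\tau_{\mathcal{S}\times U})$ is compact metrizable a standard diagonal extraction in $l$ allows us to also assume $\gamma^{(l)}_n\to\tilde\gamma^{(l)}$ for every $l\geq 1$.

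For part (i), I would pass to the limit in $\tilde{x}^{(l)}(t;t_n)=\bar{x}(t_n)+\int_0^t\int_{\mathcal{S}\times U}h^{(l)}(x_{[t_n+q]},s,u)\gamma^{(l)}_n(q)(ds,du)\,dq$. Replacing $x_{[t_n+q]}$ by $\bar{x}(t_n+q)$ costs only $O(\max_k a(k))$ uniformly in $q\in[0,t]$, so it suffices to establish
\begin{equation*}
\int_0^t\!\!\int_{\mathcal{S}\times U}\!h^{(l)}(\bar{x}(t_n+q),s,u)\gamma^{(l)}_n(q)(ds,du)\,dq\longrightarrow\int_0^t\!\!\int_{\mathcal{S}\times U}\!h^{(l)}(x^*(q),s,u)\tilde\gamma^{(l)}(q)(ds,du)\,dq.
\end{equation*}
I would split this difference by inserting $h^{(l)}(x^*(q),s,u)\gamma^{(l)}_n(q)$: the first piece is controlled by the equicontinuity of $\{\tilde h^{(l)}_\nu\}_{\nu\in\mathcal{P}(\mathcal{S}\times U)}$ from Lemma \ref{paramuse} together with the uniform convergence of $\bar{x}(t_n+\cdot)$ on $[0,t]$; the second follows from $\gamma^{(l)}_n\to\tilde\gamma^{(l)}$ in $\tau_{\mathcal{S}\times U}$ after approximating the continuous function $(q,s,u)\mapsto h^{(l)}(x^*(q),s,u)$ on $[0,t]\times\mathcal{S}\times U$ uniformly by finite sums $\sum_k g_k(q)f_k(s,u)$ with $g_k\in\mathbb{L}_2[0,t]$ and $f_k\in\mathcal{C}(\mathcal{S}\times U,\mathbb{R})$.

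Part (ii) is the main obstacle, and here the Markov assumption $(A2)$ enters. For each $f\in\mathcal{C}(\mathcal{S})$ the sequence $M^f_{n+1}:=f(S_{n+1})-\int_{\mathcal{S}}f(s')\Pi(X_n,S_n)(ds')$ is a bounded martingale difference, so by $(A3)(ii)$ the series $\sum_k a(k)M^f_{k+1}$ is an $L^2$-convergent martingale and its partial sums over any window $[t_n,t_n+T]$ vanish as $n\to\infty$. Converting those sums into integrals over $[0,T]$ (with boundary corrections of order $\max_k a(k)\to 0$), using continuity of $\Pi$ from $(A2)$ together with uniform convergence of $\bar{x}(t_n+\cdot)\to x^*(\cdot)$ to replace $\Pi(x_{[t_n+q]},\cdot)$ by $\Pi(x^*(q),\cdot)$, and invoking $\tau_{\mathcal{S}}$-convergence $\Lambda(\gamma^{(l)}_n)\to\Lambda(\tilde\gamma^{(l)})$ (Lemma \ref{lambd}) after the same Stone--Weierstrass reduction as in part (i), I obtain for every $T>0$
\begin{equation*}
\int_0^T\int_{\mathcal{S}}f(s)\Lambda(\tilde\gamma^{(l)})(q)(ds)\,dq=\int_0^T\int_{\mathcal{S}}\int_{\mathcal{S}}f(s')\Pi(x^*(q),s)(ds')\Lambda(\tilde\gamma^{(l)})(q)(ds)\,dq.
\end{equation*}
Letting $T$ vary forces pointwise equality for a.e.\ $q$, and ranging $f$ over a countable dense subset of $\mathcal{C}(\mathcal{S})$ (separable since $\mathcal{S}$ is compact metric) pins down $\Lambda(\tilde\gamma^{(l)})(q)\in D(x^*(q))$ for almost every $q$.

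Part (iii) is then a corollary. The integral representation in (i) immediately gives absolute continuity of $x^*$, and differentiating a.e.\ one has $\frac{dx^*(q)}{dq}=\int_{\mathcal{S}\times U}h^{(l)}(x^*(q),s,u)\tilde\gamma^{(l)}(q)(ds,du)$, which by Lemma \ref{chint} applied with $\mu:=\Lambda(\tilde\gamma^{(l)})(q)$ lies in $\int_{\mathcal{S}}H^{(l)}_{x^*(q)}(s)\Lambda(\tilde\gamma^{(l)})(q)(ds)\subseteq\hat H^{(l)}(x^*(q))$ by part (ii). Thus $x^*\in\Sigma^{(l)}(x^*(0))$ for every $l\geq 1$, and the lemma preceding this proposition identifies $\bigcap_{l\geq 1}\Sigma^{(l)}(x^*(0))$ with $\Sigma(x^*(0))$, so $\frac{dx^*(t)}{dt}\in\hat H(x^*(t))$ almost everywhere.
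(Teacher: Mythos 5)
Your proposal is correct and follows essentially the same route as the paper's proof: extraction of convergent shifted measures in the compact space $\mathcal{M}(\mathcal{S}\times U)$, passage to the limit in the integral representation via Lemma \ref{paramuse} and Stone--Weierstrass for part (i), the martingale argument with the step-size monotonicity correction and Lebesgue differentiation for part (ii), and Lemma \ref{chint} together with Lemma \ref{approx1}(iii) for part (iii). The only cosmetic differences are your diagonal extraction over $l$ (the paper treats each $l$ separately, which suffices for the statement) and your two-step replacement of $x_{[t_n+q]}$ via $\bar{x}(t_n+q)$, which the paper handles directly through the uniform continuity of $\bar{x}(\cdot)$.
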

\begin{proof}
 Fix $\omega\in\Omega_{a,s}$, and let $t_n\to\infty$, such that $\bar{x}(\cdot+t_n)\to x^*(\cdot)$ in 
 $\mathcal{C}([0,\infty),\mathbb{R}^d)$.
 \begin{itemize}
  \item [(i)] Fix $l\geq1$. Consider the $\mathcal{M}(\mathcal{S}\times U)$ valued sequence $\left\{\gamma^{(l)}(\cdot+t_n)\right\}_{n\geq1}$. Since 
  $\mathcal{M}(\mathcal{S}\times U)$ is a compact metric space, there exists a subsequence of the above that converges. Set 
  $\tilde{\gamma}^{(l)}(\cdot)$ to be some limit point and w.l.o.g. assume $\left\{\gamma^{(l)}(\cdot+t_n)\right\}_{n\geq1}$ converges to 
  $\tilde{\gamma}^{(l)}(\cdot)$. Since the sequence $\left\{\bar{x}(\cdot+t_n)\right\}_{n\geq1}$ converges to $x^*(\cdot)$, by Lemma \ref{addnois}, 
  we have that $\left\{\tilde{x}^{(l)}(\cdot;t_n)\right\}_{n\geq1}$ also converges to $x^*(\cdot)$ in $\mathcal{C}([0,\infty),\mathbb{R}^d)$. For every 
  $n\geq1$, by definition of $\tilde{x}^{(l)}(\cdot;t_n)$ we have that for every $t\geq0$, 
  \begin{align*}
  \tilde{x}^{(l)}(t;t_n)&=\bar{x}(t_n)+\int_{0}^{t}y^{(l)}(q)dq\\
                  &=\bar{x}(t_n)+\int_{0}^{t}h^{(l)}(x_{[t_n+q]},s_{[t_n+q]},u^{(l)}_{[t_n+q]})dq. 
  \end{align*}
  By definition of $\gamma^{(l)}(\cdot)$(see \eqref{diracm} and recall that $\gamma^{(l)}(\cdot)=\Gamma^{(l)}(\omega,\cdot)$) for every 
  $n\geq1$ and for every $t\geq0$ we can write the above as,
  \begin{equation*}
   \tilde{x}^{(l)}(t;t_n)=\bar{x}(t_n)+\int_{0}^{t}\left[\int_{\mathcal{S}\times U}h^{(l)}(x_{[t_n+q]},s,u)\gamma^{(l)}(q+t_n)(ds,du)\right]dq.
  \end{equation*}
  Therefore, for every $t\geq0$,
  \begin{align}\label{tmp0}
   \lim_{n\to\infty}[\tilde{x}^{(l)}(t;t_n)-\bar{x}(t_n)]&=\lim_{n\to\infty}\int_{0}^{t}\left[\int_{\mathcal{S}\times U}
                                                   h^{(l)}(x_{[t_n+q]},s,u)\gamma^{(l)}(q+t_n)(ds,du)\right]dq.\nonumber\\
                                      x^*(t)-x^*(0)&=\lim_{n\to\infty}\int_{0}^{t}\left[\int_{\mathcal{S}\times U}
                                                   h^{(l)}(x_{[t_n+q]},s,u)\gamma^{(l)}(q+t_n)(ds,du)\right]dq.
  \end{align}
  Since $\gamma^{(l)}(\cdot+t_n)\to\tilde{\gamma}^{(l)}(\cdot)$ and by our choice of the topology for $\mathcal{M}(\mathcal{S}\times U)$, 
  we have, 
  \begin{equation*}
   \int_{0}^{t}\left[\int_{\mathcal{S}\times U}\tilde{f}(q,s,u)\gamma^{(l)}(q+t_n)(ds,du)\right]dq-
   \int_{0}^{t}\left[\int_{\mathcal{S}\times U}\tilde{f}(q,s,u)\tilde{\gamma}^{(l)}(q)(ds,du)\right]dq\to0,
  \end{equation*}
 for all  bounded continuous $\tilde{f}:[0,t]\times\mathcal{S}\times U\rightarrow\mathbb{R}$ of the form, 
 \begin{equation*}
  \tilde{f}(q,s,u)=\sum_{m=1}^{N}a_mg_m(q)f_m(s,u),
 \end{equation*}
 for some $N\geq1$, scalars $a_m$ and bounded continuous functions $g_m,\ f_m$ on $[0,t],\ \mathcal{S}\times U$ respectively, 
 for $1\leq m\leq N$. By the Stone-Weierstrass theorem, such functions can uniformly approximate any function in 
 $\mathcal{C}([0,t]\times\mathcal{S}\times U,\mathbb{R})$. Thus the above convergence holds true for all real valued continuous functions 
 on $[0,t]\times\mathcal{S}\times U$, implying that $\\t^{-1}\gamma^{(l)}(q+t_n)(ds,du)dq\to t^{-1}\tilde{\gamma}^{(l)}(q)(ds,du)dq$ in 
 $\mathcal{P}([0,t]\times\mathcal{S}\times U)$. Thus,
 \begin{small}
 \begin{equation}\label{tmp1}
  \parallel\int_{0}^{t}\left[\int_{\mathcal{S}\times U}h^{(l)}(x^*(q),s,u)\gamma^{(l)}(q+t_n)(ds,du)\right]dq-
  \int_{0}^{t}\left[\int_{\mathcal{S}\times U}h^{(l)}(x^*(q),s,u)\tilde{\gamma}^{(l)}(q)(ds,du)\right]dq\parallel\to0
 \end{equation}
 \end{small}
 as $n\to\infty$. Since $\left\{\bar{x}(\cdot+t_n)|_{[0,t]}\right\}_{n\geq1}$ converges uniformly to $x^*(\cdot)|_{[0,t]}$ we have that, the 
 function $q\rightarrow x_{[t_n+q]}$ converges uniformly to $x^*(\cdot)|_{[0,t]}$ on $[0,t]$. Using the above and by 
 Lemma \ref{paramuse}, we have that for every $\epsilon>0$, there exists $N$ (depending on $\epsilon$) such that, for every $n\geq N$, for 
 every $q\in[0,t]$, we have, 
 \begin{equation}\label{tmp2}
  \parallel\int_{\mathcal{S}\times U}h^{(l)}(x_{[t_n+q]},s,u)\gamma^{(l)}(q+t_n)(ds,du)-
  \int_{\mathcal{S}\times U}h^{(l)}(x^*(q),s,u)\gamma^{(l)}(q+t_n)(ds,du)\parallel<\epsilon.
 \end{equation}
 Now,
 \begin{small}
  \begin{align*}
   \parallel\int_{0}^{t}\left[\int_{\mathcal{S}\times U}h^{(l)}(x_{[t_n+q]},s,u)\gamma^{(l)}(q+t_n)(ds,du)\right]dq&-
  \int_{0}^{t}\left[\int_{\mathcal{S}\times U}h^{(l)}(x^*(q),s,u)\tilde{\gamma}^{(l)}(q)(ds,du)\right]dq\parallel\\
  &\leq\\
  \parallel\int_{0}^{t}\int_{\mathcal{S}\times U}h^{(l)}(x_{[t_n+q]},s,u)\gamma^{(l)}(q+t_n)(ds,du)dq&-
  \int_{0}^{t}\int_{\mathcal{S}\times U}h^{(l)}(x^*(q),s,u)\gamma^{(l)}(q+t_n)(ds,du)dq\parallel\\
  &+\\
  \parallel\int_{0}^{t}\left[\int_{\mathcal{S}\times U}h^{(l)}(x^*(q),s,u)\gamma^{(l)}(q+t_n)(ds,du)\right]dq&-
  \int_{0}^{t}\left[\int_{\mathcal{S}\times U}h^{(l)}(x^*(q),s,u)\tilde{\gamma}^{(l)}(q)(ds,du)\right]dq\parallel.
  \end{align*}
 \end{small}
 Taking limit on both sides as $n\to\infty$ in the above equation and using equations \eqref{tmp1} and \eqref{tmp2} we obtain, 
 \begin{small}
 \begin{equation*}
  \lim_{n\to\infty}\parallel\int_{0}^{t}\left[\int_{\mathcal{S}\times U}h^{(l)}(x_{[t_n+q]},s,u)\gamma^{(l)}(q+t_n)(ds,du)\right]dq-
  \int_{0}^{t}\left[\int_{\mathcal{S}\times U}h^{(l)}(x^*(q),s,u)\tilde{\gamma}^{(l)}(q)(ds,du)\right]dq\parallel\leq\epsilon t,
 \end{equation*}
 \end{small}
 for every $\epsilon>0$. Therefore, for every $t\geq0$,
 \begin{small}
  \begin{equation*}
   \lim_{n\to\infty}\int_{0}^{t}\left[\int_{\mathcal{S}\times U}h^{(l)}(x_{[t_n+q]},s,u)\gamma^{(l)}(q+t_n)(ds,du)\right]dq=
   \int_{0}^{t}\left[\int_{\mathcal{S}\times U}h^{(l)}(x^*(q),s,u)\tilde{\gamma}^{(l)}(q)(ds,du)\right]dq.
  \end{equation*}
 \end{small}
 Substituting the above limit in equation \eqref{tmp0}, we get, for every $t\geq0$,
 \begin{equation*}
  x^*(t)-x^*(0)=\int_{0}^{t}\left[\int_{\mathcal{S}\times U}h^{(l)}(x^*(q),s,u)\tilde{\gamma}^{(l)}(q)(ds,du)\right]dq.
 \end{equation*}
 \item [(ii)] The proof of this part is similar to the proof of Lemma 6, chapter 6.3 of \cite{borkartxt}. We shall present a proof here for 
 the sake of completeness. 
 Let $\left\{f_i\right\}$ be a countable set of real valued continuous functions on $\mathcal{S}$ that is a convergence determining class for 
 $\mathcal{P}(\mathcal{S})$. By replacing each $f_i$ by $a_if_i+b_i$ for suitable scalars $a_i,\ b_i>0$, we may suppose that 
 $0\leq f_i(\cdot)\leq 1$ for all $i$. For each $i$,
 \begin{equation*}
  \zeta_n^{i}:=\sum_{k=0}^{n-1}a(k)(f_i(S_{k+1})-\int_{\mathcal{S}}f_i(s')\Pi(X_k,S_k)(ds')),
 \end{equation*}
 is a square integrable zero mean martingale w.r.t. the filtration $\left\{\mathscr{F}_n:=\sigma(S_k,X_k:0\leq k\leq n-1)\right\}_{n\geq1}$ and for 
 almost every $\omega$, $\sum_{n=1}^{\infty}\mathbb{E}[(\zeta^i_{n+1}-\zeta^i_{n})^2|\mathscr{F}_n]\leq 2\sum_{n=0}^{\infty}a(n)^2<\infty$
 . Hence by martingale convergence theorem (see Appendix C, Theorem 11 in \cite{borkartxt}), for almost every $\omega$, 
 $\left\{\zeta_n^i\right\}_{n\geq1}$ converges. Let $\Omega_{m}:=\left\{\omega\in\Omega: \forall i, \left\{\zeta_n^i\right\}_{n\geq1}\ converges\right\}$. Then 
 $\mathbb{P}(\Omega_{m})=1$. Define,
 \begin{equation}
  \Omega^*:=\Omega_m\cap\Omega_{a,s}
 \end{equation}
 and clearly $\mathbb{P}(\Omega^*)=1$. Recall, that for every $T\geq0$ and $n\geq0$, $\tau(n,T):=\min\left\{k\geq n:t(k)\geq t(n)+T\right\}$. Then 
 for every $\omega\in\Omega^*$, for every $i$, for every $T>0$, as $n\to\infty$,
 \begin{equation*}
  \sum_{k=n}^{\tau(n,T)}a(k)(f_i(s_{k+1})-\int_{\mathcal{S}}f_i(s')\Pi(x_k,s_k)(ds'))\to0.
 \end{equation*}
 By the choice of $\left\{f_i\right\}_{i\geq1}$ and the fact that $\left\{a(n)\right\}_{n\geq0}$ are non-increasing (see assumption $(A3)(i)$) we get, that for 
 every $\omega\in\Omega^*$, for every $i$, for every $T>0$, as $n\to\infty$,
 \begin{equation*}
  |\sum_{k=n}^{\tau(n,T)}(a(k)-a(k+1))f_i(s_{k+1})|\leq a(n)-a(\tau(n,T)+1)\to0.
 \end{equation*}
Thus, for every $\omega\in\Omega^*$, for every $i$, for every $T>0$, as $n\to\infty$,
\begin{equation}\label{tmp3}
 \sum_{k=n}^{\tau(n,T)}a(k)(f_i(s_k)-\int_{\mathcal{S}}f_i(s')\Pi(x_k,s_k)(ds'))\to0.
\end{equation}
Fix $l\geq1$. Define for every $\omega$, $\mu^{(l)}(\cdot):=\Lambda(\gamma^{(l)})\in\mathcal{M}(\mathcal{S})$. Then by definition of 
$\gamma^{(l)}(\cdot)$ (see equation \eqref{diracm} and recall $\gamma^{(l)}(\cdot):=\Gamma^{(l)}(\omega,\cdot)$) and $\Lambda(\cdot)$ (see 
Lemma \ref{lambd}) we have that for every $\omega$, for every $t>0$, $\mu^{(l)}(t)=\delta_{s_n}$ where $n$ is such that, 
$t\in[t(n),t(n+1))$. Using the definition of $\mu^{(l)}(\cdot)$ in equation \eqref{tmp3}, we get that for every $\omega\in\Omega^*$, for 
every $i$, for every $T>0$, as $n\to\infty$,
\begin{equation*}
 \int_{0}^{t(\tau(n,T))-t(n)}\int_{\mathcal{S}}\left[f_i(s)-\int_{\mathcal{S}}f_i(s')\Pi(x_{[t(n)+q]},s)(ds')\right]\mu^{(l)}(q+t(n))
 (ds)dq\to0,
\end{equation*}
where, for every $t\geq0$, $[t]:=\max\left\{n\geq0:t(n)\leq t\right\}$. From the above it can easily be shown that for every $\omega\in\Omega^*$, 
for every $i$, for every $T>0$,
\begin{equation*}
 \lim_{t\to\infty}\int_{0}^{T}\int_{\mathcal{S}}\left[f_i(s)-\int_{\mathcal{S}}f_i(s')\Pi(x_{[t+q]},s)(ds')\right]\mu^{(l)}(q+t)(ds)dq=0.
\end{equation*}
By assumption $(A2)$, $\Pi(\cdot)|_{2rU\times\mathcal{S}}$ is uniformly continuous, where $r=\sup_{n\geq0}\parallel x_n\parallel$. 
Thus the function $(x,s)\rightarrow f_i(s)-\int_{\mathcal{S}}f_i(s')\Pi(x,s)(ds')$ is uniformly continuous on $2rU\times\mathcal{S}$ for 
every $i$. Using the above, the fact that $\lim_{n\to\infty}(t(n+1)-t(n))=0$ and uniform continuity of $\bar{x}(\cdot)$ we get that, for 
every $\omega\in\Omega^*$, for every $i$, for every $T>0$,
\begin{equation*}
\lim_{t\to\infty}\int_{0}^{T}\int_{\mathcal{S}}\left[f_i(s)-\int_{\mathcal{S}}f_i(s')\Pi(\bar{x}(t+q),s)(ds')\right]\mu^{(l)}(q+t)(ds)
dq=0.
\end{equation*}
Fix $\omega\in\Omega^*$. From part $(i)$ of this lemma we have a sequence $t_n\to\infty$ such that $\left\{\bar{x}(\cdot+t_n)\right\}_{n\geq1}$ 
converges to $x^*(\cdot)$ in $\mathcal{C}([0,\infty),\mathbb{R}^d)$ and $\left\{\gamma^{(l)}(\cdot+t_n)\right\}_{n\geq1}$ converges to 
$\tilde{\gamma}^{(l)}(\cdot)$ in $\mathcal{M}(\mathcal{S}\times U)$. By continuity of the map $\Lambda(\cdot)$ (see Lemma \ref{lambd}), 
we have that $\mu^{(l)}(\cdot+t_n)\to\tilde{\mu}^{(l)}(\cdot)=\Lambda(\tilde{\gamma}^{(l)})$ in $\mathcal{M}(\mathcal{S})$. Using the 
convergence above and the fact that the family of functions, $\left\{x\in2rU\rightarrow\int_{\mathcal{S}}\left[f_i(s)-\int_{\mathcal{S}}
f_i(s')\Pi(x,s)(ds')\right]\nu(ds):\nu\in\mathcal{P}(\mathcal{S})\right\}$ is equicontinuous (which can be shown by arguments similar to Lemma 
\ref{paramuse}) we get that for every $i$, for every $T>0$,
\begin{equation*}
 \int_{0}^{T}\int_{\mathcal{S}}\left[f_i(s)-\int_{\mathcal{S}}f_i(s')\Pi(x^*(q),s)(ds')\right]\tilde{\mu}^{(l)}(q)(ds)dq=0.
\end{equation*}
An application of Lesbesgue's theorem (see chapter 11.1.3 in \cite{borkartxt}), we get that for almost every $t\geq0$, for every $i$, 
\begin{equation*}
 \int_{\mathcal{S}}\left[f_i(s)-\int_{\mathcal{S}}f_i(s')\Pi(x^*(t),s)(ds')\right]\tilde{\mu}^{(l)}(t)(ds)=0.
\end{equation*}
By our choice of $\left\{f_i\right\}_{i\geq1}$, we get that for almost every $t\geq0$,
\begin{equation*}
 \tilde{\mu}^{(l)}(t)(ds)=\int_{\mathcal{S}}\Pi(x^*(t),s')(ds)\tilde{\mu}^{(l)}(t)(ds').
\end{equation*}
Therefore for almost every $t\geq0$, $\tilde{\mu}^{(l)}(t)=\Lambda(\tilde{\gamma}^{(l)})(t)\in D(x^*(t))$.
\item [(iii)] Fix $l\geq1$. From part $(i)$ of this lemma we have that for every $t\geq0$,
\begin{equation*}
 x^*(t)=x^*(0)+\int_{0}^{t}\left[\int_{\mathcal{S}\times U}h^{(l)}(x^*(q),s,u)\tilde{\gamma}^{(l)}(q)(ds,du)\right]dq.
\end{equation*}
Clearly $x^*(\cdot)$ is absolutely continuous and for almost every $t\geq0$,
\begin{equation}\label{tmp4}
 \frac{dx^*(t)}{dt}=\int_{\mathcal{S}\times U}h^{(l)}(x^*(t),s,u)\tilde{\gamma}^{(l)}(t)(ds,du).
\end{equation}
By part $(ii)$ of this lemma we have that for almost every $t\geq0$, $\Lambda(\tilde{\gamma}^{(l)})(t)\in D(x^*(t))$. By definition of 
the map $\Lambda(\cdot)$ (see Lemma \ref{lambd}), we have that for almost every $t\geq0$, $\tilde{\gamma}^{(l)}_{\mathcal{S}}(t)\in 
D(x^*(t))$. By Lemma \ref{chint} and by definition of $\hat{H}^{(l)}$ (see equation \eqref{lim2}) we have that for almost every $t\geq0$,
\begin{equation*}
 \int_{\mathcal{S}\times U}h^{(l)}(x^*(t),s,u)\tilde{\gamma}^{(l)}(t)(ds,du)\in
 \cup_{\mu\in D(x^*(t))}\int_{\mathcal{S}}H^{(l)}_{x^*(t)}(s)\mu(ds)=\hat{H}^{(l)}(x^*(t)).
\end{equation*}
Using the above in equation \eqref{tmp4} we obtain that for almost every $t\geq0$,
\begin{equation*}
 \frac{dx^*(t)}{dt}\in\hat{H}^{(l)}(x^*(t)).
\end{equation*}
Since $l\geq1$ that was fixed was arbitrary, the above holds for every $l\geq1$. Therefore, for almost every $t\geq0$,
\begin{equation*}
 \frac{dx^*(t)}{dt}\in\cap_{l\geq1}\hat{H}^{(l)}(x^*(t))=\hat{H}(x^*(t)),
\end{equation*}
where the equality follows from Lemma \ref{approx1}$(iii)$.\qed
\end{itemize}
\end{proof}

Before we proceed further we shall briefly recall the definition of asymptotic pseudotrajectories (APT) for set-valued dynamics 
introduced in \cite{benaim1}. The translation flow $\Theta:\mathcal{C}(\mathbb{R},\mathbb{R}^d)\times \mathbb{R}\rightarrow
\mathcal{C}(\mathbb{R},\mathbb{R}^d)$ is the flow defined by,
\begin{equation*}
 \Theta^t(\bf{x}\rm)(q):=\bf{x}\rm(q+t).
\end{equation*}
For every $\omega\in\Omega$, extend $\bar{x}(\cdot)$ to $\mathbb{R}$ by letting $\bar{x}(t)=\bar{x}(0)$ for $t<0$. Then $\bar{x}(\cdot)$ 
is an APT for the flow of DI \eqref{di2} if, 
\begin{equation*}
 \lim_{t\to\infty}\bf{D}\rm(\Theta^t(\bar{x}),\Sigma)=0,
\end{equation*}
where, $\Sigma:=\cup_{x\in\mathbb{R}^d}\Sigma(x)$ denotes the set of all solutions of DI \eqref{di2}.

In what follows we fix $\omega\in\Omega^*$ and let $\bar{x}(\cdot)$ denote the extension to $\mathbb{R}$ as defined above.
By uniform continuity of $\bar{x}(\cdot)$, we have that the family $\left\{\Theta^t(\bar{x})\right\}_{t\geq0}$ is equicontinuous and by 
assumption $(A5)$ is pointwise bounded. Hence $\left\{\Theta^{t}(\bar{x})\right\}_{t\geq0}$ is relatively compact in 
$\mathcal{C}(\mathbb{R},\mathbb{R}^d)$. Let $x^*(\cdot)$ be a limit point of $\left\{\Theta^{t}(\bar{x})\right\}_{t\geq0}$. Then by Proposition 
\ref{main1}$(iii)$, we have that $x^*(\cdot)|_{[0,\infty)}$ is a solution on $[0,\infty)$ of DI \eqref{di2}. Usually the negative time 
argument is omitted since it follows from the positive time argument as follows:

Fix $T>0$. Since $x^*(\cdot)$ is a limit point of $\left\{\Theta^{t}(\bar{x})\right\}_{t\geq0}$, there exists $t_n\to\infty$ such that, 
$\left\{\Theta^{t_n}(\bar{x})\right\}_{n\geq1}$ converges to $x^*(\cdot)$ in $\mathcal{C}(\mathbb{R},\mathbb{R}^d)$. Then 
$\left\{\Theta^{t_n-T}(\bar{x})\right\}_{n\geq1}$ converges to $x^*(\cdot-T)$. By Proposition \ref{main1}$(iii)$, $x^*(\cdot-T)|_{[0,\infty)}$ is a 
solution of DI \eqref{di2}. Therefore $x^*(\cdot)|_{[-T,0]}$ is absolutely continuous and for almost every $t\in[-T,0]$, 
\begin{equation*}
 \frac{dx^*(t)}{dt}\in\hat{H}(x^*(t)).
\end{equation*}
Since $T>0$, is arbitrary, we have that $x^*(\cdot)|_{(-\infty,0]}$ is a solution on $(-\infty,0]$ of DI \eqref{di2}.

Therefore every limit point of $\left\{\Theta^{t}(\bar{x})\right\}_{t\geq0}$ is in $\Sigma$, the set of solutions of DI \eqref{di2}. Then by Theorem 
4.1 in \cite{benaim1} we get the following result.

\begin{theorem}\emph{[APT]}\label{apt}
 Under assumptions $(A1)-(A5)$, for almost every $\omega$, the linearly interpolated trajectory of recursion \eqref{rec}, $\bar{x}(\cdot)$, 
 is an asymptotic pseudotrajectory of DI \eqref{di2}.
\end{theorem}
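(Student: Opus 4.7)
The plan is to combine Proposition \ref{main1}(iii) with a relative compactness argument for the translates of $\bar{x}(\cdot)$ and then invoke the APT criterion (Theorem 4.1 in \cite{benaim1}) which states that, for Marchaud set-valued dynamics, a precompact continuous trajectory is an APT iff every limit point of its translates is a solution of the DI. Since $\hat H$ is Marchaud by Lemma \ref{march2}, this criterion is available.

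First, I would fix $\omega\in\Omega^*$ and extend $\bar x(\cdot)$ to $\mathbb{R}$ by $\bar x(t)=\bar x(0)$ for $t<0$, as is done in the discussion before the theorem. By assumption (A5) the family $\{\Theta^t(\bar x)\}_{t\geq0}$ is pointwise bounded, and by the uniform continuity of $\bar x$ on $[0,\infty)$ (observation (2) after Lemma \ref{eqcont}) together with the trivial constancy on $(-\infty,0)$ and the boundary matching at $0$, the family is equicontinuous on every compact subinterval of $\mathbb{R}$. Arzel\`a--Ascoli together with a diagonal argument over intervals $[-k,k]$ then yields relative compactness of $\{\Theta^t(\bar x)\}_{t\geq0}$ in $(\mathcal{C}(\mathbb{R},\mathbb{R}^d),\mathbf{D})$.

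Second, I would verify that every limit point $x^*(\cdot)$ of $\{\Theta^t(\bar x)\}_{t\geq0}$ belongs to $\Sigma$. The positive-time half $x^*|_{[0,\infty)}$ is a solution of DI \eqref{di2} by direct application of Proposition \ref{main1}(iii). For negative times, given $T>0$, pick $t_n\to\infty$ with $\Theta^{t_n}(\bar x)\to x^*$; then eventually $t_n-T\geq 0$ and $\Theta^{t_n-T}(\bar x)\to x^*(\cdot-T)$, so Proposition \ref{main1}(iii) applied to this shifted limit point gives that $x^*(\cdot-T)|_{[0,\infty)}$ solves DI \eqref{di2}, whence $x^*|_{[-T,0]}$ is absolutely continuous with $\dot x^*(t)\in\hat H(x^*(t))$ a.e. on $[-T,0]$. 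Letting $T\to\infty$ along a countable sequence stitches these local solutions together into $x^*\in\Sigma$.

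Finally, the two facts just established---precompactness of the translates and containment of all their limit points in $\Sigma$---feed directly into Theorem 4.1 of \cite{benaim1}, yielding $\lim_{t\to\infty}\mathbf{D}(\Theta^t(\bar x),\Sigma)=0$ for $\omega\in\Omega^*$, and since $\mathbb{P}(\Omega^*)=1$ the APT conclusion holds almost surely. The only genuinely nontrivial step here is the transfer from the positive half-line to the full real line via time shifts, but this is standard and is only a bookkeeping exercise; the real work has been done in Proposition \ref{main1}, so the proof of the theorem itself is essentially an assembly argument.
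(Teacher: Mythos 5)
Your proposal is correct and follows essentially the same route as the paper: extend $\bar{x}$ to $\mathbb{R}$, establish relative compactness of $\{\Theta^t(\bar{x})\}_{t\geq0}$ from uniform continuity and $(A5)$, show every limit point lies in $\Sigma$ by applying Proposition \ref{main1}$(iii)$ directly for positive times and to the shifts $\Theta^{t_n-T}(\bar{x})$ for negative times, and then invoke Theorem 4.1 of \cite{benaim1}. No gaps.
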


\subsection{Characterization of limit sets}
For every $\omega$, the limit set of recursion \eqref{rec}, denoted by $L(\bar{x})$ is defined as,
\begin{equation}
 L(\bar{x}):=\cap_{t\geq0}\overline{\left\{\bar{x}(q+t):q\geq0\right\}}.
\end{equation}

As a consequence of Theorem \ref{apt} we will be able to characterize the limit set of recursion \eqref{rec} in terms of the dynamics 
induced by $\hat{H}$. The notions of invariance, internal chain transitivity, attracting sets, basin of attraction and attractors are 
taken from \cite{benaim1}. We shall state here definitions of a few of the notions mentioned above for the sake of completeness.

The \bf{flow }\rm of DI$\eqref{di2}$ is given by the set valued map $\Phi:\mathbb{R}^d\times \mathbb{R}\rightarrow
\left\{\text{subsets of }\mathbb{R}^d\right\}$, where for every $(x,t)\in\mathbb{R}^d\times\mathbb{R}$, 
$\Phi(x,t):=\left\{\bf{x}\rm(t):\bf{x}\rm\in\Sigma(\it{x}\rm)\right\}$. For any $C\subseteq\mathbb{R}^d$, let $\omega_{\Phi}(C):=\cap_{t\geq0}
\overline{\Phi_{[t,\infty)}(C)}$, where $\Phi_{[t,\infty)}(C):=\cup_{(x,\tau)\in\ C\times[t,\infty)}\Phi(x,\tau)$. For any $A\subseteq\mathbb{R}^d$, 
its \bf{basin of attraction }\rm denoted by $B(A)$ is defined as, $B(A):=\left\{x\in\mathbb{R}^d:
\omega_{\Phi}\left(\left\{x\right\}\right)\subseteq A\right\}$.

A set $A\subseteq\mathbb{R}^d$ is said to be \bf{invariant }\rm for DI$\eqref{di2}$ if for all $x\in A$, there exists $\bf{x}\rm\in
\Sigma(\it{x}\rm)$ such that for every $t\in\mathbb{R}$, $\bf{x}\rm(t)\in A$.

A compact set $A\subseteq\mathbb{R}^d$ is an \bf{attracting set }\rm for the flow of DI$\eqref{di2}$ if there exists a neighborhood $O$ of $A$, 
with the property that for every $\epsilon>0$ there exists $t_{\epsilon}>0$ such that for every $t\geq t_{\epsilon}$, $\Phi_{t}(O)\subseteq 
N^{\epsilon}(A)$, where $N^{\epsilon}(A)$ stands for the $\epsilon$-neighborhood of $A$. 

A compact set $A\subseteq\mathbb{R}^d$ is an \bf{attractor }\rm for the flow of DI$\eqref{di2}$ if it is an attracting set and is invariant. 
Further if the basin of attraction of attractor $A$ is the whole of $\mathbb{R}^d$, that is $B(A)=\mathbb{R}^d$, then $A$ is a 
\bf{global attractor}\rm.

Given a set $A\subseteq\mathbb{R}^d$ and $x,y\in A$, for any $\epsilon>0$ and $T>0$ there exists an $(\epsilon,T)$ chain from $x$ to $y$ for 
DI\eqref{di2} if there exists an integer $n\in\mathbb{N}$, solutions $\bf{x}\rm_1,\dots,\bf{x}\rm_n$ to DI\eqref{di2} and real numbers $t_1,\dots,t_n$ 
greater than $T$ such that
\begin{itemize}
 \item for all $i\in\left\{1,\dots,n\right\}$ and for all $q\in[0,t_i]$, $\bf{x}\rm_i(\it{q}\rm)\in A$,
 \item for all $i\in\left\{1,\dots,n\right\}$, $\parallel\bf{x}\rm_i(\it{t_i}\rm)-\bf{x}\rm_{i+1}(0)\parallel\leq\epsilon$,
 \item $\parallel \bf{x}\rm_1(0)-x\parallel\leq\epsilon$ and $\parallel\bf{x}\rm_n(\it{t_n}\rm)-y\parallel\leq\epsilon$.
\end{itemize}
A compact set $A\subseteq\mathbb{R}^d$ is said to be \bf{internally chain transitive }\rm if for every $x,y\in A$, for every $\epsilon>0$ and 
for every $T>0$, there exists $(\epsilon,T)$ chain from $x$ to $y$ for the DI\eqref{di2}.

\begin{theorem}\emph{[Limit set]}\label{ls}
 Under assumptions $(A1)-(A5)$, for almost every $\omega$, the following hold.
 \begin{itemize}
  \item [(i)] $L(\bar{x})$ is a non-empty, compact subset of $\mathbb{R}^d$ and is internally chain transitive. 
  \item [(ii)] If $A\subseteq\mathbb{R}^d$ is an attracting set for the flow of DI \eqref{di2} with a basin of attraction $B(A)$, such 
  that $L(\bar{x})\cap B(A)\neq\emptyset$, then $L(\bar{x})\subseteq A$.
  \item [(iii)] If $A\subseteq\mathbb{R}^d$ is a global attractor for the flow of DI \eqref{di2}, then $L(\bar{x})\subseteq A$.
  \item [(iv)] If $A=\left\{x^*\right\}$ is a global attractor for the flow of DI \eqref{di2}, then $\left\{x_n\right\}_{n\geq0}$ converges to $x^*$.
 \end{itemize}
\end{theorem}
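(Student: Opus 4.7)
The plan is to leverage Theorem \ref{apt} and hand off the rest of the work to the general limit set machinery for set-valued APTs developed in \cite{benaim1}. Since the hard analytical content (the tracking/APT property relative to $\hat{H}$) is already established, this theorem reduces to checking that the hypotheses of the Benaïm--Hofbauer--Sorin limit set theorem are met and then extracting consequences.

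For part \emph{(i)}, I would first fix $\omega \in \Omega^{*}$ (the probability-one set on which (A4), (A5), and the martingale convergence used in Proposition \ref{main1}(ii) hold). By (A5), $\sup_{t \geq 0}\|\bar{x}(t)\| < \infty$, so $L(\bar{x}) = \bigcap_{t\geq 0}\overline{\{\bar{x}(q+t): q\geq 0\}}$ is a nested intersection of non-empty closed bounded (hence compact) sets, which makes $L(\bar{x})$ non-empty and compact. For internal chain transitivity, I would invoke Theorem 4.3 of \cite{benaim1}, which states that the limit set of a bounded APT for a Marchaud DI is internally chain transitive for the associated flow; the Marchaud property of $\hat{H}$ is Lemma \ref{march2}, boundedness is (A5), and the APT property is exactly Theorem \ref{apt}.

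Parts \emph{(ii)} and \emph{(iii)} are now pure dynamical-systems consequences of internal chain transitivity. For \emph{(ii)}, I would appeal to the standard fact (Proposition 3.20 in \cite{benaim1}, suitably adapted to set-valued flows) that an internally chain transitive set which meets the basin of an attracting set must be contained in that set; the argument uses invariance of $L(\bar{x})$ under $\Phi$, together with the definition of $B(A)$ to push every point of $L(\bar{x})$ into an $\varepsilon$-neighborhood of $A$ by choosing $(\varepsilon, T)$-chains starting near a point of $L(\bar{x}) \cap B(A)$. Part \emph{(iii)} is then immediate: $B(A) = \mathbb{R}^d$ ensures $L(\bar{x}) \cap B(A) = L(\bar{x}) \neq \emptyset$, so \emph{(ii)} applies.

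For \emph{(iv)}, from \emph{(iii)} with $A = \{x^{*}\}$ I obtain $L(\bar{x}) \subseteq \{x^{*}\}$, and non-emptiness forces $L(\bar{x}) = \{x^{*}\}$. Combined with the relative compactness of $\{\bar{x}(\cdot + t)\}_{t \geq 0}$ (observation (1) after Lemma \ref{eqcont}) and uniform continuity of $\bar{x}(\cdot)$ (observation (2)), $L(\bar{x}) = \{x^{*}\}$ implies $\bar{x}(t) \to x^{*}$, and since $x_n = \bar{x}(t(n))$ we get $x_n \to x^{*}$. The only point requiring care is that the conclusions of \cite{benaim1} are phrased for solutions of the DI on all of $\mathbb{R}$; this is handled exactly as in the discussion preceding Theorem \ref{apt}, where the negative-time extension of any limit point of $\{\Theta^{t}(\bar{x})\}_{t \geq 0}$ is shown to be a solution of DI \eqref{di2}. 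I do not expect a substantive obstacle: once Theorem \ref{apt} is in hand and $\hat{H}$ is Marchaud, the limit set theorem is a direct citation plus the trivial deduction in \emph{(iv)}.
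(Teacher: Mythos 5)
Your proposal is correct and follows essentially the same route as the paper: part (i) is Theorem \ref{apt} combined with Theorem 4.3 of \cite{benaim1}, parts (ii)--(iii) invoke the standard attracting-set property of internally chain transitive sets (the paper cites Theorem 3.23 and Corollary 3.24 of \cite{benaim1} where you cite Proposition 3.20, but this is the same machinery), and part (iv) is the same trivial deduction from (iii). The extra detail you supply on non-emptiness/compactness of $L(\bar{x})$ and on why $L(\bar{x})=\left\{x^*\right\}$ forces $x_n\to x^*$ is sound and merely makes explicit what the paper leaves to the citation.
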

\begin{proof}:
 \begin{itemize}
  \item [(i)] Follows from Theorem \ref{apt} above and Theorem 4.3 of \cite{benaim1}.
  \item [(ii)] From part $(i)$ of this theorem we have that $L(\bar{x})$ is internally chain transitive. Now the claim follows from Theorem 
  3.23 of \cite{benaim1}.
  \item [(iii)] Follows from part $(i)$ of this theorem and Corollary 3.24 of \cite{benaim1}.
  \item [(iv)] Follows from part $(iii)$ of this theorem with $A={x^*}$.\qed
 \end{itemize}
\end{proof}

\section{Iterate-independent and controlled Markov noise cases}
\label{siacmnc}
In this section we shall introduce two variants of assumption $(A2)$ under which a similar analysis as presented in the previous section 
can be carried out (assumptions $(A1),\ (A3)-(A5)$ are assumed to hold). The first case is of iterate-independent Markov noise i.e., 
where the transition probabilities do not depend on the iterate $x$ while second case is of controlled Markov noise i.e., the 
so-called noise process is not Markov by itself, but its lack of Markov property comes through its dependence on some possibly imperfectly known 
time varying process which is viewed as a control for analysis purposes. Throughout this section we assume that 
$(A1),\ (A3)-(A5)$ hold.

\subsection{Iterate-independent Markov noise}
First present the iterate-independent Markov noise assumption.
\begin{itemize}
 \item [$(A2)'$] $\left\{S_n\right\}_{n\geq0}$ is a sequence of $\mathcal{S}$ valued random variables on $\Omega$ such that, 
  for every $n\geq0$, for every $A\in \mathscr{B}(\mathcal{S})$, $\mathbb{P}(S_{n+1}\in A|S_m,X_m,m\leq n)=
  \mathbb{P}(S_{n+1}\in A|S_n)=\Pi(S_n)(A)$ $a.s.$ where $\Pi:\mathcal{S}\rightarrow\mathcal{P}(\mathcal{S})$ is continuous.
\end{itemize}      

Assumption $(A2)'$ and the fact that the state space of Markov noise is a compact metric space gives us that 
the Markov chain defined by the transition kernel, $\Pi(\cdot)(\cdot)$, admits a stationary distribution ($\mu\in\mathcal{P}(\mathcal{S})$ is 
stationary for the Markov chain defined by the transition kernel $\Pi(\cdot)(\cdot)$ if, for every $A\in\mathscr{B}(\mathcal{S})$, 
$\mu(A)=\int_{\mathcal{S}}\Pi(s)(A)\mu(ds)$).   
Let $D\subseteq\mathcal{P}(\mathcal{S})$ denote the set of stationary distributions of the Markov chain defined by the transition kernel 
$\Pi(\cdot)(\cdot)$. It is easy to show that $D$ is a convex and compact subset of $\mathcal{P}(\mathcal{S})$. Define the set valued map, 
$\hat{H}_{i}:\mathbb{R}^d\rightarrow\left\{subsets\ of\ \mathbb{R}^d\right\}$, such that, for every $x\in\mathbb{R}^d$, 
\begin{equation*}
 \hat{H}_{i}(x):=\cup_{\mu\in D}\int_{\mathcal{S}}H_x(s)\mu(ds),
\end{equation*}
where for every $x\in\mathbb{R}^d$, the set valued map $H_x$ is as defined in section \ref{ldi}. Similar to Lemma \ref{march2}, it can be 
shown that the set valued map $\hat{H}$ defined above is a Marchaud map. By the same set of arguments as in section \ref{recanal}, we obtain 
the following result. 

\begin{theorem}\emph{[APT for iterate-independent Markov noise]}
 Under assumptions $(A1),\ (A2)',\ (A3)-(A5)$, for almost every $\omega$, the linearly interpolated trajectory of recursion \eqref{rec}, is 
 an asymptotic pseudotrajectory for the flow of the DI,
 \begin{equation}\label{di3}
  \frac{dx}{dt}\in\hat{H}_{i}(x).
 \end{equation}
\end{theorem}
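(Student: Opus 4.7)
The plan is to mirror the proof of Theorem 6.6 step by step, noting where the iterate-independence of $\Pi$ simplifies the argument. First, I would invoke Lemma \ref{ctem} and Lemma \ref{param} to rewrite recursion \eqref{rec} for each $l\geq 1$ as $X_{n+1}-X_n-a(n)M_{n+1}=h^{(l)}(X_n,S_n,U^{(l)}_n)$ with $U^{(l)}_n$ a measurable selection, then define $\bar x(\cdot)$, $\tilde x^{(l)}(\cdot;\tilde t)$, $\gamma^{(l)}(\cdot)$, $y^{(l)}(\cdot)$ and $\mu^{(l)}(\cdot):=\Lambda(\gamma^{(l)})$ exactly as in Section~\ref{prelim}. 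Lemma \ref{addnois} (elimination of the additive noise) and Lemma \ref{eqcont} (relative compactness of $\{\tilde x^{(l)}(\cdot;t)\}_{t\geq0}$) transfer verbatim, since they use only $(A1),(A3)$--$(A5)$; consequently $\{\bar x(\cdot+t)\}_{t\geq 0}$ is relatively compact in $\mathcal{C}([0,\infty),\mathbb{R}^d)$ and $\bar x(\cdot)$ is uniformly continuous on $[0,\infty)$.

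Second, I would reprove the analogue of Proposition~\ref{main1}. Fix $\omega$ and a limit point $x^*(\cdot)$ of $\{\bar x(\cdot+t)\}_{t\geq 0}$ along $t_n\to\infty$. Part (i) carries over without change: by compact metrizability of $(\mathcal{M}(\mathcal{S}\times U),\tau_{\mathcal{S}\times U})$ we can extract $\tilde\gamma^{(l)}$ as a limit of $\gamma^{(l)}(\cdot+t_n)$, and the Stone--Weierstrass argument combined with the equicontinuity Lemma~\ref{paramuse} yields the integral representation
\[
x^*(t)=x^*(0)+\int_0^t\!\Bigl[\int_{\mathcal{S}\times U}h^{(l)}(x^*(q),s,u)\,\tilde\gamma^{(l)}(q)(ds,du)\Bigr]dq.
\]

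Third — and this is the only place where the proof meaningfully simplifies — I would adapt part (ii). Choose $\{f_i\}$ a countable convergence-determining class for $\mathcal{P}(\mathcal{S})$ with $0\leq f_i\leq 1$. Under $(A2)'$ the martingale differences become $f_i(S_{k+1})-\int_{\mathcal{S}}f_i(s')\Pi(S_k)(ds')$, and the same square-summability and convergence argument gives, on a probability-one set $\Omega^*$, that for every $i,T$,
\[
\sum_{k=n}^{\tau(n,T)}a(k)\Bigl(f_i(s_k)-\int_{\mathcal{S}}f_i(s')\Pi(s_k)(ds')\Bigr)\to 0.
\]
Translating to continuous time and passing to the limit along $t_n$, I obtain
\[
\int_0^T\!\int_{\mathcal{S}}\Bigl[f_i(s)-\int_{\mathcal{S}}f_i(s')\Pi(s)(ds')\Bigr]\tilde\mu^{(l)}(q)(ds)\,dq=0,
\]
where $\tilde\mu^{(l)}=\Lambda(\tilde\gamma^{(l)})$. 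Here I do \emph{not} need the uniform continuity of $\Pi$ in $x$ or the equicontinuity of $\{x\mapsto\int[f_i(s)-\int f_i(s')\Pi(x,s)(ds')]\nu(ds):\nu\in\mathcal{P}(\mathcal{S})\}$; the map $s\mapsto f_i(s)-\int f_i(s')\Pi(s)(ds')$ is already fixed and bounded continuous, so the passage to the limit is immediate. Lebesgue's differentiation theorem and the choice of $\{f_i\}$ then give $\tilde\mu^{(l)}(t)\in D$ for a.e.\ $t$.

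Fourth, part (iii) follows exactly as before: differentiating the integral representation gives $\frac{dx^*}{dt}(t)=\int_{\mathcal{S}\times U}h^{(l)}(x^*(t),s,u)\tilde\gamma^{(l)}(t)(ds,du)$, and since $\tilde\gamma^{(l)}_{\mathcal{S}}(t)\in D$ a.e., Lemma~\ref{chint} yields $\frac{dx^*}{dt}(t)\in\hat H_i^{(l)}(x^*(t))$ where $\hat H_i^{(l)}(x):=\cup_{\mu\in D}\int_{\mathcal{S}}H^{(l)}_x(s)\mu(ds)$. The statements of Lemmas~\ref{approx1} and \ref{march2} carry over with $D(x)$ replaced by the constant $D$ (in fact the proofs become simpler, since closedness of the graph of the constant set-valued map $x\mapsto D$ is trivial), yielding $\cap_{l\geq 1}\hat H_i^{(l)}(x)=\hat H_i(x)$ and that $\hat H_i$ is Marchaud. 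Intersecting over $l$ gives $\frac{dx^*}{dt}(t)\in\hat H_i(x^*(t))$ a.e., so $x^*\in\Sigma_i$, the solution set of DI~\eqref{di3}. Extending to negative times by the shift argument used in Section~\ref{recanal} and applying Theorem~4.1 of \cite{benaim1} completes the proof. The only substantive item to check carefully is that the analogues of Lemmas \ref{march1}, \ref{approx1}, \ref{march2} go through with constant $D$; since all their proofs reduce to convexity, compactness, and closed-graph properties of $D$ that hold trivially, this is not really an obstacle, and no genuinely new ingredient is needed beyond the original proof.
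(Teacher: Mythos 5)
Your proposal is correct and follows exactly the route the paper intends: the paper's own ``proof'' of this theorem is simply the remark that the set of stationary distributions $D$ is convex and compact, that $\hat{H}_i$ is Marchaud by the argument of Lemma \ref{march2}, and that the conclusion follows ``by the same set of arguments as in section \ref{recanal}.'' You have filled in that argument faithfully, and your observations about where it simplifies (the closed-graph property of the constant map $x\mapsto D$ and the dispensability of the equicontinuity in $x$ of the $\Pi$-dependent family) are accurate.
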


As a consequence of the above theorem, a limit set theorem similar to Theorem \ref{ls} can be established which under assumptions $(A1),(A2)',
(A3)-(A5)$ characterizes the limit set of recursion \eqref{rec} in terms of the dynamics of DI\eqref{di3}.

\subsection{Controlled Markov noise}
Let $\left\{Z_n\right\}_{n\geq0}$ be a sequence of random variables(control sequence) on $\Omega$, taking values in a compact metric space, 
$\mathcal{Z}$. The controlled Markov noise assumption is as follows.
\begin{itemize}
 \item [$(A2)''$]$\left\{S_n\right\}_{n\geq0}$ is a sequence of $\mathcal{S}$ valued random variables on $\Omega$ such that, 
  for every $n\geq0$, for every $A\in \mathscr{B}(\mathcal{S})$, $\mathbb{P}(S_{n+1}\in A|S_m,Z_m,X_m,m\leq n)=
  \mathbb{P}(S_{n+1}\in A|S_n,Z_n,X_n)=\Pi(X_n,S_n,Z_n)(A)$ a.s. with $\Pi:\mathcal{S}\rightarrow\mathcal{P}(\mathcal{S})$, 
  satisfying,
      \begin{itemize}
      \item [(i)] $\Pi$ is continuous.
      \item [(ii)] $\left\{Z_n\right\}_{n\geq0}$ is a stationary randomized control, i.e., for every $n\geq0$, for every $A\subseteq\mathscr{B}(
      \mathcal{Z}$), $\mathbb{P}(Z_n\in A|X_m,S_m,Z_{m-1},m\leq n)=\mathbb{P}(Z_n\in A|S_n)=\phi(S_n)(A)$ $a.s.$ with $\phi:\mathcal{S}
      \rightarrow\mathcal{P}(Z)$, measurable.
      \end{itemize}
  Let $\Pi_{\phi}:\mathbb{R}^d\times\mathcal{S}\rightarrow\mathcal{P}(\mathcal{S})$ be such that for every $x\in\mathbb{R}^d$, for every 
  $s\in\mathcal{S}$, $\Pi_{\phi}(x,s)(ds')=\int_{\mathcal{Z}}\Pi(x,z,s)(ds')\phi(s)(dz)$.    
      \begin{itemize}
      \item [(iii)] For every $x\in\mathbb{R}^d$ the Markov chain with state space $\mathcal{S}$ defined by the transition kernel, 
      $\Pi_{\phi}(x,\cdot)(\cdot)$  admits 
      a stationary distribution ($\mu\in\mathcal{P}(\mathcal{S})$ is stationary for the Markov chain defined by the transition kernel 
      $\Pi_{\phi}(x,\cdot)(\cdot)$ if, for every $A\in\mathscr{B}(\mathcal{S})$, $\mu(A)=\int_{\mathcal{S}}\Pi_{\phi}(x,s)(A)\mu(ds)$).   
      \end{itemize}
\end{itemize}
For every $x\in\mathbb{R}^d$, let $D_{\phi}(x)\subseteq\mathcal{P}(\mathcal{S})$ denote the set of stationary distributions of the Markov 
chain defined by the transition kernel, $\Pi_{\phi}(x,\cdot)(\cdot)$. The convexity, compactness of $D_{\phi}(x)$ and the closed graph 
property of the set valued map $x\rightarrow D_{\phi}(x)$ follow from arguments similar to those in page 69 of \cite{borkartxt}. Define the 
set valued map, $\hat{H}_{\phi}:\mathbb{R}^d\rightarrow\left\{\text{subsets of }\mathbb{R}^d\right\}$ such that for every $x\in\mathbb{R}^d$,
\begin{equation*}
 \hat{H}_{\phi}(x):=\cup_{\mu\in D_{\phi}(x)}\int_{\mathcal{S}}H_x(s)\mu(ds),
\end{equation*}
where for every $x\in\mathbb{R}^d$, the set valued map $H_x$ is as defined in section \ref{ldi}.
By a proof similar to Lemma \ref{march2}, one can show that the map $\hat{H}_{\phi}$ is a Marchaud map. Then by a similar set of arguments 
as in section \ref{recanal}, one obtains the following result.

\begin{theorem}\emph{[APT for controlled Markov noise]}
 Under assumptions $(A1),\ (A2)'',\ (A3)-(A5)$, for almost every $\omega$, the linearly interpolated trajectory of recursion \eqref{rec} is 
 an asymptotic pseudotrajectory for the flow of the DI,
 \begin{equation}\label{di4}
  \frac{dx}{dt}\in\hat{H}_{\phi}(x).
 \end{equation}
\end{theorem}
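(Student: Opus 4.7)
The plan is to mirror the proof of Theorem \ref{apt}, keeping the same approximating continuous parametrizations $h^{(l)}$, the same single-valued reformulation $X_{n+1} - X_n - a(n) M_{n+1} = h^{(l)}(X_n, S_n, U_n^{(l)})$ with measurable selections $U_n^{(l)}$, and the same empirical-measure sequence $\Gamma^{(l)}(\omega, t) = \delta_{S_n(\omega)} \otimes \delta_{U_n^{(l)}(\omega)}$ in the compact metric space $(\mathcal{M}(\mathcal{S}\times U), \tau_{\mathcal{S}\times U})$. Lemmas \ref{addnois} and \ref{eqcont} depend only on $(A3)$--$(A5)$ and the parametrization bounds, so they -- together with relative compactness and uniform continuity of $\bar x(\cdot)$ -- transfer verbatim. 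Moreover, the proof of Proposition \ref{main1}(i) uses only the drift reformulation and the topology on $\mathcal{M}(\mathcal{S}\times U)$, so it transcribes unchanged: every limit point $x^*$ of $\{\bar x(\cdot + t)\}_{t \ge 0}$ satisfies $x^*(t) = x^*(0) + \int_0^t\int h^{(l)}(x^*(q), s, u) \tilde\gamma^{(l)}(q)(ds, du)\,dq$ for some $\tilde\gamma^{(l)} \in \mathcal{M}(\mathcal{S}\times U)$.

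The substantive modification is in the analogue of Proposition \ref{main1}(ii), where one must show $\Lambda(\tilde\gamma^{(l)})(t) \in D_\phi(x^*(t))$ for a.e.\ $t$. Here $(A2)''$ enters via two nested tower steps: with $\mathscr{F}_n := \sigma(X_k, S_k : k \le n)$, first integrating out $Z_n$ via $(A2)''(ii)$ and then using $(A2)''(i)$ yields
\begin{equation*}
\mathbb{E}[f(S_{n+1}) \mid \mathscr{F}_n] = \int_{\mathcal{Z}} \int_{\mathcal{S}} f(s') \Pi(X_n, z, S_n)(ds')\, \phi(S_n)(dz) = \int_{\mathcal{S}} f(s') \Pi_\phi(X_n, S_n)(ds').
\end{equation*}
Consequently $\zeta_n^i := \sum_{k=0}^{n-1} a(k)[f_i(S_{k+1}) - \int f_i(s') \Pi_\phi(X_k, S_k)(ds')]$ is a zero-mean square-integrable $\{\mathscr{F}_n\}$-martingale, and the full chain of arguments in Proposition \ref{main1}(ii) -- martingale convergence controlled by $\sum a(n)^2 < \infty$, telescoping exploiting non-increasingness of $\{a(n)\}$, passage to the limit in $\mathcal{M}(\mathcal{S})$ via continuity of $\Lambda$, Lebesgue differentiation -- applies mutatis mutandis with $\Pi_\phi$ replacing $\Pi$ and $D_\phi$ replacing $D$. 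Combined with Lemma \ref{chint} (which is purely about $h^{(l)}$ and the $\mathcal{S}$-marginal, hence unaffected) and the $\hat H_\phi$-analogue of Lemma \ref{approx1}(iii), this gives $\dot x^*(t) \in \hat H_\phi(x^*(t))$ a.e.\ $t$, and the APT conclusion then follows from Theorem 4.1 of \cite{benaim1} exactly as for Theorem \ref{apt}.

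I expect the main obstacle to be the equicontinuity step in the original proof where joint uniform continuity of $(x, s) \mapsto f_i(s) - \int f_i(s') \Pi(x, s)(ds')$ on $2rU \times \mathcal{S}$ is invoked to pass to the limit along $t_n \to \infty$. Since $\phi$ is only measurable, $\Pi_\phi(x, \cdot)$ need not be continuous in $s$, so the corresponding map through $\Pi_\phi$ is generally not jointly uniformly continuous and the original estimate breaks. The remedy is to observe that what is actually used is only equicontinuity in $x$ uniformly in $s$ of the family $\{s \mapsto \int f_i(s') \Pi_\phi(\cdot, s)(ds')\}$: by joint continuity of $\Pi(x, z, s)$ on the compact set $2rU \times \mathcal{Z} \times \mathcal{S}$, this map is uniformly continuous there, and a dominated convergence argument against $\phi(s)(dz)$ delivers the required $s$-uniform $x$-equicontinuity of $\int f_i(s') \Pi_\phi(\cdot, s)(ds')$. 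With this refinement the limiting identification of $\tilde\mu^{(l)}(t) \in D_\phi(x^*(t))$ a.e.\ proceeds as before, completing the proof.
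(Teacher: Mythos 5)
Your overall route is exactly the paper's: the paper gives no proof of this theorem beyond the remark that $\hat H_\phi$ is a Marchaud map (by a proof similar to Lemma \ref{march2}) and that ``a similar set of arguments as in section \ref{recanal}'' applies. Your transcription of Lemmas \ref{addnois}, \ref{eqcont} and Proposition \ref{main1}(i), together with the nested tower-property computation under $(A2)''$ that makes $\zeta_n^i$ a zero-mean square-integrable martingale with compensator $\int_{\mathcal S} f_i(s')\Pi_\phi(X_k,S_k)(ds')$, is correct and supplies detail the paper omits. The obstacle you flag --- that $\Pi_\phi(x,s)(\cdot)=\int_{\mathcal Z}\Pi(x,z,s)(\cdot)\phi(s)(dz)$ need not be continuous in $s$ because $\phi$ is only assumed measurable --- is genuine.

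However, your remedy closes only one of the two places where that continuity is used, and the one it leaves open is the critical one. The $s$-uniform $x$-equicontinuity you derive from joint continuity of $\Pi$ on $2rU\times\mathcal Z\times\mathcal S$ does justify replacing $x_{[t_n+q]}$ by $\bar x(t_n+q)$ and then $\bar x(t_n+q)$ by $x^*(q)$ inside $\int_{\mathcal S} f_i(s')\Pi_\phi(\cdot,s)(ds')$. But the next step of the argument sends $\int_0^T\int_{\mathcal S}\bigl[f_i(s)-\int_{\mathcal S}f_i(s')\Pi_\phi(x^*(q),s)(ds')\bigr]\mu^{(l)}(q+t_n)(ds)\,dq$ to the same expression with $\tilde\mu^{(l)}(q)$ in place of $\mu^{(l)}(q+t_n)$, using the convergence $\mu^{(l)}(\cdot+t_n)\to\tilde\mu^{(l)}(\cdot)$ in $(\mathcal M(\mathcal S),\tau_{\mathcal S})$. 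That topology only controls integrals against test functions continuous in $s$ (this is precisely where Stone--Weierstrass enters in the proof of Proposition \ref{main1}(i)); for a bounded integrand merely measurable in $s$ the convergence does not follow, just as weak convergence of probability measures says nothing about integrals of discontinuous functions. Your dominated-convergence observation does not make $s\mapsto\int_{\mathcal S}f_i(s')\Pi_\phi(x^*(q),s)(ds')$ continuous, so the identification $\tilde\mu^{(l)}(t)\in D_\phi(x^*(t))$ for a.e.\ $t$ is not established. To close this you either need $\phi$ continuous (then $\Pi_\phi$ is jointly continuous and the proof of Proposition \ref{main1}(ii) transcribes verbatim), or you must carry the control in the empirical measures, i.e.\ work with $\delta_{S_n}\otimes\delta_{Z_n}\otimes\delta_{U_n^{(l)}}$ in $\mathcal M(\mathcal S\times\mathcal Z\times U)$ so that every test function is built from the jointly continuous kernel $\Pi(x,z,s)$ --- noting that this second route then requires a further argument (itself sensitive to the mere measurability of $\phi$) to show that the $\mathcal S\times\mathcal Z$-marginal of the limit disintegrates as $\nu_{\mathcal S}(ds)\phi(s)(dz)$ rather than as some other stationary randomized control.
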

As a consequence of the above theorem, a limit set theorem similar to Theorem \ref{ls} can be established which under assumptions $(A1),(A2)'',
(A3)-(A5)$ characterizes the limit set of recursion \eqref{rec} in terms of the dynamics of DI\eqref{di4}.

\section{Applications}
\label{app}
In this section we consider four applications of the foregoing. These applications presented are natural extensions of the ones presented 
in chapter 5.3 of \cite{borkartxt} to the case with Markov noise. All the applications presented in this section are stated with Markov noise 
terms $\left\{S_n\right\}_{n\geq0}$ satisfying assumption $(A2)$. One may also consider the applications presented here with the Markov noise 
terms satisfying assumptions $(A2)'$ or $(A2)''$ as well.

\subsection{Controlled stochastic approximation}
Consider the iteration 
\begin{equation}\label{reccsa}
 X_{n+1}-X_n-a(n)M_{n+1}=a(n)h(X_n,Z_n,S_n),
\end{equation}
where, $\{Z_n\}_{n\geq0}$ is a random sequence taking values in a compact metric space $\mathcal{Z}$ and $h:\mathbb{R}^d\times\mathcal{Z}
\times\mathcal{S}\rightarrow\mathbb{R}^d$ is continuous and Lipschitz in the first argument uniformly w.r.t. the second and third, i.e., 
there exists $L>0$ such that, for every $(z,s)\in\mathcal{Z}\times\mathcal{S}$ and for every $x_1,x_2\in\mathbb{R}^d$, 
\begin{equation*}
 \parallel h(x_1,z,s)-h(x_2,z,s)\parallel\leq L\parallel x_1-x_2\parallel.
\end{equation*}
$\{Z_n\}_{n\geq0}$ is viewed here as a control sequence, that is, $Z_n$ is chosen by the agent running the algorithm at time $n\geq0$ based on 
the observed history and possibly extraneous independent randomization as is usual in stochastic control problems. It could also be an 
unknown random process that affects the measurements in addition to the additive noise $\{M_n\}_{n\geq0}$. We shall assume that 
assumptions $(A2)-(A5)$ hold. 

Define the set valued map $H:\mathbb{R}^d\times\mathcal{S}\rightarrow\{\text{subsets of }\mathbb{R}^d\}$ such that, for every $(x,s)\in
\mathbb{R}^d\times\mathcal{S}$, 
\begin{equation*}
 H(x,s):=\bar{co}(\left\{h(x,z,s):z\in\mathcal{Z}\right\}).
\end{equation*}
Then it can be shown that the set valued map $H$ satisfies assumption $(A1)$ (see Lemma 5 in chapter 5.3 of \cite{borkartxt}). Then 
recursion \eqref{reccsa} can be written as,
\begin{equation*}
 X_{n+1}-X_n-a(n)M_{n+1}\in a(n)H(X_n,S_n)
\end{equation*}
and the asymptotic behavior of the same can be analyzed as in section \ref{recanal} of this paper.

\subsection{Stochastic subgradient descent of a parametrized convex function}
Consider the function, $J:\mathbb{R}^d\times\mathcal{S}\rightarrow\mathbb{R}$, continuous and for every $s\in\mathcal{S}$, $J(\cdot,s)$ 
is a convex function. Suppose $J(\cdot,s)$ is not continuously differentiable everywhere, so that $\nabla J(x,s)$ is not defined at all 
$x\in\mathbb{R}^d$, where $\nabla$ denotes gradient w.r.t. $x$. A natural generalization of gradient to this non-smooth case is the notion 
of subdifferential w.r.t. $x$ denoted by $\partial J(x,s)$, defined as the set of all $y\in\mathbb{R}^d$ such that for every 
$z\in\mathbb{R}^d$, 
\begin{equation*}
 J(z,s)\geq J(x,s)+\langle y,z-x\rangle.
\end{equation*}
Then for every $(x,s)\in\mathbb{R}^d\times\mathcal{S}$, $\partial J(x,s)$ is nonempty, convex and compact (see Proposition 5.4.1 in 
\cite{bertsekas}). From the definition of subdifferential and the continuity of function $J(\cdot)$, it can be easily shown that the set 
valued map $(x,s)\rightarrow\partial J(x,s)$ has a closed graph.

Define the set valued map $H:\mathbb{R}^d\times\mathcal{S}\rightarrow\{\text{subsets of }\mathbb{R}^d\}$, such that for every 
$(x,s)\in\mathbb{R}^d\times\mathcal{S}$, 
\begin{equation}\label{sbgrd}
H(x,s)=-\partial J(x,s). 
\end{equation}
We further assume that the set valued map $H$ satisfies the linear growth property (that is assumption $(A1)(ii)$). Then by arguments in 
the preceding paragraph the set valued map $H$ satisfies assumption $(A1)$. 

Consider the recursion \eqref{rec} with the set valued map defined above. The recursion, 
\begin{equation*}
 X_{n+1}-X_{n}-a(n)M_{n+1}\in a(n)H(X_n,S_n),
\end{equation*}
where $\{M_{n}\}_{n\geq1}$ can be interpreted as subgradient estimation error. One case of interest where assumption $(A4)$ associated 
with the additive noise terms is satisfied is when $M_n$ are i.i.d. zero mean with finite variance (more generally martingale difference 
terms satisfying assumption $(A3)$ in chapter 2.1 of \cite{borkartxt}).

Suppose the assumptions $(A2)-(A5)$ are satisfied then the asymptotic behavior of the above recursion can be analyzed as 
in section \ref{recanal} of this paper. If in addition to assumption $(A2)$, we know that for every $x\in\mathbb{R}^d$, 
$|D(x)|=1$ (that is the stationary distribution is unique for every $x$), then the set valued map associated with the limiting DI, 
$\hat{H}$ as defined in \eqref{lim1} is such that for every $x\in\mathbb{R}^d$, 
\begin{equation*}
\hat{H}(x)\subseteq-\partial\mathbb{E}_{s}[J(\cdot);\mu_x](x), 
\end{equation*}
where $\{\mu_x\}=D(x)$, $\mathbb{E}_{s}[J(\cdot);\mu_x]$ denotes a real valued map on $\mathbb{R}^d$ such that, for every 
$y\in\mathbb{R}^d$, $\mathbb{E}_{s}[J(\cdot);\mu_x](y)=\int_{\mathcal{S}}J(y,s)\mu_x(ds)$ (which is clearly a convex function) and 
$\partial\mathbb{E}_{s}[J(\cdot);\mu_x](x)$ denotes the subdifferential of the above defined map at $x$. If in addition we have that the 
state space of the Markov noise is finite (that is $|\mathcal{S}|<\infty$) then by Proposition 5.4.6 in \cite{bertsekas} we have that for 
every $x\in\mathbb{R}^d$, 
\begin{equation*}
\hat{H}(x)=-\partial\mathbb{E}_{s}[J(\cdot);\mu_x](x).
\end{equation*}
As a consequence of the above we have that the global attractor of the DI,
\begin{equation*}
 \frac{dx}{dt}\in-\partial\mathbb{E}_{s}[J(\cdot);\mu_x](x),
\end{equation*}
is also a global attractor of the limiting DI (that is DI \eqref{di2})and hence by the limit set theorem (that is Theorem \ref{ls}), we 
can conclude that the iterates $\{X_n\}_{n\geq0}$ will almost surely converge to such an attractor.

Another important application is the minimization of the convex function obtained by averaging the parameter w.r.t. a particular 
probability distribution say, $\mu\in\mathcal{P}(\mathcal{S})$. Formally the minimization of function $J_{\mu}:\mathbb{R}^d\rightarrow
\mathbb{R}$, where for every $x\in\mathbb{R}^d$, $J_{\mu}(x):=\int_{\mathcal{S}}J(x,s)\mu(ds)$. Consider recursion \eqref{rec}, where the 
set-valued map $H$ is as in equation \eqref{sbgrd} and Markov noise terms (that is $\{S_n\}_{n\geq0}$) satisfy assumption $(A2)$ 
with the additional condition that for every $x\in\mathbb{R}^d$, $D(x)=\{\mu\}/D_{\phi}(x)=\{\mu\}$. Then under additional assumptions of 
$(A1),(A3)-(A5)$ we have that the linearly interpolated trajectory of recursion \eqref{rec} is an APT for the flow of DI \eqref{di2}. The set 
valued map $\hat{H}$ associated with DI \eqref{di2}, can be shown to satisfy,
\begin{equation*}
 \hat{H}(x)\subseteq-\partial J_{\mu}(x),
\end{equation*}
for every $x\in\mathbb{R}^d$. Further if the state space of the Markov noise is finite (that is $|\mathcal{S}|<\infty$) then by 
Proposition 5.4.6 in \cite{bertsekas} we have that for every $x\in\mathbb{R}^d$,
\begin{equation*}
 \hat{H}(x)=-\partial J_{\mu}(x).
\end{equation*}

\subsection{Approximate drift problem}
Consider the continuous function $h:\mathbb{R}^d\times\mathcal{S}\rightarrow\mathbb{R}^d$, satisfying the linear growth property, that is, 
there exists $K>0$, such that for every $(x,s)\in\mathbb{R}^d\times\mathcal{S}$, $\parallel h(x,s)\parallel\leq 
K(1+\parallel x\parallel)$. In many applications one wishes to implement the recursion,
\begin{equation*}
 X_{n+1}-X_{n}-a(n)M_{n+1}=a(n)h(X_n,S_n),
\end{equation*}
where assumptions $(A2)/(A2)'/(A2)'',(A3)-(A5)$ are assumed to hold. But in practice one has access only to approximate value of the 
function $h(\cdot)$. Further the magnitude of the error in approximation of $h(\cdot)$ is only known to be bounded by a constant, say 
$\epsilon>0$. Then the recursion implemented becomes,
\begin{equation}\label{recapd}
 X_{n+1}-X_{n}-a(n)M_{n+1}=a(n)(h(X_n,S_n)+\eta_n),
\end{equation}
where $\eta_n$ represents the error in estimation of $h(\cdot)$ and we have that for every $n\geq0$, $\parallel\eta_n\parallel\leq
\epsilon$.

Define the set valued map $H:\mathbb{R}^d\times\mathcal{S}\rightarrow\{\text{subsets of }\mathbb{R}^d\}$ such that for every $(x,s)\in
\mathbb{R}^d\times\mathcal{S}$, 
\begin{equation}\label{hapd}
 H(x,s):=h(x,s)+\epsilon U,
\end{equation}
where $U$ denotes the closed unit ball in $\mathbb{R}^d$. Then recursion \eqref{recapd} can be written as,
\begin{equation*}
 X_{n+1}-X_{n}-a(n)M_{n+1}\in a(n)H(X_n,S_n),
\end{equation*}
where the set valued map $H$ is as defined in equation \eqref{hapd}. It can be easily shown that the set valued map $H$ satisfies 
assumption $(A1)$. Under assumptions $(A2)-(A5)$ one may analyze the asymptotic behavior of recursion \eqref{recapd} as 
in section \ref{recanal} of this paper.

\subsection{Discontinuous dynamics}
Consider the recursion,
\begin{equation}\label{recdd}
 X_{n+1}-X_{n}-a(n)M_{n+1}=a(n)h(X_n,S_n), 
\end{equation}
where $h(\cdot)$ is merely measurable, satisfying the linear growth property that is there exists $K>0$, such that for every 
$(x,s)\in\mathbb{R}^d\times\mathcal{S}$, $\parallel h(x,s)\parallel\leq K(1+\parallel x\parallel)$. 

Define the set valued map $H:\mathbb{R}^d\times\mathcal{S}\rightarrow\{\text{subsets of }\mathbb{R}^d\}$ such that for every $(x,s)\in
\mathbb{R}^d\times\mathcal{S}$,
\begin{equation}\label{hdd}
 H(x,s):=\cap_{\epsilon>0}\bar{co}\left(\left\{h(y,s'):\parallel x-y\parallel<\epsilon,\ d_{\mathcal{S}}(s,s')<\epsilon\right\}\right).
\end{equation}
Then the set valued map defined above can be shown to satisfy assumption $(A1)$. Now the recursion \eqref{recdd} can be written in the form 
of recursion \eqref{rec} with the set valued map $H$ as defined in equation \eqref{hdd}. Under assumptions $(A2)-(A5)$, 
one may analyze the asymptotic behavior of recursion \eqref{recdd} as in section \ref{recanal} of this paper. 
\section{Conclusions and directions for future work}
\label{cadffw}
We have shown that almost surely the linearly interpolated trajectory of recursion \eqref{rec} is an asymptotic pseudotrajectory for the flow 
of DI \eqref{di2}. The asymptotic pseudotrajectory result proved in this paper enables one to characterize limit sets of recursion \eqref{rec} 
as internally chain transitive sets of the flow of DI \eqref{di2} which in-turn enables us to guarantee convergence of the iterates to 
attractors of DI \eqref{di2} as stated in Theorem \ref{ls}. We have also stated two variants of the Markov noise assumption where the analysis 
is very similar to the one presented. Finally applications are presented where the recursion studied in this paper 
naturally appears and in the case of subgradient descent we have also been able to interpret the set-valued map associated with DI \eqref{di2} as 
subdifferential of a certain averaged function. 

Certain extensions and applications which we wish to consider in future are listed below.
\begin{itemize}
 \item [(1)] Stochastic approximation schemes on multiple time scales are extensively studied in literature and find use in several 
 reinforcement learning and optimization applications. For the case without Markov noise, the two time scale stochastic approximation with 
 single valued maps and the case of set-valued maps have already been studied (see chapter 6 in \cite{borkartxt} for the single valued case 
 and \cite{leslep} for the set-valued case). One can extend the analysis presented in this paper to obtain a similar result for two time scale 
 stochastic recursive inclusions with Markov noise.
 \item [(2)] Throughout this paper we have assumed that the state space of the Markov noise is a compact metric space. In many practical 
 scenarios, the Markov noise terms take values in a general Polish space. For such a case we believe that the analysis presented in the paper 
 can be extended under certain additional assumptions. The idea would be to embed the polish space as a dense subset of a compact metric space 
 and under some additional assumptions which guarantee an upper semicontinuous extension of set-valued map $H$ and tightness of some 
 probability measures one might be able to provide an APT argument as in this paper (see \cite{borkarmark}).
 \item [(3)] Assumption $(A5)$, which ensures that the iterates are stable usually is hard to verify. The Borkar-Meyn theorem (see 
 \cite{borkarmeyn}) which provides a sufficient condition for stability of iterates for stochastic approximation schemes with single valued 
 maps has been extended to the set-valued case without Markov noise (see \cite{arunstab}) and we believe that a similar extension is possible 
 to the set-valued case with Markov noise. Stability results similar to the one in \cite{andvih} also need to be investigated for possible 
 extension to the set-valued case.
 \item [(4)] In many applications one encounters the recursion,
 \begin{equation*}
  X_{n+1}\in \text{\bf{P}\rm}\left[X_{n}+a(n)(H(X_n,S_n)+M_{n+1})\right],
 \end{equation*}
where \bf{P }\rm denotes the projection of the set $X_{n}+a(n)(H(X_n,S_n)+M_{n+1})$ onto a compact (possibly convex) subset of $\mathbb{R}^d$. Such 
recursions are of interest also because the assumption of stability (that is $(A5)$) is naturally satisfied. For the single valued case such 
recursions are studied in \cite{dupuis} and the extension of the same to the recursion above needs to be established.
\item [(5)] Another variant of Markov noise assumptions studied for the single valued case is given in \cite{matti}. These set of assumptions 
enable one to study behavior of stochastic approximation schemes in the single valued case when the state space is a general Euclidean space. 
Since in this paper we study the set-valued case by converting it to the single valued setting it would be interesting to see if those 
assumptions can be used for the set-valued case as in this paper.
\end{itemize}

\appendix
\section{Proof of the continuous embedding lemma (Lemma \ref{ctem})}
\label{prf_ctem}
The proof of this lemma is similar to the proof of Theorem 1, chapter 1, section 13 of \cite{aubindi}. We shall provide a brief outline here 
for the sake of completeness.

For any $\epsilon>0$, for every $(x_0,s_0)\in\mathbb{R}^d\times\mathcal{S}$, let 
$B(\epsilon,x_0,s_0):=\left\{(x,s):\parallel x-x_0\parallel<\epsilon,\ d_{\mathcal{S}}(s,s_0)<\epsilon\right\}$. Let 
$\left\{\epsilon_l:=\frac{1}{3^l}\right\}_{l\geq1}$. Then for every $l\geq1$, $\mathscr{C}_l:=\left\{B(\epsilon_l,x_0,s_0):(x_0,s_0)\in
\mathbb{R}^d\times\mathcal{S}\right\}$ is an open covering of $\mathbb{R}^d\times\mathcal{S}$. Since $\mathbb{R}^d\times\mathcal{S}$ is a 
metric space, it is paracompact (see Defn. 4 and Theorem 1 in chapter 0, section 1 of \cite{aubindi}). Therefore for every $l\geq1$, there 
exists a locally finite open refinement of the covering $\mathscr{C}_l$ and let it be denoted by $\tilde{\mathscr{C}}_l:=
\left\{C_i^l\right\}_{i\in I^{l}}$ where $I^l$ is an arbitrary index set. By Theorem 2, chapter 0, section 1 of \cite{aubindi}, there exists a 
continuous partition of unity, $\left\{\psi_i^{l}\right\}_{i\in I^l}$, subordinated to the covering $\tilde{\mathscr{C}}_l$. Therefore, 
for every $l\geq1$, for every $i\in I^l$, there exists $(x_i^l,s_i^l)$, such that $support(\psi_i^l)\subseteq C_i^l\subseteq 
B(\epsilon_l,x_i^l,s_i^l)$. For every $l\geq1$, for every $(x,s)$, let $I^l(x,s):=\left\{i\in I^l:\psi_i^l(x,s)>0\right\}$ and by definition 
of $\psi_i^l$, we have that $|I^l(x,s)|<\infty$ and $\sum_{i\in I^l(x,s)}\psi_i^l(x,s)=1$.

For every $l\geq1$, define the set valued map $H^{(l)}:\mathbb{R}^d\times\mathcal{S}\rightarrow\left\{\text{subsets of }\mathbb{R}^d\right\}$, 
such that for every $(x,s)$, $H^{(l)}(x,s):=\sum_{i\in I^l(x,s)}\psi_i^{l}(x,s)A_i^{l}$, where $A_i^{l}:=\bar{co}\left(H\left(B\left(2\epsilon_l,
x_i^l,s_i^l\right)\right)\right)$.

Fix $l\geq1$. For every $i\in I^l$, by assumption $(A1)(ii)$, $\sup_{z\in H\left(B\left(2\epsilon_l,x_i^l,s_i^l\right)\right)}\parallel z
\parallel\leq K(1+\parallel x_i^l\parallel +2\epsilon_l)$ from which we can deduce that $\sup_{z\in A_i^l}\parallel z\parallel \leq 
K(1+\parallel x_i^l\parallel+2\epsilon_l)$. Hence $A_i^l$ is convex and compact. Therefore $H^{(l)}(x,s)$ is a convex combination of compact 
and convex subsets of $\mathbb{R}^d$ and hence is \bf{convex and compact}\rm.

Fix $l\geq1$ and $(x,s)$. Then for every $i\in I^l(x,s)$, $(x,s)\in support(\psi_i^l)\subseteq C_i^l\subseteq B(\epsilon_l,x_i^l,s_i^l)\subseteq
B(2\epsilon_l,x_i^l,s_i^l)$. Therefore for every $i\in I^l(x,s)$, $H(x,s)\subseteq H\left(B\left(2\epsilon_l,x_i^l,s_i^l\right)\right)\subseteq 
A_i^l$ and since $H(x,s)$ is convex we have that $H(x,s)=\sum_{i\in I^l(x,s)}\psi_i^l(x,s)H(x,s)\subseteq\sum_{i\in I^l(x,s)}\psi_i^l(x,s)A_i^l=
H^{(l)}(x,s)$. Thus $\bf{H(x,s)\subseteq H^{(l)}(x,s)}\rm$.

Fix $l\geq1$ and $(x,s)$. Then for every $i\in I^{l+1}(x,s)$, $\parallel x-x_i^{l+1}\parallel<\epsilon_{l+1}$ and $d_{\mathcal{S}}(s,s_i^{l+1})
<\epsilon_{l+1}$. Similarly for every $j\in I^{l}(x,s)$, $\parallel x-x_j^{l}\parallel<\epsilon_{l}$ and $d_{\mathcal{S}}(s,s_j^{l})
<\epsilon_{l}$. Therefore for every $i\in I^{l+1}(x,s)$, for every $j\in I^l(x,s)$, $\parallel x_i^l-x_i^{l+1}\parallel<\epsilon_l+\epsilon_{l+1}=
\epsilon_l+\frac{\epsilon_l}{3}=\frac{4\epsilon_l}{3}$ and $d_{\mathcal{S}}(s_i^l,s_i^{l+1})<\frac{4\epsilon_l}{3}$. For every $i\in 
I^{l+1}(x,s)$, for every $j\in I^{l}(x,s)$, for every $(x',s')\in B(2\epsilon_{l+1},x_i^{l+1},s_i^{l+1})$, $\parallel x'-x_j^{l}\parallel\leq
\parallel x'-x_i^{l+1}\parallel+\parallel x_i^{l+1}-x_j^l\parallel<2\epsilon_{l+1}+\frac{4\epsilon_l}{3}=\frac{2\epsilon_l}{3}+
\frac{4\epsilon_l}{3}=2\epsilon_l$ and $d_{\mathcal{S}}(s',s_j^{l})<2\epsilon_l$. Thus for every $i\in I^{l+1}(x,s)$, for every 
$j\in I^{l}(x,s)$, for every $(x',s')\in B(2\epsilon_{l+1},x_i^{l+1},s_i^{l+1})$, $H(x',s')\subseteq H\left(B\left(2\epsilon_l,
x_j^l,s_j^l\right)\right)\subseteq A_j^l$. Therefore for every $i\in I^{l+1}(x,s)$, for every $j\in I^{l}(x,s)$, $A_i^{l+1}\subseteq A_j^l$ 
and by using convexity of $A_i^{l+1}$ we get that for every $i\in I^{l+1}(x,s)$, $A_i^{l+1}=\sum_{j\in I^{l}(x,s)}\psi_j^{l}(x,s)A_i^{l+1}\subseteq
\sum_{j\in I^{l}(x,s)}\psi_j^{l}(x,s)A_j^{l}=H^{(l)}(x,s)$. By using convexity of $H^{(l)}(x,s)$ we get that 
$H^{(l+1)}(x,s)=\sum_{i\in I^{l+1}(x,s)}\psi_i^{l+1}(x,s)A_i^{l+1}\subseteq\sum_{i\in I^{l+1}(x,s)}\psi_i^{l+1}(x,s)H^{(l)}(x,s)=H^{(l)}(x,s)$. 
Therefore, $\bf{H^{(l+1)}(x,s)\subseteq H^{(l)}(x,s)}\rm$.

Fix $(x,s)$. Clearly $H(x,s)\subseteq \cap_{l\geq1}H^{(l)}(x,s)$. By u.s.c. of $H$, we have that for every $\epsilon>0$, there exists 
$\delta(\epsilon,x,s)>0$ such that for every $(x',s')$ satisfying $\parallel x'-x\parallel<\delta(\epsilon,x,s)$ and $d_{\mathcal{S}}(s',s)<
\delta(\epsilon,x,s)$ we have $H(x',s')\subseteq H(x,s)+\epsilon U$. Clearly we can find $L$ such that for every $l\geq L$, $3\epsilon_l<
\delta(\epsilon,x,s)$. Therefore for every $\epsilon>0$, for $L$ as above, we have that for every $l\geq L$, for every $i\in I^l(x,s)$, 
$H\left(B\left(2\epsilon_l,x_i^l,s_i^l\right)\right)\subseteq H(x,s)+\epsilon U$. Since $H(x,s)+\epsilon U$ is convex we get, for every 
$\epsilon>0$, there exists $L$ such that for every $l\geq L$, for every $i\in I^l(x,s)$, $A_i^l\subseteq H(x,s)+\epsilon U$. Again by convexity 
of $H(x,s)+\epsilon U$, we have that for every $\epsilon>0$, there exists $L$ such that $l\geq L$, $H^{(l)}(x,s)=\sum_{i\in I^l(x,s)}
\psi^l_i(x,s)A_i^l\subseteq H(x,s)+\epsilon U$. Therefore for every $\epsilon>0$, $\cap_{l\geq1}H^{(l)}(x,s)\subseteq H(x,s)+\epsilon U$. Now
it follows that $\bf{H(x,s)=\cap_{l\geq1}H^{(l)}(x,s)}\rm$.

Fix $l\geq1$ and $(x,s)$. We have already shown that for every $i\in I^l(x,s)$, $\sup_{z\in A_i^l}\parallel z\parallel\leq K(1+\parallel x_i^l
\parallel+2\epsilon_l)$. Using the fact that for every $i\in I^l(x,s)$, $\parallel x_i^l-x\parallel<\epsilon_l$ we get that for every $i\in 
I^l(x,s)$, $\sup_{z\in A_i^l}\parallel z\parallel\leq K(1+\parallel x\parallel+3\epsilon_l)$. Since $\left\{\psi_i^l(x,s)\right\}_{i\in 
I^l(x,s)}$ are convex combining coefficients, we get that $\bf{\sup_{z\in H^{(l)}(x,s)}\parallel z\parallel\leq K^{(l)}(1+\parallel x
\parallel)}\rm$, where $K^{(l)}:=K+3\epsilon_l$.

Fix $l\geq1$ and $(x,s)$. Since $\tilde{\mathscr{C}}_l$ is a locally finite open covering of $\mathbb{R}^d\times\mathcal{S}$, there exists 
$\delta>0$ such that $|I^l(\delta,x,s)|<\infty$ where $I^l(\delta,x,s):=\left\{i\in I^l:C_i^l\cap B(\delta,x,s)\neq\emptyset\right\}$. Since 
for every $i\in I^l$, $support(\psi_i^l)\subseteq C_i^l$, for every $(x',s')\in B(\delta,x,s)$ we have that $H^{(l)}(x',s')=\sum_{i\in 
I^l(\delta,x,s)}\psi_i^l(x',s')A_i^l$. Let $z\in H^{(l)}(x,s)$ and $\left\{\left(x_n,s_n\right)\right\}_{n\geq1}$ be some sequence converging 
to $(x,s)$. Then for every $i\in I^l(\delta,x,s)$ there exists $a_i^l\in A_i^l$ such that $z=\sum_{i\in I^l(\delta,x,s)}\psi_i^l(x,s)a_i^l$ 
and there exists $N$, such that for every $n\geq N$, $(x_n,s_n)\in B(\delta,x,s)$. For every $n\leq N$, define $z_n$ to be any value in 
$H^{(l)}(x_n,s_n)$ and for every $n\geq N$, define $z_n=\sum_{i\in I^l(\delta,x,s)}\psi_i^l(x_n,s_n)a_i^l\in H^{(l)}(x_n,s_n)$ where $a_i^l$ 
are as defined above. Then clearly for every $n\geq1$, $z_n\in H^{(l)}(x_n,s_n)$ and $\left\{z_n\right\}_{n\geq1}$ converges to $z$. Thus 
$H^{(l)}$ is l.s.c. at $(x,s)$. Upper simecontinuity of the map $H^{(l)}$ at $(x,s)$ follows from Theorem 1, chapter 1, section 13 of 
\cite{aubindi}. Since $(x,s)$ is arbitrary the map $H^{(l)}$ is both u.s.c. and l.s.c at every $(x,s)$ and hence is \bf{continuous}\rm. 

\section{Measurability of set-valued map $G$ in Lemma \ref{chint}}
\label{prf_msb}
Recall that $G:\mathcal{S}\rightarrow\left\{\text{subsets of }U\right\}$ is such that for every $s\in\mathcal{S}$, $G(s):=\left\{u\in U:
h^{(l)}(x,s,u)=f(s)\right\}$. Let $C\subseteq U$ be a closed subset. Then $G^{-1}(C)=\left\{s\in\mathcal{S}:G(s)\cap C\neq\emptyset\right\}$.
Thus $s\in G^{-1}(C)$ if and only if there exists $u\in C$, such that $h^{(l)}(x,s,u)=f(s)$ or equivalently $s\in G^{-1}(C)$ if and only if 
$f(s)\in h^{(l)}(x,s,C)$, where $h^{(l)}(x,s,C):=\left\{h^{(l)}(x,s,u):u\in C\right\}$. Since $h_x^{(l)}:\mathcal{S}\times C\rightarrow
\mathbb{R}^d$ is continuous, by Proposition 1, chapter 1, section 2 of \cite{aubindi}, the set valued map $s\rightarrow h_x^{(l)}(s,C)
$ is continuous and its measurability now follows from arguments similar to Lemma \ref{msble}(ii). Therefore, $G^{-1}(C)=\left\{s\in\mathcal{S}
:f(s)\in h^{(l)}(x,s,C)\right\}=\left\{s\in\mathcal{S}:d(f(s),h^{(l)}(x,s,C))=0\right\}$. By Lemma \ref{mtool}, 
$\left\{s\in\mathcal{S}:d(f(s),h^{l}(x,s,C))=0\right\}\in\mathscr{B}(\mathcal{S})$ and hence $G^{-1}(C)\in\mathscr{B}(\mathcal{S})$.

\section{Proof of uniform convergence in Lemma \ref{march1}(iii)}
\label{prf_conv}
Since $\left\{x_{n_k}\right\}_{k\geq1}$ converges to $x$, there exists $r>0$, such that $\sup_{k\geq1}\parallel x_{n_k}\parallel\leq r$. 
Fix $\epsilon>0$. We know that $h^{(l)}(\cdot)|_{2rU\times \mathcal{S}\times U}$ is uniformly continuous and hence there exists $\delta>0$ 
(depending on $\epsilon$), such that for every $x'\in\mathbb{R}^d$ satisfying $\parallel x'-x\parallel<\delta$, 
$\sup_{(s,u)\in\mathcal{S}\times U}\parallel h_x'^{(l)}(s,u)-h_x^{(l)}(s,u)\parallel<\epsilon$. Hence there exists $K$, such that for every 
$k\geq K$, $\parallel x_{n_k}-x\parallel<\delta$ and therefore for every $k\geq K$, 
$\sup_{(s,u)\in\mathcal{S}\times U}\parallel h_{x_{n_k}}^{(l)}(s,u)-h_x^{(l)}(s,u)\parallel<\epsilon$.
\bibliographystyle{IEEEtran}
\bibliography{Ref}
\end{document}